\newcounter{mycount}[section]
\newtheorem{theorem}[mycount]{Theorem}
\newtheorem{lemma}[mycount]{Lemma}
\newtheorem{proposition}[mycount]{Proposition}
\newtheorem{corollary}[mycount]{Corollary}
\newtheorem{claim}{Claim}
\newenvironment{proof}
{\vspace{0.3 em} \noindent {\bf Proof.}}{\QED {\vspace{1.2 em}}}
\def\squareforqed{$\blacksquare$}
\def\QED{\ifmmode\squareforqed\else{\unskip\nobreak\hfil
\penalty50\hskip1em\null\nobreak\hfil\squareforqed
\parfillskip = 0pt\finalhyphendemerits = 0\endgraf}\fi}
\def\AND{\wedge}
\def\OR{\vee}
\def\bbB{\mathbb{B}}
\def\bbN{\mathbb{N}}
\def\ra{\rightarrow}
\def\spec{{\rm{spec}}}
\def\stra{\mathfrak{A}}
\def\strb{\mathfrak{B}}
\def\floor #1{\lfloor #1 \rfloor}
\def\Rels{\mathcal{R}}
\def\rev #1{\overleftarrow{#1}}
\def\srev #1{\scriptsize \overleftarrow{#1}}
\def\cC{\mathcal{C}}
\def\cF{\mathcal{F}}
\def\cM{\mathcal{M}}
\def\cP{\mathcal{P}} 
\def\cS{\mathcal{S}}
\def\cT{\mathcal{T}}
\def\vc{\bar{c}}
\def\vd{\bar{d}}
\def\vt{\bar{t}}
\def\vK{\bar{K}}
\def\vL{\bar{L}}
\def\vM{\bar{M}}
\def\vN{\bar{N}}
\def\vX{\bar{X}}
\def\vY{\bar{Y}}
\def\vone{\bar{1}}
\def\cR{\mathcal{R}}
\def\cone{\textsc{co-NE}}
\def\ne{\textsc{NE}}
\def\conp{\textsc{co-NP}}
\def\np{\textsc{NP}}
\def\p{\textsc{P}}
\def\spectrum{\textsc{Spec}}
\def\id{\textsf{\bf I}}
\def\FOtwo{\mathsf{FO}^2}
\def\Ctwo{\mathsf{C}^2}
\def\ESOC{\exists\mathsf{SO}\mathsf{C}^2}
\def\MLC{\mathsf{MLC}}
\def\QMLC{\mathsf{QMLC}}
\def\parikh{{\rm\textsf{Image}}}
\def\tp{\textsf{type}}
\def\bireg{\textsf{BiREG}}
\def\gbireg{\widetilde{\textsf{BiREG}}}
\def\biregc{\textsf{COMP-BiREG}}
\def\direg{\textsf{REG}}
\def\diregc{\textsf{COMP-REG}}
\def\const{\textsf{CON}}
\def\preb{\textsf{PREB}}
\def\prebatom{\textsf{PREB-Atom}}
\def\FOthree{\mathsf{FO}^3}
\def\CtwoT{\mathsf{C^2(<)}}
\def\indeg{\textrm{in-deg}}
\def\outdeg{\textrm{out-deg}}
\def\col{\textsf{col}}
\def\ugeq{^{\blacktriangleright}\!}
\def\ugeqN{{^{\blacktriangleright}\!\bbN}}
\def\OMIT #1{}
\title{Regular graphs and the spectra of two-variable logic with counting}
\author{Eryk Kopczy\'nski\thanks{University of Warsaw, \texttt{erykk@mimuw.edu.pl}}
\and
Tony Tan\thanks{Hasselt University and Transnational University of Limburg,
\texttt{ptony.tan@gmail.com}}}
\begin{document}

\date{}

\maketitle

\begin{abstract}
The {\em spectrum} of a first-order logic sentence 
is the set of natural numbers that are cardinalities of its finite models.
In this paper we show that when restricted to using only two variables,
but allowing counting quantifiers,
the spectra of first-order logic sentences are semilinear
and hence, closed under complement.
At the heart of our proof are semilinear characterisations 
for the existence of regular and biregular graphs, 
the class of graphs in which there are a priori bounds on the degrees of the vertices.
Our proof also provides a simple characterisation of models 
of two-variable logic with counting --
that is, up to renaming and extending the relation names,
they are simply a collection of regular and biregular graphs.
\end{abstract}

\noindent
{\bf Keywords:} 
two-variable logic with counting,
first-order spectra,
regular graphs,
semi-linear,
Presburger arithmetic.

\section{Introduction}
\label{s: intro}

The spectrum of a first-order sentence $\phi$, denoted by $\spectrum(\phi)$,
is the set of natural numbers that are cardinalities of finite models of $\phi$.
Or, more formally, $\spectrum(\phi) = 
\{n \mid \mbox{there is a model of} \ \phi \ \mbox{of size} \ n\}$.
A set is a {\em spectrum},
if it is the spectrum of a first-order sentence.

In this paper we consider the logic $\Ctwo$,
the class of first-order sentences using only two variables and allowing
counting quantifiers $\exists^k z \ \phi(z)$, where $k\geq 1$.
Semantically $\exists^k z \ \phi(z)$ means
there exist at least $k$ number of $z$'s such that $\phi(z)$ holds.
We prove that the spectra of $\Ctwo$ are precisely semilinear sets.
In fact, our proof also shows that
the family of models of a $\Ctwo$ formula
can be viewed as a collection of regular graphs.

\subsection*{Related works}
The notion of spectrum was introduced by Scholz in~\cite{Scholz52}
where he also asked whether there exists a 
necessary and sufficient condition for a set to be a spectrum.
Since its publication, Scholz's question and many of its variants
have been investigated by many researchers for the past 60 years.
One of the arguably main open problems in this area is 
the one asked in~\cite{Asser55},
known as {\em Asser's conjecture},
whether the complement of a spectrum is also a spectrum.

The notion of spectrum has a deep connection with complexity theory
as shown by Jones and Selman~\cite{JS74},
as well as Fagin~\cite{Fagin73} independently
that a set of integers is a spectrum if and only if
its binary representation is in $\ne$.
Hence, Asser's conjecture is equivalent to asking whether $\ne=\cone$.
It also immediately implies that if
Asser's conjecture is false,
i.e., there is a spectrum whose complement is not a spectrum,
then $\np\neq\conp$, hence $\p\neq \np$.
The converse implication is still open.
An interesting result in~\cite{Woods81} states that
if spectra are precisely rudimentary sets,
then $\ne=\cone$ and $\np\neq\conp$.\footnote{It should be noted that
the class of rudimentary sets corresponds precisely to {\em linear time hierarchy} --
the linear time analog of polynomial time hierarchy~\cite{Wrathall78}.}
There are a number of interesting connections between spectrum
and various models of computation such as RAM as well as intrinsic computational behavior.
See, for example,~\cite{Gr84,Gr85,Gr90,Lynch82,RS55}.
We refer the reader to~\cite{DJMM12} for a more comprehensive treatment on the spectra problem
and its history.

The logic $\Ctwo$ is not the first logic known to have semilinear spectra.
A well known Parikh theorem states the spectra of context-free languages are semilinear,
and closed under complementation.
Using the celebrated composition method, Gurevich and Shelah in~\cite{GS03} showed that
the spectra of monadic second order logic with {\em one unary function} are semilinear.
Compared to the work in~\cite{GS03},
note that $\Ctwo$ one can express 
a few unary functions, hence our result does not follow from~\cite{GS03}, and
neither theirs from ours since we are restricted to using only two variables.

In~\cite{FM04} Fischer and Makowsky show that the {\em many-sorted spectra} 
of the monadic second-order logic 
with {\em modulo counting} over structures with {\em bounded tree-width}
are semilinear.
Intuitively, the many-sorted spectra of a formula
are spectra which counts the cardinality of the unary predicates in 
the models of the formula,
instead of just counting the sizes of the models.
The semilinearity is obtained by reduction to regular tree languages and ``pumping'' argument. 
This result is orthogonal to ours, 
since structures expressible in $\Ctwo$ do not have bounded tree-width.
An example is $d$-regular graphs for $d\geq 3$.
Moreover, due to unbounded tree-width, 
it is very unlikely that one can apply some sort of 
``pumping'' or automata theoretic argument 
as in~\cite{FM04} to obtain the semilinearity of $\Ctwo$ spectra.

As far as we know, $\Ctwo$ is the first logic
known to have its spectra closed under complement
without any restriction on the vocabulary nor in the interpretation.
The result closest to ours is the one by \'E.~Grandjean in~\cite{Gr90} 
where he considers the spectra of first-order sentences using only one variable.
A similar result due to M.~Grohe and stated in~\cite{DJMM12},
says that for every Turing machine $M$,
there exists a first-order sentence $\phi_M$ using only {\em three variables}
such that $\spectrum(\phi_M) = \{t^2 \mid t \ \mbox{is the length of an accepting run of} \ M\}$.

To end our study of related work, we should mention that
the two-variable logic and many of its variants have been
extensively studied, with the focus being mainly on the satisfiability problem.
For more development in this direction,
we refer the readers to~\cite{Mortimer75,Vardi96,Otto97,GOR97,GO99,Gradel01,Pratt-Hartmann05,PST00,KMPT12,ST13}
and the references therein.

\subsection*{Sketch of our proof}
Consider the following instances of structures expressible in $\Ctwo$.\footnote{Though 
the result in this paper holds for arbitrary structures, it helps
to assume that the structures of $\Ctwo$ are graphs in which the vertices and
the edges are labelled with a finite number of colours.}
\begin{enumerate}[(Ex.1)]\itemsep=0pt
\item
$(c,d)$-biregular graphs: the bipartite graphs on the vertices $U\cup V$,
where the degree of each vertex in $U$ and $V$ is $c$ and $d$, respectively.
\item
$(c,d)$-regular digraphs:
the directed graphs in which 
the in-degree and the out-degree of each vertex is $c$ and $d$, respectively. 
\end{enumerate}
An observation from basic graph theory tells us that 
for ``big enough'' $M$ and $N$,\footnote{``Big enough''
means $M$ and $N$ are greater than a constant $K$ which depends only on $c$ and $d$.}
\begin{enumerate}[(C1)]\itemsep=0pt
\item
there is a $(c,d)$-biregular graph in which $M$ vertices are of degree $c$ and
$N$ vertices of degree $d$ if and only if $Mc = Nd$;
\item
there is a $(c,d)$-regular digraph of $N$ vertices
if and only if $Nc = Nd$, and hence, $c=d$.
\end{enumerate}
These characterisations immediately imply that
the spectra of the formulas (Ex.1) and (Ex.2) above
are linear sets.
It is from these observations that we draw our inspiration
to prove the semilinearity of the spectra of $\Ctwo$.

More precisely, we show that given a $\Ctwo$ sentence $\varphi$,
one can construct a {\em Presburger} formula $\psi$
that expresses precisely the spectrum of $\varphi$.
Presburger formulas are first-order formulas 
with the relation symbols $+$ and $\leq$
and constants $0$ and $1$ interpreted over the domain $\bbN$ in the natural way.
It is shown by Ginsburg and Spanier in~\cite{GS66}
that Presburger formulas express precisely the class of semilinear sets.
That is, if $\psi(\vX)$ is a Presburger formula with free variables $\vX = (X_1,\ldots,X_k)$,
the set $\{\vN \in \bbN^{k} \mid \varphi(\vN) \ \mbox{holds}\}$ is semilinear.

The crux of our construction of the Presburger formula is a generalisation 
of the characterisations (C1)--(C2) above to the following setting.
Let $\cC$ be a set of $\ell$-colors, denoted by $\col_1,\col_2,\ldots,\col_{\ell}$,
and let $C$ and $D$ be $(\ell\times m)$- and $(\ell\times n)$-matrices 
whose entries are all non-negative integers.
We say that a bipartite graph 
$G=(U,V,E)$ is $(C,D)$-biregular, 
if we can color its edges with colors from $\cC$
such that there is a partition $U = U_1\cup \cdots \cup U_m$
and $V = V_1\cup\cdots\cup V_n$ where 
\begin{itemize}\itemsep=0pt
\item
for every integer $1\leq i \leq m$, 
for every vertex $u \in U_i$,
for every $1\leq j \leq \ell$,
the number of edges with color $\col_j$ adjacent to $u$ is {\em precisely} $C_{j,i}$; and
\item
for every integer $1\leq i \leq n$, 
for every vertex $v \in V_i$,
for every $1\leq j \leq \ell$,
the number of edges with color $\col_j$ adjacent to $v$ is {\em precisely} $D_{j,i}$.
\end{itemize}
Our setting also allows us to say that
the number of edges with color $\col_j$ adjacent to $v$ is {\em at least} $D_{j,i}$.
In Theorem~\ref{t: l-type biregular} we effectively construct a Presburger formula
that characterises the set 
$\{N \mid \mbox{there is a} \ (C,D)\mbox{-biregular graph of} \ N \ \mbox{vertices}\}$.

In a similar manner, we can define $(C,D)$-regular digraphs,
where $C$ and $D$ control the number of incoming and outgoing edges of each vertex, respectively.
Likewise, we obtain a similar Presburger formula 
that characterises the set 
$\{N \mid \mbox{there is a} \ (C,D)\mbox{-regular digraph of} \ N \ \mbox{vertices}\}$.\footnote{Closely 
related to our result
is the work by S.~L.~Hakimi~\cite{Hakimi62} which deals with the question: 
given a vector $(d_1,\ldots,d_m)$, 
is there a graph with vertices $v_1,\ldots,v_m$ 
whose degrees are precisely $d_1,\ldots,d_m$, respectively?
Another related result concerns the notion of score sequence 
obtained by H.~G.~Landau~\cite{Landau53}
which deals with the question:
given a vector $(d_1,\ldots,d_m)$, 
is there a tournament with vertices $v_1,\ldots,v_m$ 
whose outdegrees are precisely $d_1,\ldots,d_m$, respectively?
These questions are evidently different from our characterisations provided in
Section~\ref{s: regular graphs}.}
We then proceed to observe that the relations in every model of a $\Ctwo$ formula
can be partitioned in such a way that 
every part forms a $(C,D)$-regular digraph, 
and every two parts a $(C,D)$-biregular graph.
In a sense this shows
that the models of $\Ctwo$ is simply a collection of regular graphs.
Applying the Presburger formula that characterises the existence of these regular graphs,
we obtain the semilinearity of the spectra of $\Ctwo$ formulae.

For the converse direction, it is not that difficult to show that
every semilinear set is also a spectrum of a $\Ctwo$ sentence.
Since semilinear sets are closed under complement,
this establishes the fact that
the spectra of $\Ctwo$ are closed under complement.
It can also be deduced immediately from our proof that
the many-sorted spectra of $\Ctwo$ are also semilinear.
Moreover, our result extends trivially to the class $\ESOC$,
the class of sentences of the form: 
$\exists R_1\cdots \exists R_m \ \phi$,
where $R_1,\ldots,R_m$ are second-order variables and $\phi$
is a $\Ctwo$ formula.
We simply regard $R_1,\ldots,R_m$ as part of the signature.

\subsection*{Outline of the paper}
This paper is organised as follows.
In Section~\ref{s: c2} we review the logic $\Ctwo$ and state our main result:
Theorem~\ref{t: presburger for c2} which states that 
every $\Ctwo$ spectrum can be express in a Presburger formula.
The proof of Theorem~\ref{t: presburger for c2} is rather complex.
So we present its outline in Section~\ref{s: plan}, before its details in Sections~4--8.
Finally we conclude with a few observations and future directions in Section~\ref{s: conclusion}.

\section{The logic $\Ctwo$}
\label{s: c2}

In this section we review the definition of $\Ctwo$
and mention the main result in this paper
and its corollaries.
We fix 
$\cP = \{P_1,P_2,\ldots\}$ to be the set of predicate symbols of arity 1;
and
$\cR = \{R_1,R_2,\ldots\}$ the set of predicate symbols of arity 2.
Two-variable logic with counting, denoted by $\Ctwo$,
is defined by the following syntax.
\begin{eqnarray*}
\phi & ::= &
\begin{array}{c|c|c|c|c|c}
z=z & R(z,z) & P(z) & \neg \phi & \phi \AND \phi & \exists^{k} z\ \phi,
\end{array}
\end{eqnarray*}
where the variable $z$ ranges over $x,y$,
and the symbols $R$ and $P$ over $\cR$ and $\cP$, respectively.

The quantifier $\exists^k z\ \phi$ means
\emph{there are at least $k$ elements $z$} such that $\phi$ holds.
Note that $\exists^1 z\ \phi$ is the standard $\exists z\ \phi$, 
and $\forall z \ \phi$ is equivalent to $\neg \exists^1 z \ \neg \phi $.
By default, we assume that $\exists^0 z\ \phi$ always holds.

As usual, we write $\stra \models \phi$ to denote that
the structure $\stra$ is a model of $\phi$
and $\spectrum(\phi)$ to denote the spectrum of $\phi$.
Theorem~\ref{t: presburger for c2} below
is the main result in this paper.
Its proof spans over Sections~4--8.
\begin{theorem}
\label{t: presburger for c2}
For every $\phi \in \Ctwo$,
there exists a Presburger formula $\preb(x)$
such that the set $\{n \mid \preb(n) \ \mbox{holds}\} = \spectrum(\phi)$.
Moreover, the formula $\preb(x)$ can be constructed effectively.
\end{theorem}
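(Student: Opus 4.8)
The plan is to reduce the spectrum of an arbitrary $\Ctwo$ sentence $\phi$ to a question about the existence of certain colored regular and biregular graphs, and then to invoke the Presburger characterisations announced in the introduction. First I would pass to a suitable normal form. It is standard that every $\Ctwo$ sentence is equivalent (over structures, possibly after renaming and adding auxiliary relation symbols) to a sentence in Scott-style normal form, namely a conjunction of a universal statement $\forall x \forall y\, \chi(x,y)$ with no counting, together with finitely many counting requirements of the shape $\forall x\, \exists^{=k} y\, (\alpha(x,y))$ or $\forall x\, \exists^{\geq k} y\, (\alpha(x,y))$. The precise reduction will be carried out in the later sections, but the upshot I would rely on is that a model is completely described by the \emph{1-types} realised at its elements together with the \emph{2-types} realised on ordered pairs.

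The central idea is to view a model through the lens of its 2-types. I would fix the finite set of 1-types $\Types$ consistent with $\chi$, and for each pair of 1-types $(\sigma,\tau)$ regard the binary relations holding between $\sigma$-elements and $\tau$-elements as a colored bipartite graph, where a ``color'' records which 2-type (equivalently, which subset of binary relations, in each direction) is realised on an ordered pair. The counting requirements then say exactly that, from the perspective of any vertex of 1-type $\sigma$, the number of incident edges of each color $\col_j$ is prescribed to be exactly or at least some constant. This is precisely the $(C,D)$-biregularity condition when $\sigma\neq\tau$, and the $(C,D)$-regular digraph condition when $\sigma=\tau$, with the matrices $C$ and $D$ read off from the counting quantifiers. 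Thus a structure with $x_\tau$ elements of each 1-type $\tau$ is a model of $\phi$ iff one can simultaneously realise, on the disjoint union of these vertex classes, a compatible family of such colored graphs.

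Next I would assemble these local constraints into a single Presburger formula. I introduce existentially quantified Presburger variables: one variable $x_\tau$ for the number of vertices of each 1-type $\tau$, and the variables needed internally by the biregular/regular characterisations (the counts of each color used between each pair of classes). The size of the model is $\sum_\tau x_\tau$, which I would equate with the free variable of $\preb$. For each unordered pair of 1-types I conjoin the Presburger formula supplied by Theorem~\ref{t: l-type biregular} certifying the existence of the required $(C,D)$-biregular graph on $x_\sigma$ and $x_\tau$ vertices, and for each single 1-type the analogous formula for $(C,D)$-regular digraphs; I also conjoin consistency conditions stating that the 1-type $\tau$ is actually compatible with the colors chosen (so that $\chi$ holds and the diagonal $2$-types are legal). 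Since Presburger formulas are closed under conjunction and existential quantification, the result is again a Presburger formula, and by the Ginsburg--Spanier theorem it defines a semilinear set; this yields $\preb(x)$, which can be written down effectively because every ingredient is effective.

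The main obstacle I anticipate is the handling of the ``big enough'' caveat in the graph-existence characterisations (C1)--(C2), and more generally the seam between the \emph{local} degree constraints and the existence of a single \emph{global} structure. The biregular/regular characterisations are clean only above a threshold $K$ depending on the colors and degrees; below that threshold the existence of the relevant graphs must be decided by a finite case analysis, which I would fold into the Presburger formula as an explicit finite disjunction over the small values. A subtler point is that the colored graphs for different pairs of 1-types are \emph{not} independent: they share vertex classes and must be realised simultaneously on the same vertex set, so I must verify that the color-count variables introduced for each pair are mutually consistent and that the per-vertex degree budget of each element is partitioned correctly among its incident classes. Getting this bookkeeping right — so that a satisfying assignment to the Presburger variables genuinely reconstructs a model of $\phi$, and conversely every model yields such an assignment — is where the real work lies, and I expect it to occupy the bulk of Sections~4--8.
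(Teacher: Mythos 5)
Your overall architecture is the same as the paper's: normalise the sentence, read off the model as a family of colored regular digraphs (within a class) and biregular graphs (between classes), invoke the Presburger characterisations of when such graphs exist, and glue everything together with existentially quantified Presburger variables plus a finite case analysis below the ``big enough'' threshold. That is exactly how Sections~4--8 proceed (the paper normalises via $\QMLC$ over complete structures rather than Scott normal form, but this is cosmetic).

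There is, however, one concrete gap. You partition the universe by 1-types $\tau$ and assert that ``from the perspective of any vertex of 1-type $\sigma$, the number of incident edges of each color $\col_j$ is prescribed,'' so that the partition by 1-types witnesses $(C,D)$-biregularity. This is false as stated: a counting requirement $\forall x\,\exists^{=k}y\,\alpha(x,y)$ fixes only the \emph{total} number of witnesses, and two elements of the same 1-type may distribute those $k$ witnesses differently among the target 1-types (and hence among the colors, since the color of an edge depends on the 1-type of the other endpoint). So the classes indexed by 1-types alone are not degree-uniform, and Theorem~\ref{t: l-type biregular} does not apply to them. You flag the ``per-vertex degree budget'' issue at the end, but the resolution requires a specific idea you have not supplied: refine each 1-type class by a \emph{distribution profile} --- in the paper, a consistent function $f:\cT_{\phi}\times\cR\times\cT_{\phi}\to\{0,\dots,K\}\cup\{\ugeq K\}$ recording, for each element, exactly how many outgoing $R$-edges it sends to each target type --- and index the Presburger variables by pairs $(T,f)$ rather than by $T$. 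Consistency of $f$ with the counting quantifiers is then a finite syntactic check, the refined classes \emph{are} degree-uniform, and the biregular/regular machinery applies with matrices read off from the $f$'s. Without this refinement the seam you identify between local constraints and a global model cannot be closed, because edge-count variables per pair of classes do not by themselves force per-vertex uniformity.
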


We should remark that Theorem~\ref{t: presburger for c2} also holds
for arbitrary vocabulary.
Since $\Ctwo$ uses only two variables,
relations of greater arity such as $R(x,y,x,x,y)$ 
can be viewed simply as unary or binary relations;
so we can create new binary and unary relations for
each possible combination, and 
easily verify whether the result is consistent.

An immediate consequence of Theorem~\ref{t: presburger for c2}
is the spectra of $\Ctwo$ are semilinear.
\begin{corollary}
\label{c: c2 semilinear}
For every sentence $\phi \in \Ctwo$,
the spectrum $\spectrum(\phi)$ is semilinear.
\end{corollary}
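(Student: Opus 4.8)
The plan is to read this off immediately from Theorem~\ref{t: presburger for c2} together with the classical characterisation of Presburger-definable sets. All of the genuine work has already been delegated to Theorem~\ref{t: presburger for c2}; the corollary is just a matter of recording what its conclusion means in the language of semilinear sets.

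First I would apply Theorem~\ref{t: presburger for c2} to the given sentence $\phi \in \Ctwo$, obtaining a Presburger formula $\preb(x)$ with the single free variable $x$ such that $\{n \mid \preb(n) \ \mbox{holds}\} = \spectrum(\phi)$. Next I would invoke the theorem of Ginsburg and Spanier~\cite{GS66}, already recalled in Section~\ref{s: intro}, which asserts that a subset of $\bbN^k$ is definable by a Presburger formula if and only if it is semilinear. Specialising this equivalence to $k = 1$ and to the formula $\preb(x)$ shows that the set $\{n \mid \preb(n) \ \mbox{holds}\}$ is semilinear. Since this set equals $\spectrum(\phi)$, the spectrum is semilinear, as required.

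There is essentially no obstacle at this level: the statement is a one-line consequence once Theorem~\ref{t: presburger for c2} is in hand, and the only external ingredient is the Ginsburg--Spanier theorem, which is a standard fact about $\np$-free definability in Presburger arithmetic. The real difficulty --- constructing the Presburger formula $\preb(x)$ from $\phi$ via the semilinear characterisations of the existence of $(C,D)$-biregular graphs and $(C,D)$-regular digraphs --- lies entirely within the proof of Theorem~\ref{t: presburger for c2} and is carried out in Sections~4--8. I would note, finally, that the same two ingredients give a little more than stated: because the class of semilinear sets is effectively closed under complement, the construction also yields that $\spectrum(\phi)$ has a semilinear, hence $\Ctwo$-realisable, complement, which is the route to Asser's conjecture for $\Ctwo$ advertised in the introduction.
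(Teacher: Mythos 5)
Your proposal matches the paper exactly: the corollary is stated there as an immediate consequence of Theorem~\ref{t: presburger for c2} combined with the Ginsburg--Spanier equivalence between Presburger-definable and semilinear sets, which is precisely your argument. The only quibble is the stray phrase ``$\np$-free definability,'' which is not meaningful, but it does not affect the (correct) mathematics.
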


On the other hand, it is not that difficult to show that
every semilinear set is a spectrum of a $\Ctwo$ sentence,
as formally stated below.
\begin{proposition}
\label{p: semilinear to c2}
For every semilinear set $\Lambda \subseteq \bbN$,
there exists a sentence $\phi \in \Ctwo$ such that $\spectrum(\phi)=\Lambda$.
\end{proposition}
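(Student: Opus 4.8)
The plan is to show that every semilinear set $\Lambda \subseteq \bbN$ is the spectrum of some $\Ctwo$ sentence. Since a semilinear set is a finite union of linear sets, and since spectra are trivially closed under (finite) union — if $\phi_1,\ldots,\phi_r$ witness the linear pieces, one introduces a fresh unary predicate $S$ to ``select'' which disjunct is active, relativizes each $\phi_i$ to the selected world, and takes the disjunction, all of which stays within two variables — it suffices to handle a single linear set. So I would reduce immediately to the case $\Lambda = \{\, b + k_1 p_1 + \cdots + k_s p_s \mid k_1,\ldots,k_s \in \bbN \,\}$ for fixed base $b$ and periods $p_1,\ldots,p_s \in \bbN$.

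The core of the construction is to realize a single period $p$ as the spectrum $\{\, kp \mid k \geq 1\}$ (together with a way to combine periods and add the constant offset $b$). The natural device in $\Ctwo$ is a $p$-regular structure: following example (Ex.2) and characterisation (C2) in the excerpt, I would force the domain to carry a binary relation $R$ that is a $1$-regular digraph, i.e.\ every element has exactly one $R$-successor and exactly one $R$-predecessor (expressible by $\forall x\, \exists^{=1} y\, R(x,y)$ and $\forall x\, \exists^{=1} y\, R(y,x)$, using the abbreviations for $\exists^{=1}$ built from $\exists^1$ and $\neg\exists^2$). Such a relation decomposes the domain into disjoint directed cycles. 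To pin the cycle length to exactly $p$, I would use $p$ unary ``clock'' predicates $P_0,\ldots,P_{p-1}$ that partition the domain and that $R$ cyclically increments: each element in $P_i$ has its $R$-successor in $P_{i+1 \bmod p}$, and $P_0$-elements have their successor in $P_1$, etc. A short inductive argument along each cycle shows every cycle has length a multiple of $p$; to force length exactly $p$ (so that the number of elements is a multiple of $p$ with the cleanest count) I would additionally forbid two distinct elements from sharing the same clock value on the same cycle — more simply, assert $\exists^{=1} x\, P_0(x)$-style global counting is unavailable with two variables, so instead I rely on the fact that \emph{any} model has size a multiple of $p$, and conversely every multiple $kp$ is realized by a disjoint union of $k$ cycles of length $p$. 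This already gives $\spectrum = p\bbN \setminus \{0\}$, or with care $p\bbN_{\geq 1}$.

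To assemble a general linear set, I would put the periods side by side. Partition the domain (via a fresh unary predicate, or a finite coloring of the vertices) into $s+1$ parts: one part of fixed size $b$ enforced by the sentence, and for each $i$ a part on which the clock-and-cycle gadget above forces the size to be a nonnegative multiple of $p_i$. Fixing a part to have size exactly $b$ is the one place two variables bite, since $\Ctwo$ cannot directly say ``there are exactly $b$ elements'' globally \emph{and} independently of the rest — but it can, using $\exists^{=b}x\,(\text{the offset predicate})(x)$, because $\exists^{=b}$ is expressible as $\exists^b x\,\psi \wedge \neg \exists^{b+1} x\,\psi$. Requiring the parts to be disjoint and to cover the domain makes the total size $b + k_1 p_1 + \cdots + k_s p_s$, exactly $\Lambda$, and conversely any such value is realized by choosing the appropriate cycle multiplicities, establishing $\spectrum(\phi) = \Lambda$.

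I expect the main obstacle to be the bookkeeping that forces each period part to contribute a \emph{multiple} of $p_i$ (rather than merely a number congruent to something mod $p_i$, or an arbitrary size) using only two variables and counting quantifiers: the cyclic clock predicates are the key idea, and the delicate point is verifying that the local successor constraints globally entail the cycle-length divisibility on every connected $R$-component, together with the matching converse construction that realizes an arbitrary prescribed multiplicity by a disjoint union of $p_i$-cycles. Once that gadget is correct, combining parts, adding the constant $b$ via $\exists^{=b}$, and taking disjunctions over the finitely many linear pieces of $\Lambda$ are all routine and stay within $\Ctwo$.
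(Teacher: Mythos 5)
Your proposal is correct and is essentially the paper's own construction: the paper likewise builds, for each linear piece $\{m+in\}$, a sentence with a unary predicate of exact size $m$ (via counting quantifiers) plus cyclic ``clock'' predicates $B_0,\ldots,B_{n-1}$ linked by a functional edge relation that forces the remaining part to have size a multiple of $n$, and then takes the disjunction over the finitely many pieces. The only cosmetic differences are that the paper decomposes a semilinear subset of $\bbN$ into single-period arithmetic progressions rather than juxtaposing one gadget per period, and that your selector predicate for the union is unnecessary since plain disjunction already gives union of spectra.
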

\begin{proof}
For a linear set $\Gamma_{m,n} = \{ m+ in \mid i =0,1,2,\ldots\}$,
consider the vocabulary $\tau_{m,n} = \{A,B_0,B_1,\ldots,B_{n-1},E\}$,
where $A,B_0,\ldots,B_{n-1}$ are unary and $E$ binary.
Consider the $\Ctwo$ sentence $\phi_{m,n}$ which states that
$A\cup B_0 \cup \cdots \cup B_{n-1}$ partition the whole universe,
the predicate $A$ contains exactly $m$ elements,
and for every $x$, if $B_i(x)$ holds, 
\begin{itemize}\itemsep=0pt
\item
there is exactly one $y$ such that $x\neq y$ and $B_{i+1 \ {\rm mod} \ n}(y)$ and $E(x,y)$ holds;
\item
there is exactly one $y$ such that $x\neq y$ and $B_{i-1\ {\rm mod} \ n}(y)$ and $E(y,x)$ holds.
\end{itemize}
It is straightforward that $\spectrum(\phi_{m,n}) = \Gamma_{m,n}$.
For a semilinear set, we simply takes the finite union of such $\phi_{m,n}$.
This completes our proof of Proposition~\ref{p: semilinear to c2}.
\end{proof}

Now, take Corollary~\ref{c: c2 semilinear},
apply the fact that semilinear sets are closed under complement,
and then Proposition~\ref{p: semilinear to c2},
we obtain the following corollary.

\begin{corollary}
\label{c: spectra c2 closed complement}
The spectra of $\Ctwo$ sentences are closed under complement within $\Ctwo$.
\end{corollary}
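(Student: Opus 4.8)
The plan is to chain together the two results just established with one classical closure property, so that the argument becomes a short, essentially syntactic assembly. First I would fix an arbitrary sentence $\phi \in \Ctwo$ and write $\Lambda = \spectrum(\phi)$. By Corollary~\ref{c: c2 semilinear}, $\Lambda$ is a semilinear subset of $\bbN$. This is the only step that draws on the heavy machinery of the paper; everything downstream is elementary.

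The single external ingredient I would invoke is the classical fact that the semilinear sets are closed under complement, which is part of the Ginsburg--Spanier correspondence already cited in the introduction, equivalently the closure of Presburger-definable sets under negation. Applying it to $\Lambda$, the complement $\bbN \setminus \Lambda$ is again semilinear. Finally I would feed this complement into Proposition~\ref{p: semilinear to c2}, obtaining a sentence $\psi \in \Ctwo$ with $\spectrum(\psi) = \bbN \setminus \Lambda = \bbN \setminus \spectrum(\phi)$. Since $\phi$ was arbitrary, the complement in $\bbN$ of every $\Ctwo$ spectrum is itself the spectrum of a $\Ctwo$ sentence, which is precisely the claim.

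I do not expect any genuine obstacle at the level of this corollary, since all the real difficulty is absorbed into the two results being combined: the forward direction Corollary~\ref{c: c2 semilinear} (hence Theorem~\ref{t: presburger for c2}), whose proof spans most of the paper, and the converse Proposition~\ref{p: semilinear to c2}, which is routine. The only point worth a word of care is the reading of the phrase ``closed under complement within $\Ctwo$'': complementation is taken relative to all of $\bbN$, so the content of the statement is that taking this ambient complement never carries us outside the class of $\Ctwo$ spectra. With that understood, the three-step assembly above completes the argument.
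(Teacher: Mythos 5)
Your proposal is correct and is exactly the paper's own argument: the authors obtain the corollary by combining Corollary~\ref{c: c2 semilinear}, the closure of semilinear sets under complement, and Proposition~\ref{p: semilinear to c2}, in precisely the order you describe. Nothing further is needed.
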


\section{The plan for the proof of Theorem~\ref{t: presburger for c2}}
\label{s: plan}

As mentioned earlier, the proof of Theorem~\ref{t: presburger for c2}
is rather complex and spans over Sections~4--8.
We give its outline here.
\begin{itemize}\itemsep=0pt
\item
In Section~\ref{s: qmlc} we define the logic $\QMLC$ (Quantified Modal Logic with Counting),
which for our purpose, will be easier to work with.
In particular, we show that $\Ctwo$ and $\QMLC$ are equivalent in terms of spectra.
\item
In Section~\ref{s: regular graphs} we define 
the class of biregular graphs and regular digraphs.
The main theorems in this section are 
Theorems~\ref{t: main biregular} and~\ref{t: main regular-digraph}.
Theorem~\ref{t: main biregular} gives us the Presburger characterisations of the existence
of biregular graphs,
while Theorem~\ref{t: main regular-digraph} the same characterisations for the regular digraphs.
\item
Equipped with Theorems~\ref{t: main biregular} and~\ref{t: main regular-digraph},
we construct the formula $\preb(x)$ as required in Theorem~\ref{t: presburger for c2} 
in Section~\ref{s: proof}. 
\item
However, the proofs of Theorems~\ref{t: main biregular} and~\ref{t: main regular-digraph}
are themselves rather long and involved.
So we postpone the proof of Theorem~\ref{t: main biregular} until Section~\ref{s: proof1}.
The proof of Theorem~\ref{t: main regular-digraph} is similar to Theorem~\ref{t: main biregular},
so we simply sketch it in Section~\ref{s: proof2}.
\end{itemize}
Figure~\ref{fig:interdependence} illustrates the interdependence among Sections~4--8.

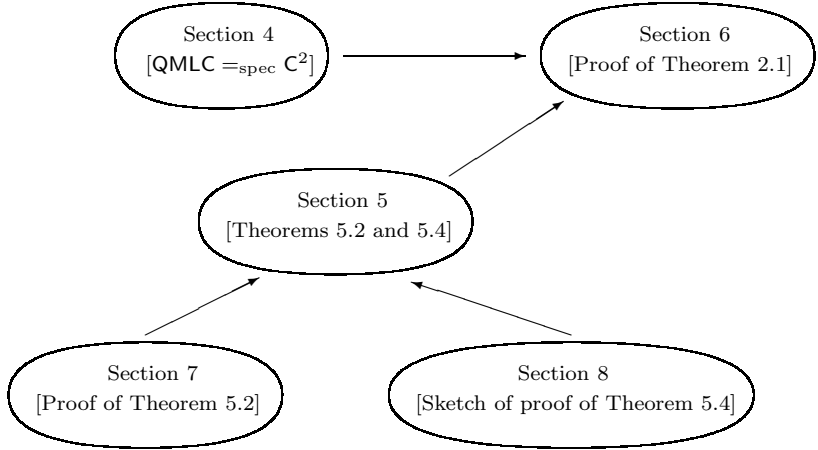
\begin{figure}[t]
{\footnotesize
\begin{picture}(130,70)(-65,-35)

\qbezier(18,22)(18,29)(36,29)
\qbezier(36,29)(54,29)(54,22)
\qbezier(54,22)(54,15)(36,15)
\qbezier(36,15)(18,15)(18,22)
\put(31,24){Section~6}
\put(21.5,20){[Proof of Theorem~\ref{t: presburger for c2}]}

\qbezier(-38,22)(-38,29)(-24,29)
\qbezier(-24,29)(-10,29)(-10,22)
\qbezier(-10,22)(-10,15)(-24,15)
\qbezier(-24,15)(-38,15)(-38,22)
\put(-8,22){\vector(1,0){24}}
\put(-29,24){Section~4}
\put(-34,20){[$\QMLC =_{\rm spec} \Ctwo$]}

\qbezier(-27,0)(-27,7)(-9,7)
\qbezier(-9,7)(9,7)(9,0)
\qbezier(9,0)(9,-7)(-9,-7)
\qbezier(-9,-7)(-27,-7)(-27,0)
\put(6,6){\vector(3,2){15}}
\put(-14,2){Section~5}
\put(-23.5,-2){[Theorems~\ref{t: main biregular} and~\ref{t: main regular-digraph}]}

\qbezier(-52,-23)(-52,-16)(-34,-16)
\qbezier(-34,-16)(-16,-16)(-16,-23)
\qbezier(-16,-23)(-16,-30)(-34,-30)
\qbezier(-34,-30)(-52,-30)(-52,-23)
\put(-34,-15){\vector(2,1){15}}
\put(-39,-21){Section~7}
\put(-48.5,-25){[Proof of Theorem~\ref{t: main biregular}]}

\qbezier(-2,-23)(-2,-16)(22,-16)
\qbezier(22,-16)(46,-16)(46,-23)
\qbezier(46,-23)(46,-30)(22,-30)
\qbezier(22,-30)(-2,-30)(-2,-23)
\put(22,-15){\vector(-3,1){21}}
\put(15,-21){Section~8}
\put(1.5,-25){[Sketch of proof of Theorem~\ref{t: main regular-digraph}]}

\end{picture}}
\caption{The skeleton for the proof of Theorem~\ref{t: presburger for c2}.
Theorem~\ref{t: main biregular} gives us the Presburger characterisations for
the existence of biregular graphs, while Theorem~\ref{t: main regular-digraph}
for the regular digraphs.}
\label{fig:interdependence}
\end{figure}

\section{Quantified modal logic with counting}
\label{s: qmlc}

In this section we present  {\em quantified modal logic with counting} ($\QMLC$),
which for our purpose, will be easier to work with.
We are going to show that 
$\Ctwo$ and $\QMLC$ are equivalent in terms of spectra.
In fact, our proof shows that 
$\Ctwo$ and $\QMLC$ are equivalent up to renaming/deleting/adding relational symbols,
when $\QMLC$ are restricted to ``complete'' structures defined as follows.
A structure $\stra$ is a complete structure, if it satisfies the following properties.
\begin{enumerate}[(N1)]\itemsep=0pt
\item
$\stra$ is a clique over $A$.
That is, for every $a,b \in A$, either $a=b$ or
$R(a,b)$ for some $R \in \Rels$.
\item
Every binary relation in $\Rels$ does not intersect identity relation.
That is, for every $R \in \Rels$, if $R(a,b)$ holds, $a\neq b$.
\item
$\Rels$ is closed under inverse.
That is, for every $R \in \Rels$, there exists $\rev{R}\in \Rels$
such that $\rev{R}\neq R$ and for every $a,b\in A$,
$R(a,b)$ if and only if $\rev{R}(b,a)$.
\item
The binary predicates in $\Rels$ are pairwise disjoint.
\end{enumerate}
Our proof is an adaptation of the proof in~\cite{LSW01} 
which shows that similar equivalence holds 
between two-variable logic and modal logic.


The class $\MLC$ of {\em modal logic with counting}
is defined with the following syntax.
\begin{eqnarray*}
\phi & ::= &
\begin{array}{c|c|c|c}
\neg \phi & \alpha & \phi \AND \phi & \Diamond_{\beta}^k \phi
\end{array}
\end{eqnarray*}
where $\alpha$ ranges over $\cP$ and $\beta$ over $\cR$.

The semantics of $\MLC$ is as follows.
Let $\stra$ be a structure of $\tau$
and $a\in A$ and $\phi$ be an $\MLC$ formula.
That $\stra$ satisfies $\phi$ from $a$,
denoted by $\stra,a \models \phi$, is defined as follows.
\begin{itemize}\itemsep=0pt
\item
$\stra,a \models P$, where $P \in \cP$,
if $P(a)$ holds in $\stra$.
\item
$\stra,a \models \neg \phi$, 
if $\stra,a \not\models \phi$.
\item
$\stra,a \models \phi_1 \AND \phi_2$, 
if $\stra,a \models \phi_1$ and $\stra,a \models \phi_2$.
\item
$\stra,a \models \Diamond_{R}^k \phi$,
if there exist at least $k$ elements $b_1,\ldots,b_k \in A$
such that $R(a,b_i)$ holds in $\stra$ and $\stra,b_i \models \phi$
for $i=1,\ldots,k$.
\end{itemize}

We define the class of quantified modal logic with counting, 
denoted by $\QMLC$ with the following syntax.
\begin{eqnarray*}
\psi & ::= &
\begin{array}{c|c|c}
\neg \psi & \psi_1 \AND \psi_2 & \exists^k \phi
\end{array}
\end{eqnarray*}
where the formula $\phi \in \MLC$.
A $\QMLC$ formula $\psi$ is called a {\em basic} $\QMLC$,
if it is of the form $\exists^k \ \phi$,
where $\phi \in \MLC$.

The semantics of $\QMLC$ is as follows.
Let $\stra$ be a structure of $\tau$ and $\psi\in\QMLC$.
That $\stra$ satisfies $\psi$, denoted by $\stra \models \psi$, is defined as follows.
\begin{itemize}\itemsep=0pt
\item
$\stra \models \neg \psi$, 
if it is not the case that $\stra \models \psi$.
\item
$\stra \models \psi_1 \AND \psi_2$, 
if $\stra \models \psi_1$ and $\stra \models \psi_2$.
\item
$\stra \models \exists^k \phi$,
if there exist at least $k$ elements $a_1,\ldots,a_k \in A$
such that $\stra,a_i \models \phi$ for $i=1,\ldots,k$.
\end{itemize}
We denote by $\spectrum(\psi)$
the set consists of the size of complete structures of $\psi$.
That is, 
$\spectrum(\psi) = \{n \mid \mbox{there is a complete structure} \ \stra\models \psi \ \mbox{of size} \ n\}$.
Note that for $\QMLC$, the notion of spectrum is restricted to complete structures.

In the following we are going to show that from spectral point of view,
$\Ctwo$ and $\QMLC$ are equivalent.
The intuitive explanation for the requirement of complete structure is as follows.
Notice that in $\QMLC$ we cannot express the negation of a binary relation $\neg R(x,y)$.
Rather, to ``express'' $\neg R(x,y)$ in $\QMLC$,
we introduce a new relation symbol to capture $\neg R(x,y)$,
hence, the requirement (N1) and (N4) in the complete structure.
Similarly, in $\QMLC$ from an element $x$,
we cannot express the ``inverse'' direction $R(y,x)$.
So we introduce a new relation $\rev{R}$ that captures the ``inverse'' of $R$,
and $R(y,x)$ will be simulated by $\rev{R}(x,y)$, instead,
hence the requirement (N3).
We require (N2) simply for technical convenience.

Theorem~\ref{t: c2 and qmlc} below states formally
the spectral equivalence between $\Ctwo$ and $\QMLC$,
when $\QMLC$ is restricted to complete structures.
\begin{theorem}
\label{t: c2 and qmlc}
For every $\varphi \in \Ctwo$,
there is a $\QMLC$ formula $\psi$ such that
\begin{itemize}\itemsep=0pt
\item
for every structure $\stra \models \varphi$,
there is a complete structure $\strb \models \psi$ where $|A|=|B|$;
\item
for every complete structure $\strb \models \psi$,
there is a structure $\stra \models \varphi$ and $|A|=|B|$.
\end{itemize}
\end{theorem}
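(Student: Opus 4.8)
The plan is to adapt the standard correspondence between two-variable logic and modal logic to the counting setting, routing everything through a quantifier-free normal form so that the translation becomes purely local and compositional. First I would preprocess $\varphi$: every atom of the form $R(x,x)$ or $R(y,y)$ is replaced by a fresh unary predicate $P_R$, which is harmless because with only two variables $R(x,x)$ is a property of $x$ alone; after this, all surviving binary atoms are read only off the diagonal, in keeping with (N2). I would then put the result into the standard Scott/Gr\"adel--Otto--Rosen normal form for $\Ctwo$, a conjunction of one universal sentence $\forall x \forall y\, \alpha(x,y)$ and counting sentences, e.g.\ of the form $\forall x\, \exists^{C_i} y\,(x \neq y \AND \beta_i(x,y))$, with $\alpha$ and the $\beta_i$ quantifier-free (at-most and exact constraints are Boolean combinations of at-least ones, handled the same way). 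Each of these steps only adds new unary predicate symbols and leaves the domain untouched, so it preserves the size of every model in both directions; this is precisely the latitude of renaming/adding relational symbols that the theorem allows.

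Second, I would recode the binary vocabulary so that completeness (N1)--(N4) holds. For each pair of subsets $S, S' \subseteq \cR$ I introduce a relation $T_{S,S'}$, meant to hold of $(a,b)$ with $a \neq b$ exactly when $\{R : R(a,b)\} = S$ and $\{R : R(b,a)\} = S'$. These relations are off-diagonal (N2), pairwise disjoint (N4), and together cover the whole clique once the empty link type $T_{\emptyset,\emptyset}$ is included (N1); the inverse of $T_{S,S'}$ is $T_{S',S}$ (N3). The one snag is the symmetric case $S = S'$, where $T_{S,S}$ would be its own inverse and violate the requirement $\rev{R} \neq R$. I would fix this by splitting each symmetric link type into an oriented pair $T_S^{+}, T_S^{-}$ with $\rev{T_S^{+}} = T_S^{-} \neq T_S^{+}$, choosing an arbitrary orientation on each such edge; this choice is purely local and changes neither the domain nor which original facts hold. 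Decoding in the reverse direction (reading each $R(a,b)$ off the link type and each reflexive fact off the predicate $P_R$) is the inverse operation, again a bijection on domains.

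Third, I would translate each normal-form conjunct into $\QMLC$. In a complete structure every ordered off-diagonal pair lies in exactly one link relation, so the binary atoms appearing in $\alpha$ and $\beta_i$ are determined by the link type; rewriting, $\beta_i(x,y)$ becomes a disjunction $\bigvee_T (T(x,y) \AND \delta_T(y))$ over the disjoint link types, with each $\delta_T$ unary. The universal conjunct $\forall x \forall y\, \alpha$ then becomes $\neg\exists^1 (\neg\mu)$ for a suitable $\MLC$ formula $\mu$ built from $\Diamond_T$ and unary modal atoms (its diagonal part being a unary condition and its off-diagonal part a negated diamond). For a counting conjunct the key point is that a count taken across disjoint link types is a sum of per-link counts, so for a fixed constant $C$ one has the finite equivalence
\[
  \exists^{C} y\, \beta_i(x,y)
  \;\equiv\;
  \bigvee_{\sum_T k_T = C} \ \bigwedge_T \Diamond_T^{\,k_T}\, \delta_T ,
\]
which is an $\MLC$ formula $\phi_i$; the conjunct $\forall x\, \phi_i$ is then $\neg\exists^1 (\neg\phi_i)$, a negated basic $\QMLC$. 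Conjoining the resulting Boolean combinations yields $\psi$.

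Finally, the two bullet points follow from the two recoding directions: a model $\stra \models \varphi$ expands to a normal-form model on the same domain and recodes to a complete $\strb \models \psi$ with $|B| = |A|$, while a complete $\strb \models \psi$ decodes (using completeness to recover the link types) to a normal-form model and restricts to $\stra \models \varphi$ with $|A| = |B|$. I expect the main obstacle to lie in the third step: checking that the sum-of-counts equivalence and the orientation of symmetric link types interact correctly, so that counting diamonds over the possibly-two relations $T_S^{+}, T_S^{-}$ (detected via $\Diamond_{T_S^{+}} \vee \Diamond_{T_S^{-}}$) faithfully reproduce the original counts no matter how each symmetric edge was oriented. The normal form is exactly what keeps this manageable, since it removes quantifier nesting and hence the usual obstruction that modal logic cannot refer back to the predecessor of the current point.
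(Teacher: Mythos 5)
Your proposal is correct, and its core machinery coincides with the paper's: recode the binary vocabulary into pairwise-disjoint, inverse-closed ``link types'' so that every pair of distinct elements carries exactly one relation (the complete-structure conditions (N1)--(N4)), and then exploit the fact that a count over a disjoint union of link types decomposes as a sum of per-type counts, which is exactly the paper's rewriting of $\exists^{k} y\,((x\neq y)\AND\theta(x,y))$ into $\bigvee_{f}\bigwedge_{R}\exists^{f(R)}y\,(R(x,y)\AND\theta_R(y))$ with $\sum_R f(R)=k$, each conjunct then becoming a $\Diamond_R^{f(R)}$. The one genuine divergence is the normal form you route through. You invoke the standard Scott/Gr\"adel--Otto--Rosen normal form, flattening $\varphi$ to a single $\forall x\forall y\,\alpha$ conjunct plus conjuncts $\forall x\,\exists^{C_i}y\,(x\neq y\AND\beta_i)$ with quantifier-free matrices, and then translate only these two shapes; the paper explicitly avoids Scott's normal form (see its footnote) and instead defines a bespoke normal form that still permits arbitrary nesting of quantifiers of the restricted shapes $\exists^k y\,(R(x,y)\AND\theta(y))$ and $\exists^k x\,\theta(x)$, obtained by three local rewritings (splitting off the diagonal case $x=y$, extracting subformulas depending only on $x$, and the link-type sum decomposition), followed by a compositional inductive translation $F$ into $\QMLC$. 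Your route buys a shorter translation step at the cost of importing the (nontrivial) Scott normal form as a black box; the paper's route is self-contained and produces an equivalence over all complete structures rather than only a spectrum-preserving correspondence. One small point to watch in your version: in the universal conjunct, $\alpha(x,y)$ restricted to a fixed link type is still a Boolean combination of unary atoms of \emph{both} $x$ and $y$, so before forming $\neg\Diamond_T^1(\cdot)$ you must case-split on the $1$-type of $x$ (this is precisely the paper's second rewriting step); and your worry about the oriented pair $T_S^{+},T_S^{-}$ is resolved automatically because the sum decomposition already ranges over them as two separate relations of the new vocabulary.
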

\begin{proof}
Let $\varphi \in \Ctwo$.
By extending/renaming/deleting the relations, 
and by modifying the sentence $\varphi$, if necessary,
we can obtain another $\Ctwo$ sentence $\varphi'$ such that
\begin{itemize}\itemsep=0pt
\item
for every structure $\stra \models \varphi$,
there is a complete structure $\stra' \models \varphi'$ where $|A|=|A'|$;
\item
for every complete structure $\stra' \models \varphi'$, 
there is a structure $\stra \models \varphi$ and $|A|=|A'|$.
\end{itemize}
The details of the construction of $\varphi'$ is straightforward, hence, omitted.
For example, to achieve (N1) and (N4) we can introduce
a new binary relation for each Boolean combination of relations in $\varphi$.
We can do similar trick to achieve (N2) and (N3).

From this formula $\varphi'$, we are going to construct the desired $\QMLC$ formula $\phi$.
It consists of the following two steps.
\begin{enumerate}\itemsep=0pt
\item
Convert the sentence $\varphi'$ into its ``normal form'' $\psi$
such that for every complete structure $\stra$,
we have $\stra \models \varphi'$ if and only if $\stra \models \psi$.\footnote{We
would like to remark that the normal form here is different from
the standard Scott's normal form.}
\item
Convert the sentence $\psi$ into a ``quantified modal logic'' ($\QMLC$)
sentence $\phi$
such that for every complete structure $\stra$,
we have $\stra \models \psi$ if and only if $\stra \models \phi$.
\end{enumerate}
In the following paragraphs
we are going to describe formally these two steps.

A $\Ctwo$ sentence is in {\em normal form}, if
all the quantifiers are either of form 
\[
\exists^k y \ \Big( R(x,y) \AND \theta(y) \Big),
\qquad\mbox{or}\qquad
\exists^k x \ \theta(x)
\] 
and all other applications of variables are of form $P(x)$, where $P$ is a unary predicate.

The $\Ctwo$ sentence $\varphi'$ can be converted
into its equivalent sentence $\psi$ in normal form as follows.
\begin{itemize}\itemsep=0pt
\item
First, we rewrite every subformula of the form
$\exists^k y\ \theta(x,y)$ with one free variable $x$
into the following form:
\begin{eqnarray*}
&  \theta(x,x) \ \AND  \ \exists^{k-1} y\ \Big((x\neq y) \AND \theta(x,y)\Big) &
\\
& \OR &
\\
& \exists^k y\ \Big((x\neq y) \AND \theta(x,y)\Big) &
\end{eqnarray*}
After such rewriting, we can assume that every quantifier in $\varphi$ is of the form 
$\exists^{k} y\ ((x\neq y) \AND \theta(x,y))$.
\item
Second, every quantification $\exists^{k} y\ ((x\neq y) \AND \theta(x,y))$,
in which $\theta(x,y)$ contains a subformula $\alpha(x)$ depending only on $x$,
can be rewritten into the form:
\begin{eqnarray*}
& \neg\alpha(x) \ \AND \ \exists^k y\ \Big((x\neq y) \AND \theta_0(x,y)\Big) &
\\
& \OR & 
\\
& \alpha(x) \ \AND \ \exists^k y\ \Big((x\neq y) \AND \theta_1(x,y)\Big) &
\end{eqnarray*}
where $\theta_0(x,y)$ and $\theta_1(x,y)$ are obtained from $\theta$ 
by replacing $\alpha(x)$ with false and true, respectively.
We can repeat this until $\theta(x,y)$ no longer has a subformula depending only on $x$.
\\
After such rewriting we can assume that every quantifier in $\varphi$ is of the form 
\[
\exists^{k} y\ ((x\neq y) \AND \theta(x,y)),
\]
where $\theta(x,y)$ does not contain any subformula depending only on $x$.
\item
Third, every quantification $\exists^{k} y\ \Big((x\neq y) \AND \theta(x,y)\Big)$
can be rewritten into the form:
\[
\bigvee_{f \in \Delta_\Rels^{k}} \ \
\bigwedge_{R \in \Rels} \ \
\exists^{f(R)} y \  \Big( R(x,y) \AND \theta_R(y) \Big)
\]
where $\Delta_\Rels^k$ is the set of all functions $f: \Rels \ra \bbN$ 
such that $\sum_{R \in \Rels} f(R) = k$, and 
$\theta_R(y)$ is obtained from $\theta(x,y)$ by replacing each $R'(x,y)$ with true if $R=R'$,
and false otherwise.
\end{itemize}
By performing these three steps, 
we get the $\Ctwo$ sentence $\psi$ in the normal form.
Particularly, for every complete structure $\stra$,
we have $\stra \models \varphi'$ if and only if $\stra \models \psi$.

Now from this $\Ctwo$ sentence $\psi$ in normal form,
the construction of its $\QMLC$ sentence $\phi= F(\psi)$ 
can be done inductively as follows.
There are two cases.
\begin{enumerate}\itemsep=0pt
\item
$\vartheta$ has no free variable.
\begin{itemize}\itemsep=0pt
\item
If $\vartheta$ is $\neg \vartheta_1$,
then $F(\vartheta) = \neg F(\vartheta_1)$.
\item
If $\vartheta$ is $\vartheta_1\AND\vartheta_2$,
then $F(\vartheta) = F(\vartheta_1) \AND F(\vartheta_2)$.
\item
If $\vartheta$ is $\exists^k x \ \vartheta_1(x)$,
then $F(\vartheta) = \exists^k F(\vartheta_1(x))$.
\end{itemize}
\item
$\vartheta$ has one free variable $x$.
\begin{itemize}\itemsep=0pt
\item
If $\vartheta(x)$ is $P(x)$,
then $F(\vartheta(x)) = P$.
\item
If $\vartheta(x)$ is $\vartheta_1(x)\AND \vartheta_2(x)$,
then $F(\vartheta(x)) = F(\vartheta_1(x))\AND F(\vartheta_2(x))$. 
\item
If $\vartheta(x)$ is $\neg \vartheta_1(x)$,
then $F(\vartheta(x)) = \neg F(\vartheta_1(x))$. 
\item
If $\vartheta(x)$ is $\exists^k y \ R(x,y) \AND \vartheta_1(x,y)$,
then $F(\vartheta(x)) = \Diamond_{R}^k F(\vartheta_1(x,y))$. 
\end{itemize}
The case when $\vartheta$ has one free variable $y$ can be handled
in a symmetrical way.
\end{enumerate}
By a straightforward induction,
we can show that for every compete structure $\stra$,
$\stra \models \vartheta$ if and only if
$\stra\models F(\vartheta)$.
In particular, from the equivalences between $\varphi$ and $\varphi'$,
between $\varphi'$ and $\psi$ as well as
between $\psi$ and $\phi = F(\psi)$,
we obtain that
\begin{itemize}\itemsep=0pt
\item
for every structure $\stra \models \varphi$,
there is a complete structure $\strb \models \psi$ such that $|A|=|B|$;
\item
for every complete structure $\strb \models \psi$, 
there is a structure $\stra \models \varphi$ such that $|A|=|B|$.
\end{itemize}
This concludes our proof of Theorem~\ref{t: c2 and qmlc}.
\end{proof}


\section{Regular graphs}
\label{s: regular graphs}

In this section we are going to introduce two types of regular graphs:
biregular graphs (bipartite regular graphs)
and regular digraphs.
The main results in this section are
Theorems~\ref{t: main biregular} and~\ref{t: main regular-digraph},
which will be used in our proof of Theorem~\ref{t: presburger for c2}.
For the sake of readability, we postpone their proofs
until Sections~\ref{s: proof1} and~\ref{s: proof2}.

\subsection{Biregular graphs}
\label{ss: biregular}

An $\ell$-type bipartite graph is $G = (U,V,E_1,\ldots,E_{\ell})$,
where $E_1,\ldots,E_{\ell}$ are {\em pairwise disjoint} 
subsets of $U\times V$.
Elements in $E_i$ are called $E_i$-edges.
It helps to think of $G$ as a bipartite graph 
in which the edges are coloured with $\ell$ number of colours.

For a vertex $u \in U \cup V$, $\deg_{E_i}(u)$ denotes 
the number of $E_i$-edges adjacent to it, and
$\deg(u) = \sum_{i=1}^{\ell} \deg_{E_i}(u)$.
We write $\deg(G) = \max\{\deg(u) \mid u \ \mbox{is a vertex in} \ G \}$.
For an integer $d \in \bbN$,
we write $\deg_{E_i}(u) = {\ugeq d}$,
to denote $\deg_{E_i}(u) \geq d$.

Let $\bbN$ denote the set of natural numbers $\{0,1,2,\ldots\}$ 
and ${\ugeq \bbN} = \{\ugeq 0, \ugeq 1,\ugeq 2,\ldots\}$
and $\bbB = \bbN \cup {\ugeq\bbN}$.
We write $\bbB^{\ell\times m}$ to denote 
the set of $\ell\times m$ matrices whose entries are from $\bbB$.
The entry in row $i$ and column $j$ of a matrix $D \in \bbB^{\ell\times m}$
is denoted by $D_{i,j}$.

Let $C\in \bbB^{\ell \times m}$ and $D \in \bbB^{\ell \times n}$.
An $\ell$-type bipartite graph 
$G=(U,V,E_1,\ldots,E_{\ell})$ is $(C,D)$-biregular,
if there is a partition $U = U_1 \cup \cdots \cup U_m$
and $V = V_1 \cup \cdots \cup V_n$ such that
the following holds.
\begin{itemize}\itemsep=0pt
\item
For every $i = 1,\ldots,\ell$,
for every $j=1,\ldots,m$,
for every vertex $u \in U_j$, $\deg_{E_i}(u)= C_{i,j}$.
\item
For every $i = 1,\ldots,\ell$,
for every $j=1,\ldots,n$,
for every vertex $v \in V_j$, $\deg_{E_i}(v)= D_{i,j}$.
\end{itemize}
We call the partitions $U = U_1 \cup \cdots \cup U_m$
and $V = V_1 \cup \cdots \cup V_n$
the witness of the $(C,D)$-biregularity of $G$.
We say that the $(C,D)$-biregular graph $G$ 
is of size $(\vM,\vN)$, if 
$\vM = (|U_1|,\ldots,|U_m|)$ and $\vN = (|V_1|,\ldots,|V_n|)$.

\begin{theorem}
\label{t: l-type biregular}
For every two matrices $C \in \bbB^{\ell\times m}$
and $D \in \bbB^{\ell\times n}$,
there is a Presburger formula 
$\bireg_{C,D}(\vX,\vY)$,
where $\vX = (X_1,\ldots,X_m)$ and $\vY = (Y_1,\ldots,Y_n)$
such that the following holds.
There exists an $\ell$-type $(C,D)$-biregular graph of size $(\vM,\vN)$ 
if and only if
$\bireg_{C,D}(\vM,\vN)$ holds.
\end{theorem}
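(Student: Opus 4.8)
The plan is to describe the set of realizable size-vectors $(\vM,\vN)$ by linear arithmetic and then invoke the theorem of Ginsburg and Spanier that the semilinear sets are exactly the Presburger-definable ones. Concretely, I would first write down a system of \emph{necessary} edge-counting and feasibility conditions, then argue that these conditions are also \emph{sufficient} once every nonempty class is large enough, dispose of the finitely many remaining ``small'' configurations by direct inspection, and finally verify that the union of all these pieces is semilinear. Effectiveness comes for free, since realizability for any single fixed size is decidable, so the finite exceptional part can be computed.

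For the necessary conditions I would introduce, for each colour $i$, existentially quantified variables recording the total number of $E_i$-edges together with a transportation system $z^{i}_{j,j'}$ counting the $E_i$-edges between $U_j$ and $V_j'$. The core balance condition says that, reading the same edges from the two sides, $\sum_{j'} z^{i}_{j,j'}$ equals the total $E_i$-degree demanded by $U_j$ and $\sum_{j} z^{i}_{j,j'}$ equals that demanded by $V_{j'}$. For an exact entry $C_{i,j}$ this demand is the rigid quantity $X_j\,C_{i,j}$, while for an ``at least'' entry $C_{i,j}={\ugeq c}$ the vertices of $U_j$ may individually carry any number $\ge c$ of $E_i$-edges, so I would replace the rigid demand by a fresh variable required only to be $\ge X_j\,c$ and let the remaining slack be absorbed existentially; the $V$-side is treated symmetrically. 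Disjointness of the colours additionally forces the total-degree feasibility bounds: the sum of column $j$ of $C$ (reading ${\ugeq c}$ as $c$) cannot exceed $\sum_{j'} Y_{j'}$, and symmetrically for $D$, since the neighbours of any single vertex are distinct across all colours. Every genuine $(C,D)$-biregular graph supplies values for these variables, so the resulting existential Presburger formula is indeed necessary.

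The substance of the argument is the converse. Assuming the formula holds and all nonempty classes exceed a threshold $K=K(C,D)$, I would build the graph colour by colour. Having fixed the block counts $z^{i}_{j,j'}$ (integral solutions of the transportation system always exist, the constraint matrix being totally unimodular), within each block $(U_j,V_{j'})$ I would place the prescribed number of $E_i$-edges by a cyclic pattern, exactly as in the one-colour characterisation (C1); largeness guarantees that every vertex still retains enough free partners that the edges laid down for distinct colours can be kept pairwise disjoint. The ``at least'' entries are accommodated by first meeting the lower bounds exactly and then padding with extra edges into the ample spare capacity. This produces a $(C,D)$-biregular graph of the required size, so on the large regime the necessary conditions are also sufficient.

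Finally I would assemble the formula and confirm semilinearity. Because one coordinate may legitimately be small while others are arbitrarily large, I would stratify size-vectors according to which coordinates lie below $K$: there are finitely many strata, in each of which the small coordinates are frozen to concrete constants and the large coordinates range over the semilinear set cut out by the now-linear balance and feasibility conditions, while the fully bounded region contributes only a finite, hence semilinear, set of explicitly realizable vectors. The union over strata is semilinear, and by Ginsburg--Spanier it is defined by a Presburger formula $\bireg_{C,D}(\vX,\vY)$. I expect the main obstacle to be precisely the large-regime construction: one must simultaneously realize \emph{exact} per-vertex per-colour degrees, keep the colour classes edge-disjoint, and satisfy the ``at least'' entries, \emph{uniformly} across all sufficiently large size-vectors, while pinning down a threshold $K$ depending only on $C$ and $D$. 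Controlling the interaction between the transportation decomposition and the per-block cyclic construction, so that disjointness never fails, is the delicate point, and is presumably what occupies the dedicated proof in Section~\ref{s: proof1}.
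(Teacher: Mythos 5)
Your overall strategy --- linear counting conditions that are necessary in general and sufficient once the classes are large, plus a stratification that freezes the small coordinates --- is exactly the shape of the paper's argument in Section~\ref{s: proof1}. But the two places where you defer are precisely where the content of the proof lives, and as written neither is closed. First, the large-regime construction. The paper does not use per-block cyclic patterns: for a single row it builds the graph \emph{globally} across all blocks at once (expand the right-hand side into degree-one copies, merge them back round-robin, then eliminate the resulting parallel edges by edge swaps; Lemma~\ref{l: vd-biregular}), and for $\ell$ rows it proceeds by induction on $\ell$, superimposing a $(C-\vc_j,D-\vd_j)$-biregular graph and a $(\vc_j,\vd_j)$-biregular graph on the same vertex set and again using edge swaps to make the colour classes disjoint (Theorem~\ref{t: l-type vd-biregular}). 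Your transportation variables $z^{i}_{j,j'}$ introduce a coordination problem the paper's global construction avoids --- each vertex's \emph{exact} total $E_i$-degree must be split across blocks consistently with the exact per-vertex degrees on the $V$-side, which is not guaranteed by integrality of the block totals --- and ``largeness keeps the colours disjoint'' is an assertion, not an argument; the swap argument is what makes it true, and it needs support conditions of the form of Inequality~(\ref{eq: big1}), which are sums over the supports of individual rows rather than per-coordinate thresholds.

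Second, and more seriously, the mixed regime. You assert that in a stratum where some coordinates are frozen to small constants, the large coordinates ``range over the semilinear set cut out by the now-linear balance and feasibility conditions.'' That is unjustified: a frozen class contributes a bounded number of concrete vertices, each of which consumes an $E_i$-edge-slot from some specific large class, and \emph{which} large classes supply those slots changes the residual degree demands there. The paper introduces partial graphs and the column-expanded matrix $\xi(C)$ (Subsection~\ref{ss: l-type ugeq biregular}) exactly to track, for each frozen vertex $t$ and each colour $i$, how many vertices of each large class spend an $E_i$-edge on $t$; the formula existentially quantifies these counts and recurses on a strictly smaller frozen set (Lemma~\ref{l: partial-graph-step}, Theorem~\ref{t: presburger for partial-graph}). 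Without this bookkeeping your stratified formula is only necessary in the mixed strata, and ``dispose of the remaining small configurations by direct inspection'' does not apply, since those strata are infinite. The same auxiliary-column device (Lemma~\ref{l: ugeq biregular}, adding $\id_{\ell_2}$ and $\id_{\ell_3}$ blocks) is also what reduces the entries from ${\ugeqN}$ to exact entries --- a reduction your sketch gestures at with ``padding with extra edges'' but does not carry out, and which again requires a case split (the partition $I_0\cup I_1\cup I_2$ in Theorem~\ref{t: ugeq biregular}) according to which side of each row is in surplus.
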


Theorem~\ref{t: l-type biregular} 
is then generalised to the case of complete bipartite graphs.
An $\ell$-type bipartite graph 
$G = (U,V, E_1,\ldots,E_{\ell})$ is {\em complete},
if $U\times V = E_1\cup \cdots\cup E_{\ell}$.
If $G$ is also a $(C,D)$-biregular graph,
then we call it a $(C,D)$-complete-biregular graph.

The following theorem is the main result in this subsection
that will be used in the proof in Section~\ref{s: proof}.
\begin{theorem}
\label{t: main biregular}
For every two matrices $C \in \bbB^{\ell\times m}$ and $D \in \bbB^{\ell\times n}$,
there is a Presburger formula $\biregc_{C,D}(\vX,\vY)$,
where $\vX = (X_1,\ldots,X_m)$ and $\vY = (Y_1,\ldots,Y_n)$ such that
the following holds.
There exists a $(C,D)$-complete-biregular graph of size $(\vM,\vN)$
if and only if $\biregc_{C,D}(\vM,\vN)$ holds. 
\end{theorem}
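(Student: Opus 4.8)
The plan is to reduce Theorem~\ref{t: main biregular} to the already-available Theorem~\ref{t: l-type biregular}, exploiting the fact that completeness of an $\ell$-type bipartite graph is nothing but degree saturation: a graph $G=(U,V,E_1,\ldots,E_\ell)$ is complete if and only if $\deg(u)=|V|$ for every $u\in U$ and $\deg(v)=|U|$ for every $v\in V$. Indeed, since the $E_i$ are pairwise disjoint, each pair $(u,v)$ carries at most one coloured edge, so a vertex whose degree equals the size of the opposite side must be joined to all of it. The first step is the case where every entry of $C$ and $D$ lies in $\bbN$. Then every $u\in U_j$ has the fixed total degree $\sum_i C_{i,j}$ and every $v\in V_k$ the fixed total degree $\sum_i D_{i,k}$, so completeness is equivalent to biregularity conjoined with the linear saturation equations, and I would set
\[
\biregc_{C,D}(\vX,\vY)\ :=\ \bireg_{C,D}(\vX,\vY)\ \AND\ \bigwedge_{j=1}^{m}\Big(\sum_i C_{i,j}=\sum_k Y_k\Big)\ \AND\ \bigwedge_{k=1}^{n}\Big(\sum_i D_{i,k}=\sum_j X_j\Big).
\]
Correctness is immediate in both directions: the saturation equations are necessary for completeness, and conversely any biregular graph satisfying them already has every vertex joined to the whole opposite side.

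The real work is the presence of entries of the form ${\ugeq c}$, which decouple the individual colour-degrees from the column sums. Fix a column $j$ of $C$, let $a_j$ be the sum of its exact entries and let $T_j$ index its rows of the form ${\ugeq c}$; saturation now forces, for every $u\in U_j$, the identity $\sum_{i\in T_j}\deg_{E_i}(u)=\big(\sum_k Y_k\big)-a_j$, while each such $\deg_{E_i}(u)$ is still only bounded below. When $|T_j|\le 1$ the single free colour-degree is pinned to the exact value $(\sum_k Y_k)-a_j$ (guarded by the inequality that it be at least the prescribed lower bound), so all vertices of $U_j$ again receive identical colour-profiles. The troublesome columns are exactly those with $|T_j|\ge 2$, where saturation leaves each vertex an entire discrete simplex of admissible colour-degree vectors, and distinct vertices of $U_j$ may legitimately occupy different points of it.

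I expect this heterogeneity --- together with the fact that the saturated total $\sum_k Y_k$ is a \emph{variable} rather than a fixed matrix entry, so that Theorem~\ref{t: l-type biregular} cannot simply be invoked as a black box --- to be the main obstacle, and the place where essentially all of Section~\ref{s: proof1} is spent. My approach would be to abandon the hope of uniform per-part degrees (divisibility can forbid a uniform realization even when a heterogeneous one exists) and instead argue directly at the level of edges between parts: with the sizes held fixed, track for each colour $i$ and each pair $(U_j,V_k)$ how many $E_i$-edges run between them, impose the per-part degree requirements and the ${\ugeq c}$ lower bounds as above, and prove a realizability lemma to the effect that, once the sizes are ``big enough'', these requirements are not only necessary but sufficient to assemble an actual $(C,D)$-complete-biregular graph. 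The sufficiency direction is the crux, and I would establish it by an exchange/rerouting argument that shifts edges so as to repair the per-vertex lower bounds, exactly in the spirit of the elementary divisibility conditions (C1)--(C2) of the introduction.

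Finally, I would assemble $\biregc_{C,D}(\vX,\vY)$ from this analysis using only operations that preserve Presburger definability, so that no product of two size variables ever enters the final formula: there, completeness is recorded solely through the linear per-vertex saturation, while the combinatorial edge counts live inside the realizability lemma where the sizes are ordinary integers. Concretely, the formula is a finite disjunction over the finitely many ``boundary'' colour-profiles (the extreme points of the simplices attached to the columns with $|T_j|\ge 2$), an existential quantification over the auxiliary degree variables, and an explicit finite disjunction handling the ``small'' sizes excluded by the big-enough threshold. Since Presburger-definable sets are closed under union, projection and intersection, the resulting $\biregc_{C,D}(\vX,\vY)$ is again Presburger, as required.
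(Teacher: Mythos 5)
Your reduction of completeness to degree saturation ($\deg(u)=|V|$, $\deg(v)=|U|$, using disjointness of the $E_i$) is correct, and your formula for the all-$\bbN$ case works (modulo guarding the saturation equation for a part $U_j$ by $X_j>0$, since an empty part need not satisfy it). But for matrices with ${\ugeq}$-entries you have only located the difficulty, not resolved it: everything is deferred to an unproven ``realizability lemma,'' and the one concrete mechanism you offer for the Presburger encoding --- restricting the heterogeneous colour-profiles in a column with $|T_j|\ge 2$ to the extreme points of its simplex --- fails. For example, with two ${\ugeq 1}$ rows, $|V|=10$ and $|U_j|=3$, the extreme profiles are $(1,9)$ and $(9,1)$, so the total $E_1$-degree contributed by $U_j$ is $3+8b$ for some $b\in\{0,\ldots,3\}$, which can never equal the value $15$ that the opposite side may require; a valid completion using the interior point $(5,5)$ exists but is invisible to your disjunction. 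So the crux of the theorem is genuinely missing from the proposal.

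The paper avoids this entire analysis by a structural observation you did not find (Lemma~\ref{l: lower bound for complete bipartite}): if a $(C,D)$-biregular graph with all parts of size at least $\floor{C}\cdot\vone+\floor{D}\cdot\vone+1$ is complete, then for every pair of columns $(i,j)$ there is a row $l$ with \emph{both} $C_{l,i},D_{l,j}\in{\ugeq\bbN}$ --- otherwise the number of edges between $U_i$ and $V_j$ is at most $|U_i|\,\floor{C}\cdot\vone+|V_j|\,\floor{D}\cdot\vone<|U_i|\cdot|V_j|$. Call such a pair of matrices easy. For an easy pair, completeness costs nothing (Lemma~\ref{l: easy matrix vd-biregular}): take any $(C,D)$-biregular graph and join each non-adjacent pair $u\in U_i$, $v\in V_j$ by an $E_l$-edge for a shared ${\ugeq}$-row $l$; all constraints in that row are lower bounds, so biregularity is preserved. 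Hence for easy pairs $\biregc_{C,D}$ is just $\bireg_{C,D}$ from Theorem~\ref{t: l-type biregular}, and the non-easy columns force bounded part sizes, which are hard-coded via the partial graphs of Subsection~\ref{ss: l-type ugeq biregular}. In short: rather than characterising completeness by saturation and solving a new realizability problem, the paper shows completeness is equivalent to plain biregularity whenever the sizes are large, which is the idea your proposal needs but does not contain.
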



\subsection{Regular digraphs}
\label{ss: regular directed}


An $\ell$-type directed graph (or, digraph for short) 
is a tuple $G = (V,E_1,\ldots,E_{\ell})$,
where $E_1,\ldots,E_{\ell}$ are {\em pairwise disjoint} irreflexive relations on $V$
and for every $u,v \in V$,
if $(u,v) \in E_1 \cup \cdots \cup E_{\ell}$,
then the inverse direction $(v,u) \notin E_1 \cup \cdots \cup E_{\ell}$.
Edges in $E_i$ are called $E_i$-edges.

We will write $\indeg_{E_i}(u)$ to denote 
the number of incoming $E_i$-edges toward the vertex $u$, and
$\outdeg_{E_i}(u)$ 
the number of outgoing $E_i$-edges from the vertex $u$.
As before, for an integer $d \in \bbN$,
we write $\indeg_{E_i}(u) = {\ugeq d}$ and $\outdeg_{E_i}(u) = {\ugeq d}$,
to indicate that $\indeg_{E_i}(u) \geq d$ and
$\indeg_{E_i}(u) \geq d$, respectively.


Let $C,D \in \bbB^{\ell \times m}$.
An $\ell$-type digraph $G=(V,E_1,\ldots,E_{\ell})$ 
is $(C,D)$-regular-digraph,
if there exists a partition $V=V_1\cup\cdots\cup V_m$ 
such that for each $i=1,\ldots,\ell$,
for each $j = 1,\ldots,m$,
for each vertex $v \in V_j$,
$\indeg_{E_i}(v) = C_{i,j}$ and
$\outdeg_{E_i}(v) = D_{i,j}$.
We call $V_1\cup\cdots\cup V_m$ a witness of the $(C,D)$-regularity of $G$
and the graph $G$ is of size $\vN=(N_1,\ldots,N_m)$,
if $(N_1,\ldots,N_m)=(|V_1|,\ldots,|V_m|)$.

\begin{theorem}
\label{t: l-type direct-regular}
For every $C,D \in \bbB^{\ell\times m}$,
there exists a Presburger formula 
$\direg_{C,D}(\vX)$, where $\vX = (X_1,\ldots,X_m)$ such that the following holds.
There exists a $(C,D)$-regular-digraph of size $\vN$ 
if and only if
$\direg_{C,D}(\vN)$ holds.
\end{theorem}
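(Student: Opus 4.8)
The plan is to follow the template of Theorem~\ref{t: l-type biregular}, reducing the existence of a $(C,D)$-regular-digraph of size $\vN$ to the satisfiability of a Presburger formula over auxiliary edge-count variables. For each colour $i\in\{1,\dots,\ell\}$ and each ordered pair of parts $(j,k)$ (including $j=k$), I would introduce a variable $a_{i,j,k}\in\bbN$ counting the number of $E_i$-edges directed from $V_j$ to $V_k$. The per-vertex degree requirements then become the linear constraints $\sum_{k} a_{i,j,k}=D_{i,j}\,N_j$ for the out-degrees and $\sum_{j} a_{i,j,k}=C_{i,k}\,N_k$ for the in-degrees; these are genuinely linear in $\vN$ precisely because each entry $C_{i,j},D_{i,j}$ is a fixed constant. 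For entries of $\bbB$ that record an ``at least'' bound (those of the form ${\ugeq d}$) one simply replaces the corresponding equality by the inequality $\geq d\,N_j$. Note that summing each family of equations over all parts recovers, for every colour, the global balance $\sum_j C_{i,j}N_j=\sum_j D_{i,j}N_j$, so no separate balancing condition has to be imposed.

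The crux is a realizability lemma: there is a constant $K$, depending only on $C$ and $D$, such that whenever every nonempty part has size at least $K$, the linear constraints above are not merely necessary but also sufficient for a $(C,D)$-regular-digraph of size $\vN$ to exist. To prove sufficiency I would build the digraph explicitly, spreading the $a_{i,j,k}$ prescribed edges between $V_j$ and $V_k$ as evenly as possible by a circulant/rotational scheme on each part, so that every vertex attains exactly its prescribed in- and out-degree, while orienting the edges so as to respect irreflexivity and the anti-symmetry condition forbidding both $(u,v)$ and $(v,u)$. What makes this feasible for large parts is a counting observation in the spirit of (C1)--(C2): since the degrees are bounded constants, the number of edges demanded between two parts is at most $(\sum_i D_{i,j})N_j+(\sum_i D_{i,k})N_k$ and grows only linearly, whereas the number of available slots is $N_jN_k$ between two parts and $N_j(N_j-1)/2$ inside one part and grows quadratically; hence the capacity and anti-symmetry restrictions take care of themselves once the parts are large. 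I expect this step --- forcing every vertex to hit its degrees \emph{exactly} while orienting consistently --- to be the main obstacle, and it is exactly where the directed construction must go beyond the undirected one underlying Theorem~\ref{t: l-type biregular}.

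It then remains to assemble the formula and to absorb the small parts. Fixing a threshold $K$ as above, I would range over every subset $L\subseteq\{1,\dots,m\}$ of parts declared ``large'' and every assignment of values from $\{0,1,\dots,K-1\}$ to the remaining ``small'' parts. Once the small sizes are frozen to constants, every capacity bound such as $\sum_i(a_{i,j,k}+a_{i,k,j})\le N_jN_k$ degenerates either to a linear inequality (a constant times a single variable, when one of $N_j,N_k$ is a fixed small value) or to a vacuous one (when both parts are large, by the counting observation), so the surviving system is a Presburger condition on $\vN$ and the $a_{i,j,k}$; realizability among the finitely many small parts and their interaction with the large ones is a bounded check that can be tabulated. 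Taking the existential closure over the variables $a_{i,j,k}$ --- legitimate because Presburger-definable sets are closed under projection --- and then the finite disjunction over all patterns $L$ and all small-value assignments produces the required formula $\direg_{C,D}(\vX)$, effectively constructed. In this way the regular-digraph statement mirrors the biregular one almost verbatim, the only genuinely new ingredient being the orientation step inside the realizability lemma.
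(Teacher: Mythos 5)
Your architecture is genuinely different from the paper's. The paper does not attack the digraph case head-on: it observes (Section~\ref{s: proof2}) that a $(C,D)$-regular digraph of size $\vN$ is essentially the same object as a $(C,D)$-biregular \emph{bipartite} graph of size $(\vN,\vN)$ --- split each vertex $v$ into an in-copy and an out-copy to go one way, and merge the $j$-th in-copy with the $j$-th out-copy to come back, repairing the resulting self-loops and anti-parallel pairs by the usual edge-swap trick. This lets it inherit Theorem~\ref{t: l-type biregular} (hence all of Section~\ref{s: proof1}: the swap lemma, the $\ugeqN$ treatment, and the partial-graph machinery for small parts) wholesale, and the only new work is the loop/anti-parallel repair. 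You instead re-derive everything for digraphs directly, via edge-count variables $a_{i,j,k}$ and a from-scratch realizability lemma. The resulting Presburger characterisation would be the same (indeed, existential elimination of your $a_{i,j,k}$ collapses the system to the per-colour balance $\sum_j C_{i,j}N_j=\sum_j D_{i,j}N_j$, which is exactly what the paper's route produces for large parts), but you forgo the reduction that makes the paper's proof short.

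There are two concrete gaps in your sketch, and they sit exactly at the two places where the paper spends its effort. First, the realizability lemma: a circulant spreading of $a_{i,j,k}$ edges between $V_j$ and $V_k$ only equalises degrees up to $\pm 1$ within that block when $a_{i,j,k}$ is not a multiple of $N_j$ (resp.\ $N_k$); since the per-vertex requirement is that the degrees summed \emph{over all destination blocks} be exactly $D_{i,j}$, you must first solve an integer transportation problem assigning each vertex a degree vector $(d_{v,k})_k$ with row sums $D_{i,j}$ and column sums $a_{i,j,k}$, then realise each block for a non-constant degree sequence, then superpose the $\ell$ colours without collisions. Your ``linear demand vs.\ quadratic capacity'' count justifies simplicity and anti-symmetry but says nothing about hitting the degrees exactly; the paper's merge-degree-one-vertices-then-swap argument (Lemma~\ref{l: vd-biregular}, Theorem~\ref{t: l-type vd-biregular}) is precisely the missing machinery. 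Second, the small parts: freezing a small part to a constant size does \emph{not} reduce its interaction with the large parts to ``a bounded check that can be tabulated,'' because each edge from a small-part vertex into a large part consumes one unit of degree of an unbounded set of large-part vertices, thereby perturbing the degree constraints on the large side. This is why the paper introduces partial graphs and the augmented matrix $\xi(C)$ in Subsection~\ref{ss: l-type ugeq biregular}: the attachment pattern of each hard-coded vertex must itself be existentially quantified and threaded through the large-part constraints. Both gaps are fixable --- most economically by adopting the paper's split/merge reduction, after which only the self-loop and anti-parallel repair is new --- but as written your two hardest steps are asserted rather than proved.
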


Similar to Section~\ref{ss: biregular},
Theorem~\ref{t: l-type direct-regular} will be generalised
to the case of complete regular digraph.
An $\ell$-type graph $G = (V, E_1,\ldots,E_{\ell})$ is a {\em complete} digraph,
if for every two different vertices $u,v$,
either $(u,v)$ or $(v,u)$ is in $E_1\cup\cdots\cup E_{\ell}$.
If $G$ is also a $(C,D)$-regular,
then we call $G$ a $(C,D)$-complete-regular digraph.

The following theorem is the main result in this subsection
that will be used in the proof in Section~\ref{s: proof}.

\begin{theorem}
\label{t: main regular-digraph}
For every $C,D \in \bbB^{\ell\times m}$,
there exists a Presburger formula $\diregc_{C,D}(\vX)$, where $\vX = (X_1,\ldots,X_m)$
such that the following holds.
There exists a $(C,D)$-complete-regular digraph of size $\vN$
if and only if $\diregc_{C,D}(\vN)$ holds. 
\end{theorem}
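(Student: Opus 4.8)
The plan is to mirror the proof of Theorem~\ref{t: main biregular}: starting from the non-complete characterisation of Theorem~\ref{t: l-type direct-regular}, I would adjoin the constraints that encode completeness, and then show that above a computable size threshold these extra constraints collapse to linear, hence Presburger-definable, conditions. I first reformulate completeness combinatorially in terms of edge counts between the parts. Given a witness $V = V_1 \cup \cdots \cup V_m$ of $(C,D)$-regularity, I introduce for each colour $i$ and each ordered pair of parts $(j,k)$ a quantity $z_{i,j,k}$ counting the $E_i$-edges directed from $V_j$ to $V_k$, with $z_{i,j,j}$ counting the intra-part $E_i$-edges. The regularity requirement translates into the linear balance equations $\sum_k z_{i,j,k} = N_j \cdot D_{i,j}$ and $\sum_j z_{i,j,k} = N_k \cdot C_{i,k}$, where an equation involving a ${\ugeq d}$ entry is read as the corresponding inequality. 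Since $G$ is a digraph, the oriented edges between distinct parts are antisymmetric, and completeness asserts precisely that every unordered pair of distinct vertices is covered exactly once. This yields the saturation identities $\sum_i (z_{i,j,k} + z_{i,k,j}) = N_j N_k$ for $j \ne k$ and $\sum_i z_{i,j,j} = N_j(N_j-1)/2$.

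Next I would establish the realisability step: a digraph realising a prescribed family $(z_{i,j,k})$ with uniform per-vertex in- and out-degrees exists once the part sizes exceed a bound $K$ depending only on $C$ and $D$. The inter-part pieces are bipartite degree-sequence problems, handled exactly as in Section~\ref{s: proof1} for Theorem~\ref{t: main biregular}, whereas the intra-part pieces are tournament realisation problems. In both cases, sufficiently large parts provide enough room to distribute each colour's edges evenly, so that realisability reduces to the linear balance equations together with the capacity bounds $0 \le z_{i,j,k} \le N_j N_k$ and non-negativity.

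The main obstacle is the elimination of the quadratic product terms $N_j N_k$ and $N_j(N_j-1)/2$ introduced by the saturation identities, since these are not Presburger terms. I would resolve this precisely as in the biregular case: a counting argument over the colours shows that, above the threshold $K$, the quadratic saturation identities become automatically solvable in the $z_{i,j,k}$ whenever the purely linear total-degree conditions hold, namely $\sum_i (C_{i,j} + D_{i,j}) = \big(\sum_k N_k\big) - 1$ for every $j$, again read as an inequality for any column containing a ${\ugeq d}$ entry. Intuitively, once the in- and out-degrees are pinned and the parts are large, the slack in choosing the $z_{i,j,k}$ always suffices to meet the quadratic totals, so these totals no longer carry independent information beyond the linear degree conditions.

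I would therefore define $\diregc_{C,D}(\vX)$ as the conjunction of $\direg_{C,D}(\vX)$ with the linear total-degree conditions above, together with a finite disjunction enumerating explicitly the finitely many realisable sizes $\vN$ lying below $K$, which form a finite and hence Presburger-definable set. By the Ginsburg--Spanier theorem the resulting formula is Presburger, and by construction $\diregc_{C,D}(\vN)$ holds exactly when a $(C,D)$-complete-regular digraph of size $\vN$ exists. Carrying out these steps in the level of detail sufficient to confirm that the biregular machinery of Section~\ref{s: proof1} transfers to the tournament setting is what Section~\ref{s: proof2} sketches.
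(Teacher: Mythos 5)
Your overall strategy is genuinely different from the paper's: the paper proves Theorem~\ref{t: main regular-digraph} by reducing digraphs to bipartite graphs (splitting each vertex into an ``in'' copy and an ``out'' copy, and conversely merging and then removing self-loops and antiparallel pairs by edge swaps), so that the complete-biregular machinery of Section~\ref{s: proof1} --- in particular the easy-pair analysis of Lemmas~\ref{l: lower bound for complete bipartite} and~\ref{l: easy matrix vd-biregular} --- is inherited wholesale. You instead work directly with edge-count variables $z_{i,j,k}$ and try to replace the quadratic saturation identities by linear total-degree conditions. That replacement is where the proof breaks.

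The claim that, above a threshold, completeness is captured by $\direg_{C,D}(\vX)$ together with $\sum_i(\floor{C_{i,j}}+\floor{D_{i,j}})\le(\sum_k N_k)-1$ for columns containing a ${\ugeq d}$ entry is false. Take $\ell=2$, $m=1$, $C=\bigl(\begin{smallmatrix}{\ugeq 0}\\ 0\end{smallmatrix}\bigr)$ and $D=\bigl(\begin{smallmatrix}0\\ {\ugeq 0}\end{smallmatrix}\bigr)$. Every vertex must have $\outdeg_{E_1}=0$ and $\indeg_{E_2}=0$, so every $E_1$-edge lacks a source and every $E_2$-edge lacks a target: the only $(C,D)$-regular digraph on $N$ vertices is the empty one, which is complete only for $N\le 1$. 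Yet $\direg_{C,D}(N)$ holds for all $N$, and your total-degree condition reads $0\le N-1$, so your formula accepts every $N\ge 1$. The point is that a ${\ugeq d}$ entry licenses a degree \emph{deficit} relative to $N-1$ without licensing the \emph{insertion} of the missing edges: to complete the graph you must be able to cover each non-adjacent pair of parts $(j,k)$ by some colour $i$ and some orientation in which \emph{both} endpoints' entries are in $\ugeqN$ (an out-entry on one side matched with an in-entry on the other). This is exactly the ``easy pair'' condition that Lemma~\ref{l: lower bound for complete bipartite} shows to be forced for large complete graphs and that Lemma~\ref{l: easy matrix vd-biregular} uses to add the missing edges; your formula needs the digraph analogue of it as an additional (Presburger-expressible, indeed size-independent) conjunct, plus the partial-graph treatment of the columns that violate it. Your counting argument is sound in the special case where every column of $(C,D)$ is ${\ugeq}$-free, since then equality of the degree sum with $N-1$ does force completeness, but that case does not cover the theorem. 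The realisability of a prescribed $(z_{i,j,k})$-profile (the tournament-type intra-part step) is also asserted rather than proved, but that is a secondary concern next to the incorrect characterisation.
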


\section{Proof of Theorem~\ref{t: presburger for c2}}
\label{s: proof}

Now we are ready to prove Theorem~\ref{t: presburger for c2}.
Let $\phi$ be a $\QMLC$ sentence.
Recall that a {\em basic} $\QMLC$ formula is of the form $\exists^k \ \varphi$,
where $\varphi \in \MLC$.
We also assume that in $\phi$ we have ``pushed'' all the negations inside
so that they are applied only to basic $\QMLC$.
We are going to construct a Presburger formula $\preb_{\phi}(x)$
such that $\spectrum(\phi) = \{n \mid \preb_{\phi}(n) \ \mbox{holds}\}$.

Before we proceed, we need a few auxiliary notations.
Let $\cP$ be the set of unary predicates used in $\phi$
and $\cR =\{R_1,\ldots,R_{\ell},\rev{R}_1,\ldots,\rev{R}_{\ell}\}$ 
the set of binary relations used in $\phi$,
where $\rev{R}_i$ is the inverse relation of $R_i$.
Let $K$ be the integer such that for all 
subformulae $\Diamond_R^l \psi$ in $\phi$,
we have $l \leq K$.

We denote by $\cM_{\phi}$ the set of all $\MLC$ subformulae of $\phi$
and their negations. 
A {\em type} in $\phi$ is a subset $T \subseteq \cM_{\phi}$ such that
\begin{itemize}\itemsep=0pt
\item
if $\varphi_1\wedge \varphi_2 \in T$,
then both $\varphi_1,\varphi_2 \in T$;
\item
$\varphi\in T$ if and only if $\neg \varphi \notin T$;
\item
if $\neg (\varphi_1\wedge \varphi_2) \in T$,
then at least one of $\neg \varphi_1,\neg\varphi_2 \in T$.
\end{itemize}
For a structure $\stra$ (not necessarily a model of $\phi$) and an element $a\in A$,
we define {\em the type of $a$ in $\stra$}, 
denoted by $\tp_{\stra}(a) \subseteq \cM_{\phi}$,
where
$\varphi \in \tp_{\stra}(a)$ if and only if
$\stra,a \models \varphi$.
For a type $T$, 
we write $T(\stra)$ to denote the set of elements in $A$ with type $T$.
Note that the sets $T(\stra)$'s are pairwise disjoint.
We let $\cT_{\phi}$ to be the set of all types in $\phi$.

We say that a function $f\ : \ \cT_{\phi}\times\cR\times\cT_{\phi}
\to
\{0,1,\ldots,K\}
\ \cup \
\{{\ugeq K}\}$ is {\em consistent}, 
if for every $T \in \cT_{\phi}$ the following holds.
\begin{itemize}\itemsep=0pt
\item
If $\Diamond_R^l \ \mu \in T$, then
$\sum_{T' \ \mbox{\scriptsize s.t.} \ T' \ni \mu} \ f(T,R,T')  \geq  l$.
\item
If $\neg (\Diamond_R^l \ \mu) \in T$, then
$\sum_{T' \ \mbox{\scriptsize s.t.} \ T' \ni \mu} \ f(T,R,T')   \leq  l-1$,
and $f(T,R,T') \in \bbN$,
for every $R \in \cR$ and for every type $T' \ni \mu$.
\end{itemize}
In the following we enumerate the set of all consistent functions
$\cF = \{f_1,\ldots,f_m\}$, 
the set of all types in $\cT_{\phi}=\{T_1,\ldots,T_n\}$,
and the set $\cT_{\phi}\times\cF = \{(T_1,f_1),\ldots,(T_n,f_m)\}$.

The desired Presburger formula $\preb_{\phi}(x)$ is defined as
the formula:
\begin{eqnarray*}
& \exists X_{(T_1,f_1)}  \cdots \exists X_{(T_n,f_m)} &
\Bigg(x = \sum_{1\leq i \leq n} \ \ \sum_{1 \leq j \leq m} \ X_{T_i,f_j}\Bigg)
\; \wedge \; \prebatom_{\phi}(\vX)
\; \wedge\; 
\const(\vX)
\end{eqnarray*}
where $\vX = (X_{(T_1,f_1)},\ldots,X_{(T_n,f_m)})$ is
the vector of all the variables $X_{(T,f)}$'s.

The intended meaning of the variable $X_{T_i,f_j}$
and the formulas $\prebatom_{\phi}(x)$ and $\const(\vX)$ is as follows. 
The variable $X_{T_i,f_j}$ is to represent the number of elements
of type $T_i$ and for each binary relation $R \in \cR$
and a type $S \in \cT_{\phi}$,
there is $f(T_i,R,S)$ number of outgoing $R$-edges towards the elements of type $S$.

Naturally, the total number of all elements in the universe will be
the sum of all $X_{T_i,f_j}$'s, hence, the sum:
\begin{eqnarray*}
x & = & \sum_{1\leq i \leq n} \ \ \sum_{1 \leq j \leq m} \ \ X_{T_i,f_j}.
\end{eqnarray*}

The formula $\prebatom_{\phi}(x)$ is to make sure that
the satisfiability of the $\QMLC$ sentence is preserved.
Formally, it is defined inductively as follows.
(Recall that all the negations have been ``pushed'' inside
so that they are applied only to basic $\QMLC$.)
\begin{itemize}\itemsep=0pt
\item
If $\phi : = \exists^k \varphi$, 
then
$\prebatom_{\phi} := 
\sum_{(T,f) \ \mbox{\scriptsize s.t.} \ \varphi \in T} \ X_{(T,f)} \geq k$.

\item
If $\phi : = \neg \exists^k \varphi$, 
then
$\prebatom_{\phi} :=
\sum_{(T,f) \ \mbox{\scriptsize s.t.} \ \varphi \in T} \ X_{(T,f)} \leq k-1$.
\item
If $\phi := \phi_1\wedge \phi_2$,
then 
$\prebatom_{\phi} := 
\prebatom_{\phi_1}\wedge\prebatom_{\phi_2}$.
\item
If $\phi := \phi_1\vee \phi_2$,
then 
$\prebatom_{\phi} :=  
\prebatom_{\phi_1}\vee\prebatom_{\phi_2}$.
\end{itemize}

Finally, the formula $\const(\vX)$ is to makes sure that
the solution to each variable $X_{T_i,f_j}$ is ``consistent''
to the intended meaning of the type $T_i$ and function $f_j$.
That is, for every types $S,T \in \cT_{\phi}$ the following holds.
\begin{itemize}\itemsep=0pt
\item
Every solution $\vM_T$ to the variables $\vX_T = (X_{T,f_1},\ldots,X_{T,f_m})$
corresponds to a $(D_T,\rev{D}_T)$-complete-regular digraph of size $\vM_T$,
where the matrices $D_T,\rev{D}_T \in \bbB^{\ell\times m}$ are as follows.
\begin{eqnarray*}
D_T & := &
\left(
\begin{array}{cccc}
f_1(T,R_1,T) & f_2(T,R_1,T) & \cdots & f_m(T,R_1,T) 
\\
f_1(T,R_2,T) & f_2(T,R_2,T) & \cdots & f_m(T,R_2,T) 
\\
\vdots & \vdots & \ddots & \vdots
\\
f_1(T,R_{\ell},T) & f_2(T,R_{\ell},T) & \cdots & f_m(T,R_{\ell},T) 
\end{array}
\right)
\end{eqnarray*}
and
\begin{eqnarray*}
\rev{D}_T & := &
\left(
\begin{array}{cccc}
f_1(T,\rev{R}_1,T) & f_2(T,\rev{R}_1,T) & \cdots & f_m(T,\rev{R}_1,T) 
\\
f_1(T,\rev{R}_2,T) & f_2(T,\rev{R}_2,T) & \cdots & f_m(T,\rev{R}_2,T) 
\\
\vdots & \vdots & \ddots & \vdots
\\
f_1(T,\rev{R}_{\ell},T) & f_2(T,\rev{R}_{\ell},T) & \cdots & f_m(T,\rev{R}_{\ell},T) 
\end{array}
\right)
\end{eqnarray*}
Notice that the matrix $D_{T}$ contains only
the information on the degree of $R_1,\ldots,R_{\ell}$,
while $\rev{D}_T$ the information on the degree of $\rev{R}_1,\ldots,\rev{R}_{\ell}$.
This is because the incoming $R_i$ edges to an element $v$
are precisely the outgoing $\rev{R}_i$ edges from $v$,
and vice versa, the incoming $\rev{R}_i$ edges from an element $v$
are precisely the outgoing $R_i$ edges to $v$.


\item
Every solution $\vM_S,\vM_T$ to the variables 
$\vX_S=(X_{S,f_1},\ldots,X_{S,f_m})$ and $\vX_T = (X_{T,f_1},\ldots,X_{T,f_m})$
corresponds to a $(D_{S\to T},\rev{D}_{S\to T})$-complete-biregular digraph 
of size $(\vM_S,\vM_T)$,
where the matrices $D_{S\to T},\rev{D}_{S\to T} \in \bbB^{\ell\times m}$ are as follows.
\begin{eqnarray*}
D_{S\to T} & := &
\left(
\begin{array}{cccc}
f_1(S,R_1,T) & f_2(S,R_1,T) & \cdots & f_m(S,R_1,T) 
\\
f_1(S,R_2,T) & f_2(S,R_2,T) & \cdots & f_m(S,R_2,T) 
\\
\vdots & \vdots & \ddots & \vdots
\\
f_1(S,R_{\ell},T) & f_2(S,R_{\ell},T) & \cdots & f_m(S,R_{\ell},T) 
\\
f_1(S,\rev{R}_1,T) & f_2(S,\rev{R}_1,T) & \cdots & f_m(S,\rev{R}_1,T) 
\\
f_1(S,\rev{R}_2,T) & f_2(S,\rev{R}_2,T) & \cdots & f_m(S,\rev{R}_2,T) 
\\
\vdots & \vdots & \ddots & \vdots
\\
f_1(S,\rev{R}_{\ell},T) & f_2(S,\rev{R}_{\ell},T) & \cdots & f_m(S,\rev{R}_{\ell},T) 
\end{array}
\right)
\end{eqnarray*}
and
\begin{eqnarray*}
\rev{D}_{S\to T} & := &
\left(
\begin{array}{cccc}
f_1(T,\rev{R}_1,S) & f_2(T,\rev{R}_1,S) & \cdots & f_m(T,\rev{R}_1,S) 
\\
f_1(T,\rev{R}_2,S) & f_2(T,\rev{R}_2,S) & \cdots & f_m(T,\rev{R}_2,S) 
\\
\vdots & \vdots & \ddots & \vdots
\\
f_1(T,\rev{R}_{\ell},S) & f_2(T,\rev{R}_{\ell},S) & \cdots & f_m(T,\rev{R}_{\ell},S) 
\\
f_1(T,R_1,S) & f_2(T,R_1,S) & \cdots & f_m(T,R_1,S) 
\\
f_1(T,R_2,S) & f_2(T,R_2,S) & \cdots & f_m(T,R_2,S) 
\\
\vdots & \vdots & \ddots & \vdots
\\
f_1(T,R_{\ell},S) & f_2(T,R_{\ell},S) & \cdots & f_m(T,R_{\ell},S) 
\end{array}
\right)
\end{eqnarray*}
Notice that in the matrix $D_{S\to T}$
the first $\ell$ rows contains 
the information on the degree of $R_1,\ldots,R_{\ell}$,
and the last $\ell$ rows 
the information on the degree of $\rev{R}_1,\ldots,\rev{R}_{\ell}$
{\em from the type $S$ to the type $T$};
while in the matrix $\rev{D}_{S\to T}$ it is the opposite
and the direction is {\em from the type $T$ to the type $S$}.
Similar as in the $D_{T},\rev{D}_t$ case above,
this is because the incoming $R_i$ edges to an element $v$
are precisely the outgoing $\rev{R}_i$ edges from $v$,
and vice versa, the outgoing $R_i$ edges from an element $v$
are precisely the incoming $\rev{R}_i$ edges to $v$.
\end{itemize}
Now the formula $\const(\vX)$ is simply the conjunction:
\begin{eqnarray*}
\mbox{\hspace{1.0 cm}}\const(\vX) & := &
\bigwedge_{1 \leq i \leq n} \diregc_{D_{T_i},\srev{D}_{T_i}}(\vX_T)
\\
& & 
\wedge \;
\bigwedge_{1 \leq j < i \leq n}
\
\biregc_{D_{T_i\to T_j},\srev{D}_{T_i\to T_j}}(\vX_{T_i},\vX_{T_j})
\end{eqnarray*}
where 
$\vX_{T_i} = (X_{(T_i,f_1)},\ldots,X_{(T_i,f_m)})$
is the vector of variables associated with the type $T_i$.

We are going to show that $\preb_{\phi}$ defines precisely the spectrum of $\phi$,
as stated in the claim below.
Abusing the notation, we let $\preb_{\phi}$ itself to denote
the set $\{n \mid \preb_{\phi}(n)\ \mbox{holds}\}$.
Recall also that as defined in Section~\ref{s: qmlc},
the spectrum of a $\QMLC$ sentence $\phi$ is restricted
to the complete structures.

\begin{claim}
For every $\QMLC$ sentence $\phi$,
$\preb_{\phi}=\spec(\phi)$, where $\preb_{\phi}(x)$ is
the formula
\begin{eqnarray*}
& \exists X_{(T_1,f_1)}  \cdots \exists X_{(T_n,f_m)} &
\Bigg(x = \sum_{1\leq i \leq n} \ \ \sum_{1 \leq j \leq m} \ X_{T_i,f_j}\Bigg)
\; \wedge \; \prebatom_{\phi}(\vX)
\; \wedge\; 
\const(\vX)
\end{eqnarray*}
and $\vX = (X_{(T_1,f_1)},\ldots,X_{(T_n,f_m)})$ is
the vector of all the variables $X_{(T,f)}$'s.
\end{claim}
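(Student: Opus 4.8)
The plan is to establish the two inclusions of the set equality $\preb_{\phi}=\spec(\phi)$ separately, regarding $\preb_{\phi}$ as the set of $n$ for which the displayed formula holds. In both directions the bridge between the combinatorial data carried by the variables $X_{(T,f)}$ (types together with out-edge profiles) and the existence of the matching regular/biregular graphs is supplied by Theorems~\ref{t: main biregular} and~\ref{t: main regular-digraph}; the rest is reading this data off a model, or building a model from it.

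For the inclusion $\spec(\phi)\subseteq\preb_{\phi}$, I would start from a complete structure $\stra\models\phi$ with $|A|=n$ and exhibit a witnessing assignment $\vM=(M_{(T,f)})$. Each $a\in A$ determines its type $T_a=\tp_{\stra}(a)$ and an out-edge profile recording, for every $R\in\cR$ and every type $T'\in\cT_{\phi}$, the number of $R$-edges from $a$ into $T'(\stra)$, thresholded at $K$ (so a count $\geq K$ is recorded as ${\ugeq K}$). Because $a$ realises $T_a$, this profile satisfies the consistency conditions on the $T_a$-row, so it agrees with the $T_a$-row of some consistent $f\in\cF$; setting $M_{(T,f)}$ to be the number of elements of type $T$ whose profile is the $T$-row of $f$ then gives $n=\sum_{i,j}M_{(T_i,f_j)}$. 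The conjunct $\prebatom_{\phi}(\vM)$ holds by induction on the negation-normalised $\QMLC$ sentence: a basic formula $\exists^{k}\varphi$ holds in $\stra$ iff at least $k$ elements carry $\varphi$ in their type, which is exactly $\sum_{(T,f):\varphi\in T}M_{(T,f)}\geq k$, and the Boolean cases are immediate. Finally $\const(\vM)$ holds because, for each type $T$, the restriction of $\stra$ to $T(\stra)$ with the partition by profile is a $(D_{T},\rev{D}_{T})$-complete-regular digraph of size $\vM_T$, and for each pair $S,T$ the bipartite restriction is a $(D_{S\to T},\rev{D}_{S\to T})$-complete-biregular graph of size $(\vM_S,\vM_T)$; Theorems~\ref{t: main regular-digraph} and~\ref{t: main biregular} then yield the required instances of $\diregc$ and $\biregc$.

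For the reverse inclusion $\preb_{\phi}\subseteq\spec(\phi)$, I would fix a solution $\vM$ of the formula and construct a complete structure of size $n=\sum M_{(T,f)}$. First partition a set $A$ of $n$ elements into blocks $B_{(T,f)}$ with $|B_{(T,f)}|=M_{(T,f)}$, declaring each element of $B_{(T,f)}$ to have \emph{intended} type $T$ and intended profile the $T$-row of $f$, and interpret each unary $P$ so that $P(a)$ holds iff $P\in T$. Since $\const(\vM)$ holds, Theorem~\ref{t: main regular-digraph} supplies, for every type $T$, a $(D_{T},\rev{D}_{T})$-complete-regular digraph on $\bigcup_f B_{(T,f)}$ whose witness blocks have sizes $\vM_T$; identifying those blocks with the $B_{(T,f)}$ defines the within-type $R$-edges. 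Likewise, for each pair $T_j,T_i$ with $j<i$, Theorem~\ref{t: main biregular} supplies a complete-biregular graph defining the between-type edges. As each component graph is complete and every pair of distinct elements lies in exactly one such component, the resulting $\stra$ is complete and its $R$- and $\rev{R}$-edges respect (N1)--(N4) by construction.

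The crux, and the step I expect to be the main obstacle, is to check that in $\stra$ the \emph{actual} type of every element coincides with its intended type; once this holds, $\stra\models\phi$ follows from $\prebatom_{\phi}(\vM)$ exactly as in the forward direction. I would prove $\mu\in T\iff\stra,a\models\mu$ for every $a\in B_{(T,f)}$ by induction on $\mu\in\cM_{\phi}$ ordered by modal depth. The unary base case is the interpretation of the predicates, and the Boolean cases use the closure properties in the definition of a type. For the modal case $\Diamond^{l}_{R}\mu$, the graphs chosen above make the number of $R$-edges from $a$ into the block $B_{(T',f)}$ equal to $f(T,R,T')$, so by the induction hypothesis for $\mu$ the number of $R$-neighbours of $a$ satisfying $\mu$ is $\sum_{T'\ni\mu}f(T,R,T')$; the consistency conditions on $f$ force this sum to be $\geq l$ precisely when $\Diamond^{l}_{R}\mu\in T$, where in the negative case the additional requirement that the entries be genuine naturals (not ${\ugeq K}$ values) is exactly what makes ``$\leq l-1$'' an honest upper bound. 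The delicate bookkeeping is the treatment of the threshold entries ${\ugeq K}$: since every $l$ occurring in $\phi$ satisfies $l\leq K$, an out-degree recorded as ${\ugeq K}$ always certifies $\Diamond^{l}_{R}\mu$, while exact entries pin down the negative instances. With this handled, intended and actual types agree, and both inclusions are complete.
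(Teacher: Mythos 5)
Your proposal is correct and follows essentially the same route as the paper: both directions are handled by partitioning elements according to (type, out-edge profile), invoking Theorems~\ref{t: main biregular} and~\ref{t: main regular-digraph} to pass between solutions of $\const$ and the corresponding complete regular/biregular graphs, and verifying that intended and realised types coincide via the consistency conditions on $f$. The only difference is organisational (you run a sub-induction for $\prebatom_{\phi}$ inside each inclusion, whereas the paper frames the whole argument as an induction on $\phi$ with basic sentences as the base case), and your explicit handling of the ${\ugeq K}$ thresholding is a welcome elaboration of a point the paper leaves implicit.
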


The proof is by induction on $\phi$.
The base case is when $\phi$ is a basic $\QMLC$ sentence or the negation of a basic $\QMLC$ sentence.
We consider first the case when $\phi$ is a basic $\QMLC$ formula
of the form $\exists^k \ \varphi$, where $\varphi \in \MLC$.
In this case $\prebatom_{\phi}(\vX)$ is
$\sum_{(T,f) \ \mbox{\scriptsize s.t.} \ \varphi \in T} \ X_{(T,f)} \geq k$.
 
We first show the direction $\preb_{\phi}\subseteq \spec(\phi)$.
Let $N \in \preb_{\phi}$.
Let $\vM = (M_{T_1,f_1},\ldots,M_{T_n,f_m})$ be the
witnesses to $\vX$ such that $\preb_{\phi}(N)$ holds.
In the following we are going to write
$\vM_T$ to denote $(M_{T,f_1},\ldots,M_{T,f_m})$
for every type $T\in \cT_{\phi}$.

Since $x = \sum_{(T,f)} X_{T,f}$,
we have $N = \sum_{(T,f)} M_{(T,f)}$.
We take a set $V$ of $N$ vertices
and we partition $V = V_{(T_1,f_1)}\cup\cdots\cup V_{(T_n,f_m)}$
such that $|V_{(T,f)}| = M_{(T,f)}$ for each $T\in \cT_{\phi}$ and $f\in \cF$.
We denote by $V_T = V_{(T,f_1)}\cup\cdots\cup V_{(T,f_m)}$
for each $T \in \cT_{\phi}$.

Since $\const(\vM)$ holds,
by Theorems~\ref{t: main regular-digraph},
for each $T \in \cT_{\phi}$,
there exists a $(D_T,\rev{D}_T)$-complete-regular digraph 
$G_T = (V_T,R_{T,1},\ldots,R_{T,\ell})$ of size $\vM_T$,
with $V_T = V_{(T,f_1)}\cup\cdots\cup V_{(T,f_m)}$
be the witness of the $(D_T,\rev{D}_T)$-regularity.
This means that for every vertex $v \in V_{T,f_i}$,
for every $R \in \{R_1,\ldots,R_{\ell}\}$,
\begin{itemize}\itemsep=0pt
\item
$\outdeg_R(v)$ in the graph $G_T$ is $f_i(T,R,T)$;
\item
$\indeg_R(v)$ in the graph $G_T$ is $f_i(T,\rev{R},T)$.
\end{itemize}
Now let $\tilde{G}_T = (V_T, R_{T,1},\ldots,R_{T,\ell},
\rev{R}_{T,1},\ldots,\rev{R}_{T,\ell})$
be the graph obtained by taking $\rev{R}_i$
as the inverse of $R_i$.
Then for each vertex $v \in V_T$,
\begin{itemize}\itemsep=0pt
\item
$\outdeg_R(v) = \indeg_{\srev{R}}(v)$ in the graph $\tilde{G}_T$;
\item
$\indeg_R(v) = \outdeg_{\srev{R}}(v)$ in the graph $\tilde{G}_T$.
\end{itemize}

Similarly, by Theorem~\ref{t: main biregular}
for each $T_i,T_j\in \cT_{\phi}$, where $j \leq i-1$,
there exists a $(D_{T_i\to T_j},\rev{D}_{T_i\to T_j})$-biregular-complete graph 
\[
G_{T_i,T_j} = (V_{T_i},V_{T_j},R_{T_i,T_j,1},\ldots,R_{T_i,T_j,\ell},
\rev{R}_{T_i,T_j,1},\ldots,\rev{R}_{T_i,T_j,\ell})
\]
of size $(\vM_S,\vM_T)$,
with $V_{T_i} = V_{(T_i,f_1)}\cup\cdots\cup V_{(T_i,f_m)}$
and $V_{T_j} = V_{(T_j,f_1)}\cup\cdots\cup V_{(T_j,f_m)}$
be the witness of the $(D_{T_i\to T_j},\rev{D}_{T_i\to T_j})$-biregularity.
This means that 
for every $R \in \{R_1,\ldots,R_{\ell},\rev{R}_1,\ldots,\rev{R}_{\ell}\}$,
\begin{itemize}\itemsep=0pt
\item
for every vertex $v \in V_{T_i,f}$,
$\outdeg_R(v)$ in the graph $G_{T_i,T_j}$ is $f(T_i,R,T_j)$;
\item
for every vertex $v \in V_{T_j,f}$,
$\outdeg_R(v)$ in the graph $G_{T_i,T_j}$ is $f(T_i,\rev{R},T_j)$.
\end{itemize}
We put the orientation in every the edges in the graph $G_{T_i,T_j}$
going from $V_{T_i}$ to $V_{T_j}$.
Now let $\tilde{G}_{T_i,T_j}$
be the graph obtained by adding $(u,v)$ into $\rev{R}$ in the graph $G_{T_i,T_j}$ 
whenever $(v,u)$ is an $R$-edge in $G_{T_i,T_j}$.

Hence, we have for each vertex $v \in V_{T_i} \cup V_{T_j}$, for each $R \in \Rels$
\begin{itemize}\itemsep=0pt
\item
$\outdeg_R(v) = \indeg_{\srev{R}}(v)$ in the graph $\tilde{G}_{T_i,T_j}$;
\item
$\indeg_R(v) = \outdeg_{\srev{R}}(v)$ in the graph $\tilde{G}_{T_i,T_j}$.
\end{itemize}
Let $G= (V, R_1,\ldots,R_{\ell},\rev{R}_1,\ldots,\rev{R}_{\ell})$ 
be the combination of all the graphs $\tilde{G}_{T_i}$'s and $\tilde{G}_{T_i,T_j}$'s.
Formally,
\begin{eqnarray*}
V & = & \bigcup_{T} V(\tilde{G}_t)
\\
R & = & \bigcup_{T} R(\tilde{G}_T) 
\ \ \cup \ \ \bigcup_{T_i,T_j} R(\tilde{G}_{T_i,T_j}) 
\qquad \mbox{for each} \ R \in \{R_1,\ldots,R_{\ell},\rev{R}_1,\ldots,\rev{R}_{\ell}\}
\end{eqnarray*}
Moreover, we also label each vertex $v \in V$ with a subset of $\cS$ as follows.
For each $T \in \cT_{\phi}$,
for each $v \in V_T$,
we ``declare'' that $v$ is labeled with a unary predicate $P\in \cS$
if and only if $P \in T$. 

We claim that $G \models \phi$.
For that it is sufficient to show that for
each $T \in \cT_{\phi}$,
for each $v \in V_T$,
$\tp_G(v) = T$.
The proof is divided into three cases.
\begin{itemize}\itemsep=0pt
\item
For each unary predicate $P \in \cP$,
it is by our labelling of the vertices of $G$ that
$P(v)$ holds in $G$ if and only if $P \in T$.
\item
For each $\Diamond_R^l \ \mu \in T$, we have
\begin{eqnarray*}
\sum_{T' \ \mbox{\scriptsize s.t.} \ T' \ni \mu} \ f(T,R,T')   &\geq &  l
\end{eqnarray*}
number of outgoing $R$-edges from $v$.
Since every function $f \in \cF$ is consistent,
$\Diamond_R^l \ \mu \in \tp_G(v)$.

\item
Similary, for each $\Diamond_R^l \ \mu \notin T$, 
and hence $\neg (\Diamond_R^l \ \mu) \in T$, 
we have
\begin{eqnarray*}
\sum_{T' \ \mbox{\scriptsize s.t.} \ T' \ni \mu} \ f(T,R,T') & \leq &  l-1
\end{eqnarray*}
number of outgoing $R$-edges from $v$.
Since every function $f \in \cF$ is consistent,
$\Diamond_R^l \ \mu \notin \tp_G(v)$.
\end{itemize}
Therefore the graph $G \models \phi$,
and hence $N \in \spectrum(\phi)$.

Now we prove the direction $\preb_{\phi}\supseteq \spec(\phi)$.
Suppose $\stra \models \phi$ and $\stra$ is of size $N$.
Let $\vM = (M_{(T_1,f_1)},\ldots,M_{(T_n,f_m,)})$
where $M_{(T,f)}$ be the number of elements of type $T$
from which there exist $f(T,R,S)$ number of outgoing $R$-edges
towards the elements of type $S$.
Take each $M_{(T,f)}$ to be the witness for $X_{(T,f)}$
for each $T \in \cT_{\phi}$ and $f\in \cF$.
It immediately follows from 
Theorems~\ref{t: main biregular} and~\ref{t: main regular-digraph}
that $\const(N,\vM)$ holds.
Moreover, $\prebatom_{\phi}(\vM)$ holds, since $\stra \models \phi$.
This completes the proof of $\preb_{\phi}=\spec(\phi)$,
when $\phi$ is a basic $\QMLC$ sentence.

When $\phi \in \QMLC$ is the negation of a basic $\QMLC$ sentence,
say $\neg \exists^k \varphi$,
the formula $\prebatom_{\phi}$ is
\begin{eqnarray*}
\sum_{(T,f) \ \mbox{\scriptsize s.t.} \ \varphi \in T} \ X_{(T,f)} & \leq & k-1
\end{eqnarray*}
which is the negation of 
\begin{eqnarray*}
\sum_{(T,f) \ \mbox{\scriptsize s.t.} \ \varphi \in T} \ X_{(T,f)} & \geq & k
\end{eqnarray*}
Then $\preb_{\phi} = \spec(\phi)$ follows immediately from above.

The correctness for the case 
when $\phi$ is $\phi_1\wedge \phi_2$ or $\phi_1\vee \phi_2$,
can be established via straightforward inductive argument.
This completes our proof that $\preb_{\phi}=\spec(\phi)$,
and hence, Theorem~\ref{t: presburger for c2}.

\section{Proof of Theorem~\ref{t: main biregular}}
\label{s: proof1}

The proof of Theorem~\ref{t: main biregular} is rather long.
As a warm-up, we prove the following easy Proposition~\ref{p: d-biregular} first.

\begin{proposition}
\label{p: d-biregular}
Let $c,d\geq 0$. For every $M,N \in \bbN$,
the following holds.
\begin{enumerate}[(a)]\itemsep=0pt
\item
There exists a $(c,d)$-biregular graph of size $(M,N)$ 
if and only if $N\geq c$, $M\geq d$ and $M\cdot c = N\cdot d$.
\item
There exists a $(c,{\ugeq d})$-biregular graph of size $(M,N)$  
if and only if $M\geq d$, $N \geq c$ and $Mc \geq Nd$.
\item
There exists a $({\ugeq c},{\ugeq d})$-biregular graph 
of size $(M,N)$ if and only if $M\geq d$, $N \geq c$.
\end{enumerate}
\end{proposition}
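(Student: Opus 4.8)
The plan is to prove each of the three equivalences by establishing necessity through elementary counting and degree arguments, and sufficiency through an explicit construction. The main workhorse will be a single cyclic construction that simultaneously settles parts (a) and (b); part (c) will then follow from an even simpler observation. Throughout I would first dispose of the degenerate cases in which one of $c,d,M,N$ equals $0$ (these are routine: an edgeless graph or an empty part settles them), and afterwards assume $c,d,M,N \geq 1$.

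For necessity, in all three parts the condition $N \geq c$ comes from the fact that a $U$-vertex of degree $c$ (or at least $c$) has $c$ distinct neighbours in $V$, forcing $|V| = N \geq c$; symmetrically $M \geq d$ follows from the $V$-side degrees. The arithmetic conditions come from double counting the edges: the number of edges equals $\sum_{u\in U}\deg(u)$ from the $U$-side and $\sum_{v\in V}\deg(v)$ from the $V$-side. In part (a) both sums are exact, giving $Mc = Nd$; in part (b) the $U$-side is exact while the $V$-side is only a lower bound, giving $Mc = \sum_{v}\deg(v) \geq Nd$; in part (c) there is no arithmetic constraint, only the two degree bounds. I would also note that, for $c,d\geq 1$, the conditions $N\geq c$ and $Mc=Nd$ already force $M\geq d$, so the stated conditions, while jointly correct, are not independent.

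For sufficiency in (a) and (b) I would use the following cyclic layout. Label $U=\{u_0,\ldots,u_{M-1}\}$ and $V=\{v_0,\ldots,v_{N-1}\}$, lay out $Mc$ ordered slots $0,1,\ldots,Mc-1$, and let slot $t$ be the edge $\{u_{\lfloor t/c\rfloor}, v_{t \bmod N}\}$. Then each $u_i$ is incident to the slots $ci,\ldots,ci+c-1$, hence to the $c$ consecutive residues $v_{ci\bmod N},\ldots,v_{(ci+c-1)\bmod N}$; these are distinct precisely because $c\leq N$, so $\deg(u_i)=c$ exactly and the graph is simple. On the $V$-side, $v_j$ is incident to exactly those slots $t\in\{0,\ldots,Mc-1\}$ with $t\equiv j \pmod N$, of which there are $\lfloor Mc/N\rfloor$ or $\lceil Mc/N\rceil$. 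In case (a) the identity $Mc=Nd$ makes this count exactly $d$ for every $j$, yielding a $(c,d)$-biregular graph; in case (b) the inequality $Mc\geq Nd$ with $d$ an integer gives $\lfloor Mc/N\rfloor\geq d$, so every $v_j$ has degree at least $d$, yielding a $(c,{\ugeq d})$-biregular graph. Since a $U$-vertex's neighbours are already distinct, no multi-edges arise and the $V$-degrees count genuinely distinct neighbours.

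Part (c) is then immediate: given $N\geq c$ and $M\geq d$, the complete bipartite graph on $U\cup V$ assigns every $U$-vertex degree $N\geq c$ and every $V$-vertex degree $M\geq d$, so it is $({\ugeq c},{\ugeq d})$-biregular. I expect the only real care to lie in verifying the cyclic construction — specifically, checking that reading the $Mc$ slots modulo $N$ distributes the $V$-degrees evenly between $\lfloor Mc/N\rfloor$ and $\lceil Mc/N\rceil$, and that the exactness $Mc=Nd$ in (a) versus the inequality in (b) is precisely what converts ``every residue hit $d$ times'' into the respective degree guarantees. Tying the no-multi-edge condition to $N\geq c$, and confirming the handful of degenerate cases, are the remaining routine checks.
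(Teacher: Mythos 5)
Your proof is correct and is essentially the paper's own argument: your cyclic slot assignment $t \mapsto \{u_{\lfloor t/c\rfloor}, v_{t\bmod N}\}$ is exactly the paper's construction of $M$ stars with $Mc$ degree-one right-hand leaves followed by merging those leaves in arithmetic progressions modulo $N$, with the same use of $c\le N$ to exclude parallel edges and of $Mc=Nd$ (resp.\ $Mc\ge Nd$) to obtain the right-hand degrees in parts (a) and (b). The only divergence is part (c), where you take the complete bipartite graph directly (every $U$-vertex gets degree $N\ge c$, every $V$-vertex degree $M\ge d$) instead of the paper's case split into a $(c,{\ugeq d})$- or $({\ugeq c},d)$-biregular graph via part (b); your route is, if anything, slightly cleaner.
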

\begin{proof}
Let $c,d\geq 0$, and let $M,N \in \bbN$.
We first prove part~(a).
The ``only if'' direction follows from the fact that
in $(c,d)$-biregular graph
the number of edges is precisely $M c=N d$.
That $M \geq d$ and $N \geq c$ is straightforward.

The ``if'' direction is as follows.
Let $K = Mc=Nd$.
Suppose also $M\geq d$ and $N \geq c$.

First, we construct the following graph.


{\footnotesize
\begin{picture}(200,40)(-110,0)

\put(-80,18){
$\left\{\begin{array}{c}
\\\\\\\\\\\\\\\\  \end{array}\right.$}
\put(-96,20){$M$ vertices}
\put(-100,16){Each of degree $c$}

\put(-75,32){$u_{1}$}
\put(-70,30){\circle*{1}}
\put(-70,30){\line(6,1){35}}
\put(-35,36){\circle*{1}}
\put(-34,36){$v_1$}
\put(-35,29){$\vdots$}
\put(-70,30){\line(6,-1){35}}
\put(-35,24){\circle*{1}}
\put(-34,24){$v_{c}$}

\put(-70,17){$\vdots$}
\put(-35,17){$\vdots$}

\put(-75,9){$u_{M}$}
\put(-70,7){\circle*{1}}
\put(-70,7){\line(6,1){35}}
\put(-35,13){\circle*{1}}
\put(-34,13){$v_{K-c+1}$}
\put(-35,6){$\vdots$}
\put(-70,7){\line(6,-1){35}}
\put(-35,1){\circle*{1}}
\put(-34,1){$v_K$}

\put(-25,18){$\left.\begin{array}{c}
\\\\\\\\\\\\\\\\\\\\\\\\ \end{array}\right\} K$ vertices}
\put(-17,14){Each of degree $1$}

\end{picture}
}


On the left side, 
we have $M$ vertices, and each has degree $c$.
On the right side,
we have $K=Nd$ vertices, and each has degree $1$.
We are going to merge every $d$ vertices on the right side into one vertex of degree $d$.
The merging is as follows.
We merge every $d$ vertices $v_i,v_{i+N},\ldots,v_{i+(d-1)N}$
into one node for every $i=1,\ldots,N$.
Since $K=Nd$, it is possible to do such merging.
Moreover, $N \geq c$, hence 
we do not have multiple edges between two vertices.
Thus, we obtain the desired $(c,d)$-biregular graph of size $(M,N)$.\footnote{Note that 
since $Mc=Nd$, $N\geq c$ already implies $M\geq d$.
That is why it is not necessary to use the fact that $M\geq d$.
Of course, by symmetry, we can first build a bipartite graph
in which the left side has $N$ vertices of degree $d$ and the right side has
$K=Mc$ vertices of degree $1$. Then to build the desired $(c,d)$-biregular graph,
we make use of the fact $M \geq d$.}

Now we consider part~(b).
The ``only if'' direction follows from the fact that in $(c,{\ugeq d})$-biregular graph
the number of edges is precisely $Mc$, 
which should be greater than $Nd$.
That $M\geq d$ and $N\geq c$ is straightforward.

For the ``if'' direction,
the proof is almost the same as above.
Suppose $M\geq d$, $N \geq c$ and $M c \geq Nd$.
Let $K = Mc$.
We first construct the bipartite graph, in which
on the left side, 
we have $M$ vertices, each of which has degree $c$;
and on the right side
we have $K=Mc$ vertices, each of which has degree $1$.

For every $i \in \{1,\ldots,N\}$,
we set the set $I_i \subseteq \{1,\ldots,K\}$ as follows. 
\[
I_i := \{i + k N \mid 1\leq i+kN \leq K\}.
\]
Since $K=Mc \geq Nd$,
the cardinality $|I_i| \geq d$ for every $1\leq i \leq N$.
Now for every $1\leq i \leq N$,
we merge the vertices $\{v_j \mid j \in I_i\}$ into one vertex. 
Hence, we obtain the desired $(c,{\ugeq d})$-biregular graph of size $(M,N)$.

Now we prove part~(c).
The ``only if'' direction is straightforward.
The ``if'' direction is as follows.
Suppose $M \geq d$ and $N \geq c$.
There are two cases: either $Mc \geq N d$,
or $Mc < Nd$.
In the former case, we construct 
a $(c,{\ugeq d})$-biregular graph of size $(M,N)$,
while in the latter case, we construct 
a $({\ugeq c},d)$-biregular graph of size $(M,N)$.
In either case, we obtain a $({\ugeq c},{\ugeq d})$-biregular graph
of size $(M,N)$.
This completes our proof of Proposition~\ref{p: d-biregular}.
\end{proof}

The proof of Theorem~\ref{t: main biregular} is a 
generalisation of Proposition~\ref{p: d-biregular} above.
It is divided into five successive steps
presented in Subsections~\ref{ss: 1-type biregular}--~\ref{ss: formula l-type complete biregular}.
\begin{itemize}\itemsep=0pt
\item
Subsection~\ref{ss: 1-type biregular}.

It contains the generalisation of Proposition~\ref{p: d-biregular} 
to the case of $(\vc,\vd)$-biregular,
where $\vc$ and $\vd$ are vectors over $\bbN$.
(That is, we consider the 1-type biregular graphs.)

\item
Subsection~\ref{ss: l-type biregular}.

In this subsection
we will use the Presburger characterisation for $(\vc,\vd)$-biregular graphs
to obtain the same characterisation for $\ell$-type $(C,D)$-biregular graphs,
where $C,D$ are matrices over $\bbN$. 
\item
Subsection~\ref{ss: ugeq biregular}.

In this subsection we obtain the characterisation for $(C,D)$-biregular graphs
when $C,D$ contain elements from ${\ugeq\bbN}$
assuming that the number of vertices whose degrees specified with ${\ugeq d}$ is ``big enough.''  
It is obtained by using the characterisation in the previous Subsection~\ref{ss: l-type biregular}.

\item
Subsection~\ref{ss: l-type ugeq biregular}.

This subsection is the generalisation of Subsection~\ref{ss: ugeq biregular},
where the graphs may contain a ``small'' number of vertices whose degree is specified with ${\ugeq d}$.
The idea is to encode directly those vertices in the Presburger formula.
This is presented formally by our notion of {\em partial graphs}.

\item
Subsection~\ref{ss: formula l-type biregular}.

In this subsection we present the construction of the 
formula required in Theorem~\ref{t: l-type biregular}.
It is built from the formula presented in the Subsection~\ref{ss: l-type ugeq biregular}.

\item
Subsection~\ref{ss: formula l-type complete biregular}.

Finally in this subsection we present the construction of the 
formula required in Theorem~\ref{t: main biregular},
which is built from the from the formula in Subsection~\ref{ss: formula l-type biregular}.
\end{itemize}

In the following we write $\vone$ 
to denote the vector $(1,\ldots,1)\in\bbN^m$, for an appropriate $m \geq 1$.
That is, $\vone$ is a vector whose components are all one.
For two vectors $\vc = (c_1,\ldots,c_m) \in \bbN^m$ and $\vd = (d_1,\ldots,d_m) \in \bbN^m$,
the dot product between $\vc$ and $\vd$ is $\vc\cdot\vd = c_1d_1 + \cdots +c_md_m$.

\subsection{When $C=\vc \in \bbN^{1\times m}$
and $D=\vd \in \bbN^{1\times n}$}
\label{ss: 1-type biregular}

In this subsection we consider the case when $C$ and $D$ consist of only one vector each.
In this case, we are going to write $(\vc,\vd)$-biregular graph,
where $\vc$ and $\vd$ are the only vectors of $C$ and $D$, respectively.

\begin{lemma}
\label{l: vd-biregular}
Let $\vc\in \bbN^m$ and $\vd \in \bbN^n$
and both do not contain zero entry.
For each $\vM\in \bbN^m$ and $\vN \in \bbN^n$ such that
$\vM\cdot\vone+\vN\cdot\vone  \geq 2(\vc\cdot\vone)(\vd\cdot\vone)+3$,
the following holds.
There exists a $(\vc,\vd)$-biregular graph of size $(\vM,\vN)$
if and only if $\vM \cdot \vc = \vN \cdot \vd$.
\end{lemma}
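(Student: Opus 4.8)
The only-if direction is pure double counting: in any $(\vc,\vd)$-biregular graph of size $(\vM,\vN)$ the number of edges counted from the $U$-side is $\sum_j M_jc_j=\vM\cdot\vc$ and from the $V$-side is $\vN\cdot\vd$, and since there is a single edge relation ($\ell=1$) these two counts coincide, giving $\vM\cdot\vc=\vN\cdot\vd$. All the content is in the if-direction, where I must realise the prescribed degrees by a \emph{simple} bipartite graph. The plan is to reformulate the problem as a degree-sequence realisation and discharge it with the Gale--Ryser theorem. Let $a_1\ge a_2\ge\cdots\ge a_{|U|}$ be the multiset of left-degrees (the value $c_j$ taken with multiplicity $M_j$) and let $b_1,\ldots,b_{|V|}$ be the multiset of right-degrees ($d_j$ with multiplicity $N_j$); here $|U|=\vM\cdot\vone$, $|V|=\vN\cdot\vone$, $\sum_i a_i=\vM\cdot\vc$ and $\sum_t b_t=\vN\cdot\vd$. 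A $(\vc,\vd)$-biregular graph of size $(\vM,\vN)$ is exactly a simple bipartite graph realising these two sequences, the required partition being recovered afterwards by grouping vertices of equal degree into parts of the prescribed sizes. By the Gale--Ryser theorem such a graph exists iff $\sum_i a_i=\sum_t b_t$ and $\sum_{i\le k}a_i\le\sum_{t}\min(b_t,k)$ for every $1\le k\le|U|$.

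Set $c_{\max}=\max_j c_j$ and $d_{\max}=\max_j d_j$, so that $c_{\max}\le\vc\cdot\vone$ and $d_{\max}\le\vd\cdot\vone$, hence $c_{\max}\,d_{\max}\le(\vc\cdot\vone)(\vd\cdot\vone)$. Since $\vM\cdot\vone+\vN\cdot\vone\ge 2(\vc\cdot\vone)(\vd\cdot\vone)+3$, the larger of the two side-sizes is at least $(\vc\cdot\vone)(\vd\cdot\vone)$; because the whole statement is symmetric under swapping $(\vc,\vM)$ with $(\vd,\vN)$ (the graph is unchanged), I may assume $|V|=\vN\cdot\vone\ge(\vc\cdot\vone)(\vd\cdot\vone)\ge c_{\max}\,d_{\max}$. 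Now I verify the dominance inequalities. Each $b_t\ge1$ (as $\vd$ has no zero entry), so $|V|\le\sum_t b_t=\sum_i a_i\le|U|\,c_{\max}$; combined with $c_{\max}d_{\max}\le|V|$ this yields $d_{\max}\le|U|$, so the top index $k=|U|$ already satisfies $k\ge d_{\max}$. For every $k\ge d_{\max}$ the right-hand side equals $\sum_t b_t=\sum_i a_i$, so the inequality holds. For $1\le k<d_{\max}$ the left-hand side is at most $k\,c_{\max}<d_{\max}\,c_{\max}\le|V|\le\sum_t\min(b_t,k)$, using once more that each $b_t\ge1$. Thus all Gale--Ryser inequalities hold and the desired graph exists.

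The one genuinely delicate point, and the step I would treat most carefully, is the feasibility observation $d_{\max}\le|U|$. The Gale--Ryser inequality at $k=|U|$ \emph{fails} as soon as some right-degree exceeds $|U|$, so at that value of $k$ one cannot merely bound the right-hand side below by $|V|$; what rescues the argument is the edge-count identity $|V|\le\sum_t b_t=\sum_i a_i\le|U|c_{\max}$, which together with the assumed largeness of $|V|$ forces $d_{\max}\le|U|$. I would flag explicitly that this is where the hypothesis "$\vc,\vd$ contain no zero entry'' is used (it gives $c_{\max}\ge1$ for the cancellation and $b_t\ge1$ for the lower bound on the right-hand side).

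Finally I would remark that one can avoid invoking Gale--Ryser and stay self-contained by generalising the stub-merging construction of Proposition~\ref{p: d-biregular}: on the large side lay out $K=\vM\cdot\vc$ unit stubs, group the $K$ left-stubs so that each left vertex owns a contiguous block equal to its degree, and then fuse the right stubs into groups of the prescribed sizes in an interleaved (round-robin) fashion so that each group meets $|U|$ pairwise distinct left vertices, exactly as the spacing $N\ge c$ guarantees distinctness in the single-value case. The largeness of the big side is precisely what lets this fusion avoid multiple edges; this more hands-on route is presumably where the explicit constant $2(\vc\cdot\vone)(\vd\cdot\vone)+3$ comes from, whereas the Gale--Ryser argument above only needs one side to reach $(\vc\cdot\vone)(\vd\cdot\vone)$.
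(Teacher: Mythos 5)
Your proof is correct, but it takes a genuinely different route from the paper's. The paper argues constructively: it lays out a preliminary bipartite graph with $\vM\cdot\vc$ degree-one stubs on the right, merges stubs in an interleaved fashion to realise the right-hand degrees, and then eliminates the resulting parallel edges one at a time by an edge-swap, the hypothesis $\vM\cdot\vone+\vN\cdot\vone\geq 2(\vc\cdot\vone)(\vd\cdot\vone)+3$ being exactly what guarantees that a swappable edge $(u',v')$ avoiding the neighbourhoods of $u$ and $v$ always exists. You instead reduce to a bipartite degree-sequence realisation and invoke Gale--Ryser, and your verification of the dominance inequalities is sound: the split at $k=d_{\max}$ is the right one, the derivation $c_{\max}d_{\max}\leq |V|\leq \sum_t b_t=\sum_i a_i\leq |U|c_{\max}$ correctly forces $d_{\max}\leq |U|$ (so the critical inequality at $k=|U|$ cannot fail), and the symmetry reduction letting you assume $|V|\geq(\vc\cdot\vone)(\vd\cdot\vone)$ is legitimate since hypothesis and conclusion are both symmetric under swapping $(\vc,\vM)$ with $(\vd,\vN)$. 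What each approach buys: yours is shorter, outsources the combinatorics to a classical theorem, and in fact establishes the lemma under the weaker hypothesis that one side has at least $c_{\max}d_{\max}$ vertices; the paper's is self-contained and, more importantly, its swap technique is reused almost verbatim later (in the proofs of Theorems~\ref{t: l-type vd-biregular} and~\ref{t: ugeq biregular} and in Section~\ref{s: proof2}) to repair edge collisions when several edge colours are superimposed, which is why the authors set it up here. Your closing remark correctly identifies the origin of the explicit constant.
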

\begin{proof}
Let $\vc \in \bbN^m$ and $\vd \in \bbN^n$ and
both do not contain zero entry.
Let $\vM\in \bbN^m$, $\vN \in \bbN^n$ such that
$\vM\cdot\vone+\vN\cdot\vone \geq 2(\vc\cdot\vone)(\vd\cdot\vone) +3$.

The ``only if'' direction is straightforward.
If $G$ is a $(\vc,\vd)$-biregular graph of size $(\vM,\vN)$,
then the number of edges in $G$ is precisely $\vM\cdot\vc=\vN\cdot\vd$.

Now we prove the ``if'' part.
Suppose $\vM\in \bbN^m$, $\vN \in \bbN^n$ such that
$\vM\cdot\vc = \vN\cdot\vd$.
Let $\vc = (c_{1},\ldots,c_{m})$ and $\vd = (d_{1},\ldots,d_{n})$,
and $\vM=(M_{1},\ldots,M_{m})$ and $\vN = (N_{1},\ldots,N_{n})$.

We are going to construct a $(\vc,\vd)$-biregular graph of size $(\vM,\vN)$.
We first construct a preliminary bipartite graph $G$ pictured in Figure~\ref{fig:preliminary}.
That is, the left side has $\vM\cdot\vone$ vertices, and
there are $M_{1}$ vertices of degree $c_{1}$,
$M_{2}$ nodes of degree $c_{2}$, etc.
The right side has $\vM\cdot\vc$ number of vertices, each of degree one.

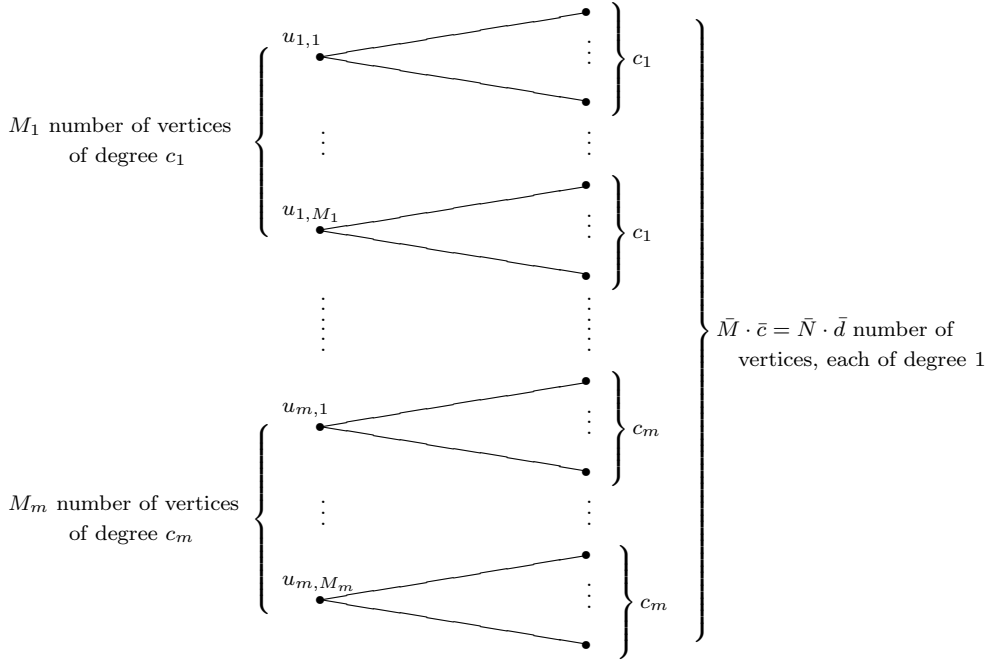
\begin{figure}[t]
\begin{picture}(200,100)(-112,-55)

{\footnotesize

\put(-80,18){
$\left\{\begin{array}{c}
\\\\\\\\\\\\\\\\  \end{array}\right.$}
\put(-111,20){$M_{1}$ number of vertices}
\put(-103,16){of degree $c_{1}$}

\put(-75,32){$u_{1,1}$}
\put(-70,30){\circle*{1}}
\put(-70,30){\line(6,1){35}}
\put(-35,36){\circle*{1}}
\put(-35,29){$\vdots$}
\put(-70,30){\line(6,-1){35}}
\put(-35,24){\circle*{1}}
\put(-36,29){$\left.\begin{array}{c}\\ \\ \\ \\ \\ \end{array}\right\} c_{1}$}

\put(-70,17){$\vdots$}
\put(-35,17){$\vdots$}

\put(-75,9){$u_{1,M_{1}}$}
\put(-70,7){\circle*{1}}
\put(-70,7){\line(6,1){35}}
\put(-35,13){\circle*{1}}
\put(-35,6){$\vdots$}
\put(-70,7){\line(6,-1){35}}
\put(-35,1){\circle*{1}}
\put(-36,6){$\left.\begin{array}{c}\\ \\ \\ \\ \\ \end{array}\right\} c_{1}$}

\put(-70,-5){$\vdots$}
\put(-70,-9){$\vdots$}
\put(-35,-5){$\vdots$}
\put(-35,-9){$\vdots$}

\put(-80,-32){
$\left\{\begin{array}{c}
\\\\\\\\\\\\\\\\  \end{array}\right.$}
\put(-111,-30){$M_{m}$ number of vertices}
\put(-103,-34){of degree $c_{m}$}

\put(-75,-17){$u_{m,1}$}
\put(-70,-19){\circle*{1}}
\put(-70,-19){\line(6,1){35}}
\put(-35,-13){\circle*{1}}
\put(-35,-20){$\vdots$}
\put(-70,-19){\line(6,-1){35}}
\put(-35,-25){\circle*{1}}
\put(-36,-20){$\left.\begin{array}{c}\\ \\ \\ \\ \\ \end{array}\right\} c_{m}$}

\put(-70,-32){$\vdots$}
\put(-35,-32){$\vdots$}

\put(-75,-40){$u_{m,M_{m}}$}
\put(-70,-42){\circle*{1}}
\put(-70,-42){\line(6,1){35}}
\put(-35,-36){\circle*{1}}
\put(-35,-43){$\vdots$}
\put(-70,-42){\line(6,-1){35}}
\put(-35,-48){\circle*{1}}
\put(-35,-43){$\left.\begin{array}{c}\\ \\ \\ \\ \\ \end{array}\right\} c_{m}$}

\put(-25,-7){$\left.\begin{array}{c}
\\\\\\\\\\\\\\\\\\\\\\\\\\\\\\\\\\\\\\ \\ \\ \\ \\ \\ \\ \end{array}\right\} 
\vM\cdot\vc=\vN\cdot\vd$ number of}
\put(-15,-11){vertices, each of degree $1$}
}
\end{picture}
\caption{The preliminary graph constructed in the proof of Lemma~\ref{l: vd-biregular}.}
\label{fig:preliminary}
\end{figure}

We are going to do some merging of the vertices on the right side
so that there are exactly $N_{1}$ vertices of degree $d_{1}$,
$N_{2}$ vertices of degree $d_{2}$, etc.
We do the following.
We ``group'' the vertices on the right side into $V_1,\ldots,V_n$
where $V_1$ has $N_{1}d_{1}$ vertices,
$V_2$ has $N_{2}d_{2}$ vertices, etc.
Such grouping is possible because $\vM\cdot\vc = \vN\cdot \vd$.

For each $i\in \{1,\ldots,n\}$, we do the following.
We merge $d_{i}$ vertices in $V_i$ into one vertex,
so that each vertex in $V_i$ has degree $d_{i}$.
Let $V_i = \{v_{i,1},\ldots,v_{i,K_i}\}$ where $K_i= N_{i}d_{i}$.
We merge the vertices 
$v_1,v_{N_{i}+1},v_{2N_{i}+1},\ldots,v_{(d_{i}-1)N_{i}+1}$ into one vertex;
the vertices $v_2,v_{N_{i}+2},v_{2N_{i}+2},\ldots,v_{(d_{i}-1)N_{i}+2}$
into one vertex; and so on.

After such merging, each vertex in $V_i$ has degree $d_{i}$.
However, it is possible that after we do the merging,
we have ``parallel'' edges, i.e. more than one edges between two vertices.
(See the left side of the illustration below.)
We are going to ``remove'' such parallel edges one by one
until there are no more parallel edges.

Suppose we have parallel edges between the vertices $u$ and $v$.
We pick an edge $(u',v')$ such that $u'$ is not adjacent to $v$ and $v'$ is not adjacent to $u$.
(See the left side of the illustration below.)

\noindent
\begin{picture}(200,42)(-110,-21)

\put(-93,12){$u$}
\put(-90,10){\circle*{1}}
\qbezier(-90,10)(-75,22)(-60,10)
\put(-78,17){$\in E$}
\qbezier(-90,10)(-75,4)(-60,10)
\put(-75,4){$\in E$}
\put(-59,12){$v$}
\put(-60,10){\circle*{1}}

\put(-93,-13){$u'$}
\put(-90,-15){\circle*{1}}
\put(-77,-14){$\in E$}
\put(-90,-15){\line(1,0){30}}
\put(-59,-13){$v'$}
\put(-60,-15){\circle*{1}}

\put(-47,0){$\Rightarrow$}

\put(1,12){$v$}
\put(0,10){\circle*{1}}
\put(-18,17){$\in E$}
\qbezier(0,10)(-15,22)(-30,10)
\put(-32,12){$u$}
\put(-30,10){\circle*{1}}
\put(0,10){\line(-6,-5){30}}
\put(-7,2){$\in E$}

\put(1,-13){$v'$}
\put(0,-15){\circle*{1}}
\put(-9,-7){$\in E$}
\put(0,-15){\line(-6,5){30}}
\put(-33,-13){$u'$}
\put(-30,-15){\circle*{1}}

\end{picture}

\noindent
Such an edge $(u',v')$ exists since
the number of vertices reachable in distance $2$ from
the vertices $u$ and $v$ is $\leq 2(\vc\cdot\vone)(\vd\cdot\vone)+2$
and the number of vertices is 
$\vM\cdot\vone+\vN\cdot\vone \geq 2(\vc\cdot\vone)(\vd\cdot\vone)+3$
and the fact that none of the vertices are of zero degree.
(Here we make use of the fact that neither $\vc$ nor $\vd$ contain zero entry.)

Now we delete the edges $(u',v')$ and one of the parallel edge $(u,v)$,
replace it with the edges $(u,v')$ and $(u',v)$,
as illustrated on the right side of the illustration above.
We perform such operation until there are no more parallel edges.
This completes the proof of Lemma~\ref{l: vd-biregular}.
\end{proof}

The following theorem is a straightforward application of Lemma~\ref{l: vd-biregular}.
\begin{theorem}
\label{t: vd-biregular}
For every $\vc\in \bbN^m$ and $\vd \in \bbN^n$,
there exists a Presburger formula $\bireg_{(\vc,\vd)}(\vX,\vY)$,
where $\vX = (X_1,\ldots,X_m)$ and $\vY = (Y_1,\ldots,Y_n)$
such that the following holds.
There exists a $(\vc,\vd)$-biregular graph
of size $(\vM,\vN)$ if and only if
the sentence $\bireg_{(\vc,\vd)}(\vM,\vN)$ holds.
\end{theorem}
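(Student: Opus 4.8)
The plan is to combine Lemma~\ref{l: vd-biregular} with two easy reductions that remove the two hypotheses of that lemma, namely the absence of zero entries and the ``big enough'' lower bound on the total number of vertices. Since Presburger arithmetic is closed under Boolean combinations and can express linear (in)equalities with integer coefficients, each reduction simply contributes one piece to the final formula $\bireg_{(\vc,\vd)}(\vX,\vY)$, and the lemma itself does all the genuine combinatorial work.

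First I would dispose of the zero entries. Let $S = \{j : c_j \neq 0\}$ and $T = \{k : d_k \neq 0\}$, and write $\vc_S,\vd_T$ for the corresponding subvectors and $\vX_S,\vY_T$ for the corresponding subtuples of variables. A vertex whose prescribed degree is $0$ is isolated and contributes no edge, so a $(\vc,\vd)$-biregular graph of size $(\vM,\vN)$ exists if and only if a $(\vc_S,\vd_T)$-biregular graph of size $(\vM_S,\vN_T)$ exists, with the coordinates of $\vM,\vN$ outside $S,T$ arbitrary. Hence in the final formula the variables $X_j$ (for $j\notin S$) and $Y_k$ (for $k\notin T$) are left entirely unconstrained, and it suffices to build a formula $\gbireg(\vX_S,\vY_T)$ for the case where both vectors have only positive entries. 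The degenerate subcases are handled directly: if $S=T=\emptyset$ the edge-free graph always exists; if exactly one of them is empty, say $S=\emptyset$, then no edge is available, so the graph exists if and only if $\vY_T=\vzero$, which is the Presburger condition $\bigwedge_{k\in T} Y_k = 0$.

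Now assume $\vc_S,\vd_T$ have only positive entries and both $S,T$ are nonempty. Lemma~\ref{l: vd-biregular} supplies the ``big'' part of the formula verbatim: whenever $\sum_{j\in S}X_j + \sum_{k\in T}Y_k \geq 2(\vc_S\cdot\vone)(\vd_T\cdot\vone)+3$, a graph exists if and only if $\vX_S\cdot\vc_S = \vY_T\cdot\vd_T$, and both the threshold and the equation are linear constraints expressible in Presburger arithmetic. It remains to cover the finitely many ``small'' sizes, those with $\sum_{j\in S}X_j+\sum_{k\in T}Y_k < 2(\vc_S\cdot\vone)(\vd_T\cdot\vone)+3$. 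Since their total is bounded, there are only finitely many such vectors $(\vM_S,\vN_T)$, and for each one the existence of a $(\vc_S,\vd_T)$-biregular graph is a fixed finite combinatorial question that can be decided outright. Collecting the ``good'' small vectors into a finite set $\Gamma$, the small part of the formula is the disjunction $\bigvee_{(\vM_S,\vN_T)\in\Gamma}\big(\vX_S=\vM_S \wedge \vY_T=\vN_T\big)$, each disjunct being a conjunction of equations $X_j = M_j$ and $Y_k = N_k$.

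Taking $\gbireg(\vX_S,\vY_T)$ to be the disjunction of the big and small parts, and leaving the zero-degree variables free, yields the required $\bireg_{(\vc,\vd)}(\vX,\vY)$. I do not expect any real obstacle: the lemma already contains the only nontrivial construction, the edge-swapping argument that removes parallel edges once the vertex count exceeds the threshold. The single point demanding mild care is the small-size regime, where one must remember that the edge-counting identity $\vM\cdot\vc=\vN\cdot\vd$ remains \emph{necessary} but is no longer \emph{sufficient}, so these cases genuinely have to be enumerated rather than captured by the linear equation.
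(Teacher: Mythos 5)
Your proposal is correct and follows essentially the same route as the paper: reduce to Lemma~\ref{l: vd-biregular} by first stripping the zero-degree classes (leaving their size variables unconstrained), then take the disjunction of the linear condition $\vX\cdot\vc=\vY\cdot\vd$ guarded by the ``big enough'' threshold with a finite enumeration of the admissible small sizes. Your explicit treatment of the degenerate case where one side has only zero-degree classes (forcing the other side's positive-degree classes to be empty) is a minor point the paper leaves implicit, but otherwise the two arguments coincide.
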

\begin{proof}
The proof is a direct application of Lemma~\ref{l: vd-biregular}.
We assume that all the entries in $\vc$ and $\vd$ are not zero.
Otherwise, we do the following.
Suppose $\vc=(c_1,\ldots,c_m)$, $\vd = (d_1,\ldots,d_n)$ and 
let $I = \{i \mid c_i =0\}$ and $J = \{j \mid d_j=0\}$.
We define
\begin{eqnarray*}
\bireg_{\vc,\vd}(\vX,\vY) &:= & 
\bigwedge_{i \in I}\quad X_{i}\geq 0 \ \wedge \ 
\bigwedge_{j \in J}\quad Y_{j}\geq 0 \ \wedge \ 
\bireg_{\vc',\vd'}(\vX',\vY'),
\end{eqnarray*}
where $\vc'$ and $\vX'$ are the vectors $\vc$ and $\vX$
without the entries in $I$, respectively,
and $\vd'$ and $\vY'$ are the vectors $\vd$ and $\vY$ 
without the entries in $J$, respectively.

For $\vc\in \bbN^m$ and $\vd \in \bbN^n$
which do not contain zero entry,
we define the following set $H$.
\[
H :=
\left\{
\begin{array}{ll}
(\vM,\vN) &
\left|
\begin{array}{l} 
\vM\cdot\vone+\vN\cdot\vone  \leq 2(\vc\cdot\vone)(\vd\cdot\vone)+2 \ \mbox{and}
\\ 
\mbox{there exists a} \ (\vc,\vd)\mbox{-biregular graph of size} \ (\vM,\vN) 
\end{array}
\right\}
\end{array}
\right.
\]
Such set can be computed greedily
since the number of $(\vM,\vN)$ 
such that $\vM\cdot\vone+\vN\cdot\vone  \leq 2(\vc\cdot\vone)(\vd\cdot\vone)+2$
is bounded.

Now we define the formula $\bireg_{\vc,\vd}(\vX,\vY)$ as follows.
\[
\Big(\vM\cdot\vone+\vN\cdot\vone  \geq 2(\vc\cdot\vone)(\vd\cdot\vone)+3 
\ \wedge \ (\vX\cdot\vc = \vY\cdot\vd)\Big)
\vee \:\bigvee_{(\vM,\vN) \in H} \vX = \vM \ \wedge \ \vY = \vN
\]
The formula is a Presburger formula since $\vc$ and $\vd$ are constants.
Since $\vc,\vd$ do not contain zero entry,
we can apply Lemma~\ref{l: vd-biregular} to obtain the correctness of $\bireg_{\vc,\vd}(\vX,\vY)$.
This completes our proof of Theorem~\ref{t: vd-biregular}.
\end{proof}

\subsection{When $C \in \bbN^{\ell\times m}$
and $D \in \bbN^{\ell\times n}$}
\label{ss: l-type biregular}

Theorem~\ref{t: l-type vd-biregular} below
is the generalisation of
Theorem~\ref{t: vd-biregular}
to the case where $\ell\geq 1$.
Recall that for a matrix $C \in \bbN^{\ell\times m}$,
we write $C\cdot\vone$ to denote the sum of all the entries in $C$.

\begin{theorem}
\label{t: l-type vd-biregular}
For every $C \in \bbN^{\ell\times m}$ and $D \in \bbN^{\ell\times n}$,
there exists a Presburger formula $\bireg_{C,D}(\vX,\vY)$,
where $\vX = (X_1,\ldots,X_m)$ and $\vY = (Y_1,\ldots,Y_n)$
such that the following holds.
There exists a $(C,D)$-biregular $\ell$-type graph
of size $(\vM,\vN)$ if and only if
the sentence $\bireg_{C,D}(\vM,\vN)$ holds.
\end{theorem}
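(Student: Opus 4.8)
The plan is to reduce the $\ell$-type problem to $\ell$ single-type problems, one per colour, glued along a common vertex partition. Write $\vc_i=(C_{i,1},\dots,C_{i,m})$ and $\vd_i=(D_{i,1},\dots,D_{i,n})$ for the $i$-th rows of $C$ and $D$. Since $E_1,\dots,E_\ell$ are pairwise disjoint, a $(C,D)$-biregular graph is exactly a superposition, on one common partition $U=U_1\cup\cdots\cup U_m$ and $V=V_1\cup\cdots\cup V_n$, of $\ell$ pairwise edge-disjoint single-colour graphs $G_i=(U,V,E_i)$, each of which is $(\vc_i,\vd_i)$-biregular. Restricting a $(C,D)$-biregular graph to colour $i$ shows that $G_i$ alone is $(\vc_i,\vd_i)$-biregular, and counting $E_i$-edges from both sides yields the balance equations $\vM\cdot\vc_i=\vN\cdot\vd_i$; these are therefore necessary, and each is an instance handled by Theorem~\ref{t: vd-biregular}.

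For the converse I would first dispose of the regime in which every block is large. Using the construction behind Lemma~\ref{l: vd-biregular} (and Theorem~\ref{t: vd-biregular} for rows that contain zero entries, since a colour may be absent from a block) I build, for each colour $i$ separately, a $(\vc_i,\vd_i)$-biregular graph $G_i$ on the common partition, and then make the union edge-disjoint by the same local rerouting used in Lemma~\ref{l: vd-biregular}. Whenever a pair $(u,v)$ carries two colours, I reroute one of them, say colour $i$, by picking an $E_i$-edge $(u',v')$ whose endpoints avoid the bounded neighbourhoods of $u$ and $v$ across all colours, and replacing $(u,v),(u',v')$ with $(u,v'),(u',v)$. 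This preserves every colour-$i$ degree and removes the coincidence. The number of forbidden choices is bounded by a constant depending only on the maximal total degrees $\max_j\sum_i C_{i,j}$ and $\max_j\sum_i D_{i,j}$, so a suitable rerouting edge exists as soon as the colour being rerouted has enough edges, which is guaranteed once the relevant blocks are large.

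The hard part is that, unlike the single-colour case, large \emph{total} size is not sufficient. If two colours are both confined to the same tiny blocks they can be forced onto the same vertex pair with no room to reroute: for example, two colours each demanding one edge between a singleton $U$-block and a singleton $V$-block cannot coexist, no matter how many isolated vertices the remaining blocks contribute, even though both balance equations hold. I therefore expect the real work to be a case split according to which blocks are \emph{small}, i.e.\ below the rerouting threshold. The large blocks are filled by the superposition-with-swaps argument above; the boundedly many small blocks, together with all edges incident to them, form a bounded sub-configuration whose realizability (and its compatibility with a given large-block continuation) is decided by a finite search. This refines the passage from Lemma~\ref{l: vd-biregular} to Theorem~\ref{t: vd-biregular}: there the small cases were tabulated by total size, whereas here the table must be indexed by the sizes of the small blocks, leaving the large blocks free.

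Finally I would assemble $\bireg_{C,D}(\vX,\vY)$ as a finite disjunction over the admissible small-block size profiles. Each disjunct asserts that the small coordinates take one of the finitely many realizable values (an explicit equality on those coordinates), that the large coordinates exceed the threshold (a linear inequality), and that the balance equations $\bigwedge_{i=1}^{\ell}\ \vX\cdot\vc_i=\vY\cdot\vd_i$ hold. All three ingredients are linear, so the result is a Presburger formula; its correctness follows from necessity of the balances, the superposition argument for the large part, and the finite check gluing the small core to it. The step I expect to cost the most effort is precisely this gluing: verifying that a realizable small core can always be completed disjointly by the large part once the balances and the largeness bound hold.
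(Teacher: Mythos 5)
Your diagnosis of the central difficulty is exactly right: the balance equations $\vM\cdot\vc_i=\vN\cdot\vd_i$ together with a large \emph{total} size do not suffice, because two colours confined to the same tiny blocks can be forced onto the same vertex pair, and your singleton example is a correct witness of this. Your treatment of the all-large regime (superposition of the $\ell$ single-colour graphs followed by local swaps that preserve all degrees and create no new coincidences) is also sound and matches the merging step the paper uses. Where your route genuinely diverges is in how the remaining cases are organised: you classify \emph{columns} (blocks) into small and large and try to tabulate a bounded core, whereas the paper inducts on the number of \emph{rows} (colours), peeling off one colour $j$ at a time under the guard $\vX\cdot\chi(\vc_j)+\vY\cdot\chi(\vd_j)\geq 2(C\cdot\vone)(D\cdot\vone)+3$ --- that is, the guard is the size of the \emph{support} of the peeled colour, not the total size. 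Colours with small support are never peeled; they survive to deeper recursion levels where, after the columns irrelevant to the surviving rows are discarded, the instance really is bounded and a finite table applies. At every level the paper builds two full-size graphs on the same vertex set and merges them, so it never has to glue a bounded object onto an unbounded one.

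That gluing is precisely where your proposal has a gap. The ``finite search'' you invoke for the small core is not finite as stated: whether a realizable core can be completed depends on how its pendant edges are distributed over the large blocks and on whether those blocks can absorb the resulting (boundedly many) perturbed degree demands, and your final formula --- small coordinates fixed, large coordinates above threshold, global balance --- neither quantifies over that distribution nor certifies the perturbed residual problem. What is needed here is exactly the \emph{partial graph} machinery of Subsection~\ref{ss: l-type ugeq biregular} (Lemma~\ref{l: partial-graph-step} and Theorem~\ref{t: presburger for partial-graph}), which the paper develops only later, for matrices with entries from ${\ugeq\bbN}$. So your plan is salvageable, but it front-loads a substantial piece of the later construction, and as written the decisive step is asserted rather than proved.
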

\begin{proof}
Let $C \in \bbN^{\ell\times m}$ 
and $D \in \bbN^{\ell\times n}$ be the given matrices.
For simplicity, we assume that both $C$ and $D$ do not contain zero column.
If column $i$ in matrix $C$ (or, $D$, respectively) is zero column, 
then we add the constraint $X_i\geq 0$ (or, $Y_i\geq 0$, respectively)
and ignore that column.

Let $\vc_1,\ldots,\vc_{\ell}$ and $\vd_1,\ldots,\vd_{\ell}$
be the row vectors of $C$ and $D$, respectively.
For a vector $\vt = (t_1,\ldots,t_m) \in \bbN^m$,
we define the characteristic vector of $\vt$
as $\chi(\vt) := (b_1,\ldots,b_m) \in \{0,1\}^m$ 
where $b_i = 0$ if $t_i = 0$, and $b_i =1$ if $t_i \neq 0$.

We first define the following set.
\begin{eqnarray*}
H_{C,D} & := &
\left\{
\begin{array}{l|l}
(\vM,\vN) &
\begin{array}{l}
\vM\cdot\vone + \vN\cdot\vone < 2\ell(C\cdot\vone)(D\cdot\vone) + 3\ell \ \mbox{and}
\\ 
\mbox{there exists a} \ (C,D)\mbox{-biregular graph of size} \ (\vM,\vN)
\end{array}
\end{array}
\right\}
\end{eqnarray*}
Again, such set can be computed greedily
since the number of $(\vM,\vN)$ 
such that $\vM\cdot\vone+\vN\cdot\vone  < 2\ell(C\cdot\vone)(D\cdot\vone) + 3\ell$
is bounded.

Then, the formula $\bireg_{C,D}(\vX,\vY)$ can be defined inductively
as follows.
When $\ell =1$,
\begin{eqnarray*}
\bireg_{C,D}(\vX,\vY) & := & \bireg_{\vc_1,\vd_1}(\vX,\vY)
\end{eqnarray*}
When $\ell \geq 2$,
\begin{eqnarray*}
\bireg_{C,D}(\vX,\vY) & := & 
\bigvee_{(\vM,\vN) \in H_{C,D}} \vX = \vM \ \wedge \ \vY = \vN
\\
& &
\vee\quad
\bigvee_{1\leq j \leq \ell}
\Big(
\:
\vX\cdot \chi(\vc_j)+\vY\cdot \chi(\vd_j) \geq 2(C\cdot\vone)(D\cdot\vone) +3
\\
& &
\qquad\qquad\qquad\quad
\wedge
\quad
\bireg_{C-\vc_j,D-\vd_j}(\vX,\vY)
\\
& &
\qquad\qquad\qquad\quad
\wedge
\quad
\bireg_{\vc_j,\vd_j}(\vX,\vY)
\:
\Big)
\end{eqnarray*}
where $C-\vc_j$, $D - \vd_j$ denote the matrices $C$ and $D$ without row $j$, respectively.
        
We are going to prove that
there exists a $(C,D)$-biregular graph of size $(\vM,\vN)$
if and only if the statement $\bireg_{C,D}(\vM,\vN)$ holds.
The proof is by induction on $\ell$.
The basis $\ell=1$ has been established in Theorem~\ref{t: vd-biregular}.
For the induction step,
we assume that it holds for the case of $\ell-1$
and we are going to prove the case $\ell$.

We first prove the ``only if'' direction.
Suppose $G=(U,V,E_1,\ldots,E_{\ell})$ is $(C,D)$-biregular of size $(\vM,\vN)$.
If $(\vM,\vN)\in H_{C,D}$, then $\bireg_{C,D}(\vM,\vN)$ holds.
So suppose $(\vM,\vN) \notin I_{C,D}$ and
$\vM\cdot\vone + \vN \cdot\vone \geq 2\ell(C\cdot\vone)(D\cdot\vone) + 3\ell$.
Since $C$ and $D$ do not contain zero column,
there exists $j \in \{1,\ldots,\ell\}$
such that $\vM\cdot\chi(\vc_j) + \vN\cdot\chi(\vd_j) \geq 2(C\cdot\vone)(D\cdot\vone)+3$.
Moreover,
if $G=(U,V,E_1,\ldots,E_{\ell})$ is $(C,D)$-biregular of size $(\vM,\vN)$,
then $G$ is also $(C-\vc_j,D-\vd_j)$-biregular and $(\vc_j,\vd_j)$-biregular.
By the induction hypothesis,
both $\bireg_{C-\vc_j,D-\vd_j}(\vM,\vN)$ and $\bireg_{\vc_j,\vd_j}(\vM,\vN)$ hold.

We now prove the ``if'' direction.
Suppose $\bireg_{C,D}(\vM,\vN)$ holds.
If $(\vM,\vN)\in H_{C,D}$, then there exists a $(C,D)$-biregular graph of size $(\vM,\vN)$
and we are done.
So suppose $(\vM,\vN) \notin H_{C,D}$.
Hence there exists $j \in \{1,\ldots,\ell\}$
such that 
\begin{eqnarray*}
& &
\vM\cdot \chi(\vc_j) +
\vN\cdot \chi(\vd_j) \geq 2(C\cdot\vone)(D\cdot\vone) +3
\\
& &
\wedge
\quad
\bireg_{C-\vc_j,D-\vd_j}(\vM,\vN)
\quad
\wedge
\quad
\bireg_{\vc_j,\vd_j}(\vM,\vN)
\end{eqnarray*}
For simplicity, we assume that $j = \ell$.
By the induction hypothesis,
there exists a $(C-\vc_{\ell},D-\vd_{\ell})$-biregular 
graph $G_1=(U_1,V_1,E_1,\ldots,E_{\ell-1})$ of size $(\vM,\vN)$,
and by definition, $E_1,\ldots,E_{\ell-1}$ are pairwise disjoint.
By Theorem~\ref{t: vd-biregular}, there exists a $(\vc_{\ell},\vd_{\ell})$-biregular
graph $G_2 = (U_2,V_2,E_{\ell})$ of size $(\vM,\vN)$.
We can assume that $U_1=U_2=U$ and $V_1=V_2=V$ 
since $G_1$ and $G_2$ are of the same size $(\vM,\vN)$.

We are going to combine $G_1$ and $G_2$ into one graph
to get an $\ell$-type $(C,D)$-biregular 
graph $G=(U,V,E_1,\ldots,E_{\ell})$ of size $(\vM,\vN)$.
If $E_{\ell}\cap (E_1\cup \cdots \cup E_{\ell-1}) =\emptyset$,
then the graph $G = (U,V,E_1,\ldots,E_{\ell})$
is the desired $(C,D)$-biregular $\ell$-type graph of size $(\vM,\vN)$,
and we are done.

Now suppose $E_{\ell}\cap (E_1\cup \cdots \cup E_{\ell-1}) \neq \emptyset$.
We are going to construct another graph $G_2' = (U,V,E_{\ell}')$
such that
\begin{eqnarray*}
|E_{\ell}'\cap (E_1\cup \cdots \cup E_{\ell-1})|
& < &
|E_{\ell}\cap (E_1\cup \cdots \cup E_{\ell-1})|
\end{eqnarray*}
We do this repeatedly until at the end
we obtain a graph $G_2'' = (V,E_{\ell}'')$ such that
$E_{\ell}''\cap (E_1\cup \cdots \cup E_{\ell-1}) = \emptyset$.

Let $(u,v) \in E_{\ell}\cap (E_1\cup \cdots \cup E_{\ell-1})$.
The number of vertices reachable in from $u$ and $v$
within distance $2$ (by any of edges in $E_1,\ldots,E_{\ell}$) 
is $\leq 2(C\cdot\vone)(D\cdot\vone)+2$.
Since $\vM\cdot\chi(\vc_{\ell})+\vN\cdot\chi(\vd_{\ell}) \geq 2(C\cdot\vone)(D\cdot\vone)+3$,
there exists $(u',v') \in E_{\ell} $ such that
$(u,u'), (v,v') \notin E_1\cup\cdots\cup E_{\ell}$.
See the left side of the illustration below.

\begin{picture}(200,42)(-103,-19)

\put(-93,12){$u$}
\put(-90,10){\circle*{1}}
\qbezier(-90,10)(-75,22)(-60,10)
\put(-80,18){$\in E_1\cup\cdots\cup E_{\ell-1}$}
\qbezier(-90,10)(-75,4)(-60,10)
\put(-75,4){$\in E_{\ell}$}
\put(-59,12){$v$}
\put(-60,10){\circle*{1}}

\put(-93,-13){$u'$}
\put(-90,-15){\circle*{1}}
\put(-77,-14){$\in E_{\ell}$}
\put(-90,-15){\line(1,0){30}}
\put(-59,-13){$v'$}
\put(-60,-15){\circle*{1}}

\put(-47,0){$\Rightarrow$}

\put(1,12){$v$}
\put(0,10){\circle*{1}}
\put(-20,18){$\in E_1\cup\cdots\cup E_{\ell-1}$}
\qbezier(0,10)(-15,22)(-30,10)
\put(-32,12){$u$}
\put(-30,10){\circle*{1}}
\put(0,10){\line(-6,-5){30}}
\put(-7,3){$\in E_{\ell}'$}

\put(1,-13){$v'$}
\put(0,-15){\circle*{1}}
\put(-9,-7){$\in E_{\ell}'$}
\put(0,-15){\line(-6,5){30}}
\put(-33,-13){$u'$}
\put(-30,-15){\circle*{1}}

\end{picture}

Now we define $E_{\ell}'$ by deleting the edges $(u,v),(u',v')$ from $E_{\ell}$,
while adding the edges $(u,v'),(u',v)$ into $E_{\ell}$.
Formally,
\begin{eqnarray*}
E_{\ell}' & := &
(E_{\ell} - \{(u,v),(u',v')\}) \cup \{(u,u'),(v,v')\}
\end{eqnarray*}
See the right side of the illustration above.

Now it is straightforward that
$G' = (U,V,E_{\ell}')$ is still a $\vd_{\ell}$-regular graph of size $\vN$,
while
\begin{eqnarray*}
|E_{\ell}'\cap (E_1\cup \cdots \cup E_{\ell-1})|
& < &
|E_{\ell}\cap (E_1\cup \cdots \cup E_{\ell-1})|
\end{eqnarray*}
We perform this operation until $E_{\ell+1}\cap (E_1\cup \cdots \cup E_{\ell})=\emptyset$.
This completes the proof of Theorem~\ref{t: l-type vd-biregular}.
\end{proof}

\subsection{For $C \in \bbB^{\ell\times n}$ and $D\in\bbB^{\ell\times m}$
when the number of vertices is ``big enough''}
\label{ss: ugeq biregular}

Let $C \in \bbB^{\ell\times m}$
and $D \in \bbB^{\ell\times n}$,
where $\vc_1,\ldots,\vc_{\ell}$ and $\vd_1,\ldots,\vd_{\ell}$
are the row vectors of $C$ and $D$, respectively.

Two vectors $\vM = (M_1,\ldots,M_m)\in \bbN^m$ 
and $\vN = (N_1,\ldots,N_n) \in \bbN^n$
are ``big enough'' with respect to $C,D$, if
the following inequalities hold.
\begin{eqnarray}
\label{eq: big1}
& & \bigwedge_{1\leq i \leq \ell} 
\quad\Bigg(
\vM\cdot \chi(\vc_i) + \vN\cdot\chi(\vd_i)
\quad\geq\quad 2(C\cdot\vone)(D\cdot\vone) +3
\Bigg)
\\
\label{eq: big2}
 \wedge \qquad & &
\bigwedge_{1\leq i \leq \ell}  \quad
\Bigg(
\Big(
\sum_{j \ \mbox{\scriptsize such that}\ D_{i,j} \in \ugeqN} \ N_j 
\Big)
\quad \geq\quad  \max_{1 \leq j \leq m}(C_{i,j})
\Bigg)
\\
\label{eq: big3}
 \wedge \qquad & &
\bigwedge_{1\leq i \leq \ell}  \quad
\Bigg(
\Big(
\sum_{j \ \mbox{\scriptsize such that}\ C_{i,j} \in \ugeqN} \ M_j 
\Big)
\quad \geq \quad  \max_{1 \leq j \leq m}(D_{i,j})
\Bigg)
\end{eqnarray}

We need a few notations.
In the following for a positive integer $\ell$,
$\id_{\ell}$ denotes the $(\ell\times \ell)$ identity matrix.
For ${\ugeq d} \in {\ugeq\bbN}$, we write $\floor{\ugeq d}$ to denote the number $d$.
By default, we set $\floor{d}=d$.
We also define the $+$ operations on $\bbB$ as follows.
\begin{eqnarray*}
d_1 + d_2 & = & (d_1+d_2)
\\
{\ugeq d_1} + d_2 & = & 
d_1 + {\ugeq d_2} \ = \ 
{\ugeq d_1} + {\ugeq d_2} \ = \
{\ugeq {(d_1+d_2)}}
\end{eqnarray*}
We extend $\floor {\cdot}$ and $+$ to vectors and matrices over $\bbB$
in the natural way, where they are applied componentwise.
For two vectors $\vt_1,\vt_2 \in \bbB^m$,
we define the dot product $\vt_1\cdot\vt_2$ as 
$\floor{\vt_1}\cdot\floor{\vt_2}$.
For a matrix 
$D  \in \bbB^{\ell\times m}$,
we write $D \cdot \vone$ to denote the sum $\sum_{i,j} \floor{D_{i,j}}$.


The lemma below characterises the existence of $(C,D)$-biregular graph
of size $(\vM,\vN)$,
where $\vM,\vN$ are big enough with respect to $(C,D)$
and that for every row $i$,
either the row-$i$ of $C$ or of $D$
contains only elements from $\bbN$.

\begin{lemma}
\label{l: ugeq biregular}
Let 
$C \in 
\left(
\begin{array}{c}
C^{(1)}
\\
C^{(2)}
\\
C^{(3)}
\end{array}
\right) \in \bbB^{\ell\times m}$ 
and 
$D \in 
\left(
\begin{array}{c}
D^{(1)}
\\
D^{(2)}
\\
D^{(3)}
\end{array}
\right) \in \bbB^{\ell\times n}$ 
where $\ell = \ell_1+\ell_2+\ell_3$ and
\begin{itemize}\itemsep=0pt
\item
$C^{(1)} \in \bbN^{\ell_1\times m}$ and $D^{(1)} \in \bbN^{\ell_1\times n}$;
\item
$C^{(2)} \in \bbN^{\ell_2\times m}$ and $D^{(2)} \in \bbB^{\ell_2\times n}$
and every row in $D^{(2)}$ contains an element of ${\ugeq\bbN}$;
\item
$C^{(3)} \in \bbB^{\ell_3\times m}$ and $D^{(3)} \in \bbN^{\ell_3\times n}$
and every row in $C^{(3)}$ contains an element of ${\ugeq\bbN}$.
\end{itemize}
Let $\vM \in \bbN^{m}$ and $\vN \in \bbN^n$ be big enough with respect to $C$ and $D$.
Then the following holds.
There is a $(C,D)$-biregular graph of size $(\vM,\vN)$
if and only if 
$\bireg_{C',D'}((\vM,K_1,\ldots,K_{\ell_3}),(\vN,L_1,\ldots,L_{\ell_2}))$ holds,
where 
\begin{itemize}
\item
$C' = 
\left(\begin{array}{c|c} 
C^{(1)} & 0 
\\ \hline 
C^{(2)} & 0 
\\ \hline
\floor{C^{(3)}} & \id_{\ell_3}
\end{array}\right) \in \bbN^{\ell\times (m+\ell_3)}$
and
$D' =
\left(\begin{array}{c|c} 
D^{(1)} & 0 
\\ \hline 
\floor{D^{(2)}} & \id_{\ell_2}  
\\ \hline
D^{(3)} & 0
\end{array}\right) \in \bbN^{\ell\times (n+\ell_2)}$
\item
each
$K_i = \vd_{\ell_1+\ell_2+i}\cdot \vN - \floor{\vc_{\ell_1+\ell_2+i}}\cdot \vM$,
\item
each
$L_i = \vc_{\ell_1+i}\cdot \vM - \floor{\vd_{\ell_1+i}}\cdot \vN$.
\end{itemize}
\end{lemma}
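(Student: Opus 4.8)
The plan is to reduce the existence of a $(C,D)$-biregular graph carrying ${\ugeq\bbN}$-entries to the purely integral case already settled in Theorem~\ref{t: l-type vd-biregular}, by turning every ``at least'' requirement into an ``exactly'' requirement at the cost of a controlled number of \emph{slack} vertices. For a block-two colour (a row $\ell_1+i$, where the ${\ugeq\bbN}$-entries sit on the $V$-side) the extra column of $D'$ that carries the identity block $\id_{\ell_2}$ does the job: each of its vertices absorbs exactly one surplus $E_{\ell_1+i}$-edge, and the number needed is precisely the total surplus $L_i=\vc_{\ell_1+i}\cdot\vM-\floor{\vd_{\ell_1+i}}\cdot\vN$. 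Symmetrically a block-three colour (row $\ell_1+\ell_2+i$, with the ${\ugeq\bbN}$-entries on the $U$-side) is handled by the $\id_{\ell_3}$-column of $C'$, using $K_i=\vd_{\ell_1+\ell_2+i}\cdot\vN-\floor{\vc_{\ell_1+\ell_2+i}}\cdot\vM$ slack vertices. Thus it suffices to prove the combinatorial equivalence ``a $(C,D)$-biregular graph of size $(\vM,\vN)$ exists iff a $(C',D')$-biregular graph of size $((\vM,\vK),(\vN,\vL))$ exists'', where $\vK=(K_1,\dots,K_{\ell_3})$ and $\vL=(L_1,\dots,L_{\ell_2})$; the Presburger equivalence then follows by feeding the latter existence into Theorem~\ref{t: l-type vd-biregular}.

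For the forward direction I would start from a $(C,D)$-biregular graph and peel off surplus edges. Fix a block-two colour $\ell_1+i$. Since $C^{(2)}$ is integral, the $U$-side already fixes the number of $E_{\ell_1+i}$-edges at $\vc_{\ell_1+i}\cdot\vM$, so the total surplus of the $V$-side degrees over their required minimum $\floor{\vd_{\ell_1+i}}\cdot\vN$ is exactly $L_i$. Detaching each surplus edge from its $V$-endpoint and re-hanging it on a fresh degree-one vertex therefore creates exactly $L_i$ new vertices, each with a single $E_{\ell_1+i}$-edge, which is the slack column prescribed by $\id_{\ell_2}$. Performing the symmetric peeling on the $U$-side for every block-three colour produces the $K_i$ slack vertices, and the resulting graph meets $C'$ and $D'$ \emph{exactly}. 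No largeness assumption is needed here.

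The reverse direction is where the hypotheses~(\ref{eq: big1})--(\ref{eq: big3}) enter, and where I expect the real work. Starting from a $(C',D')$-biregular graph I would reabsorb the slack: each slack $V$-vertex carries one $E_{\ell_1+i}$-edge to some real vertex $u$, and I reassign that edge to a real $V$-vertex lying in a ${\ugeq\bbN}$-column of colour $\ell_1+i$ (such columns exist because every row of $D^{(2)}$ has a ${\ugeq\bbN}$-entry); this only raises a degree above its floor, which is legal precisely because the relevant entry lies in ${\ugeq\bbN}$, and it leaves the exact $U$-degrees untouched. The sole obstruction is creating a doubled edge between $u$ and its new target. Because a fixed $u$ emits at most $\max_j C_{\ell_1+i,j}$ edges of colour $\ell_1+i$, inequality~(\ref{eq: big2}) supplies at least that many ${\ugeq\bbN}$-targets on the $V$-side, so $u$'s reassigned edges can be placed on distinct vertices; inequality~(\ref{eq: big3}) does the symmetric job on the $U$-side for block-three colours. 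Any doubled edge that still survives is cleared by the swap of Lemma~\ref{l: vd-biregular}, replacing $(u,v),(u',v')$ by $(u,v'),(u',v)$, and inequality~(\ref{eq: big1}) guarantees that within each colour the number of vertices exceeds the $2(C\cdot\vone)(D\cdot\vone)+2$ vertices within distance two that could block the swap, so a partner is always available. Deleting the emptied slack vertices then leaves a $(C,D)$-biregular graph of size $(\vM,\vN)$.

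The main obstacle is exactly this reverse reassignment: one must show the slack can be spread over the ${\ugeq\bbN}$-vertices without ever being forced to double an edge, and the delicate point is matching each largeness bound to the right quantity -- the per-vertex colour degree for~(\ref{eq: big2})--(\ref{eq: big3}) and the distance-two neighbourhood for~(\ref{eq: big1}). Two minor bookkeeping remarks complete the argument: the displayed $K_i$ and $L_i$ are forced to be non-negative whenever either side of the equivalence holds, so the substituted size $((\vM,\vK),(\vN,\vL))$ is always legitimate; and since the peeling and reabsorption for distinct colours act on disjoint edge classes, it is enough to carry out the construction one colour of one block at a time.
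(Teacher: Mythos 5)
Your proposal follows essentially the same route as the paper's proof: the forward direction detaches each surplus edge onto a fresh degree-one slack vertex (exactly $L_i$, resp.\ $K_i$, of them per colour, with no largeness hypothesis needed), and the reverse direction reabsorbs the slack edges onto real vertices in ${\ugeq\bbN}$-columns using Inequalities~(\ref{eq: big2}) and~(\ref{eq: big3}), before handing the all-integral case to Theorem~\ref{t: l-type vd-biregular}. The only divergence is that you additionally invoke Inequality~(\ref{eq: big1}) and the edge-swap of Lemma~\ref{l: vd-biregular} to clear doubled edges during reabsorption -- a point the paper glosses over here (it explicitly reserves~(\ref{eq: big1}) for Theorem~\ref{t: ugeq biregular}) -- which is extra care rather than a different argument.
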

\begin{proof}
Let 
$C \in 
\left(
\begin{array}{c}
C^{(1)}
\\
C^{(2)}
\\
C^{(3)}
\end{array}
\right) \in \bbB^{\ell\times m}$ 
and 
$D \in 
\left(
\begin{array}{c}
D^{(1)}
\\
D^{(2)}
\\
D^{(3)}
\end{array}
\right) \in \bbB^{\ell\times n}$ and
$\ell_1,\ell_2,\ell_3$,
$\vM \in \bbN^{m}$ and $\vN \in \bbN^n$ be as in the premises.
We also assume that $\vc_1,\ldots,\vc_{\ell}$
and $\vd_1,\ldots,\vd_{\ell}$
are the row vectors of $C$ and $D$, respectively.

Before we present our proof,
we have to remark here that
we do not need the condition that $\vM$ and $\vN$
are big enough to establish the ``only if'' direction.
For the ``if'' direction, we only need Inequalities~\ref{eq: big2} and~\ref{eq: big3}.
Inequality~\ref{eq: big1} is needed only to established Theorem~\ref{t: ugeq biregular}.

We start with the ``only if'' direction.
Suppose $G=(U,V,E_1,\ldots,E_{\ell})$ 
is a $(C,D)$-biregular graph of size $(\vM,\vN)$.
This means there exist a partition $U = U_1 \cup \cdots \cup U_m$
and a partition $V = V_1\cup \cdots \cup V_n$ such that
for each $i =1,\ldots,\ell$,
\begin{itemize}\itemsep=0pt
\item
for each $j =1,\ldots,m$, 
for each $u \in U_j$,
$\deg_{E_i}(u) = C_{i,j}$;
\item
for each $j =1,\ldots,n$, 
for each $v \in V_j$,
$\deg_{E_i}(v) = D_{i,j}$.
\end{itemize}
Now, the following holds.
\begin{itemize}\itemsep=0pt
\item
For each $i = \ell_1+1,\ldots,\ell_1+\ell_2$,
the number of $E_i$-edges in $G$ is $\vc_i\cdot \vM$,
which should be greater than $\floor{\vd_i}\cdot \vN$.
We set $L_{i-\ell_1}= \vc_i\cdot \vM - \floor{\vd_i}\cdot \vN$.
\item
For each $i = \ell_1+\ell_2+1,\ldots,\ell_1+\ell_2+\ell_3$,
the number of $E_i$-edges in $G$ is $\vd_i\cdot \vN$,
which should be greater than $\floor{\vc_i}\cdot \vM$.
We set $K_{i-\ell_1-\ell_2}= \vd_i\cdot \vN - \floor{\vc_i}\cdot \vM$.
\end{itemize}
Let $C',D'$ be as defined in the lemma,
and $\vK = (K_1,\ldots,K_{\ell_3})$
and $\vL = (L_1,\ldots,L_{\ell_2})$.
We construct a $(C',D')$-biregular graph of size 
$((\vM,\vK),(\vN,\vL))$ as follows.
\begin{itemize}\itemsep=0pt
\item
For each $i=\ell_1+1,\ldots,\ell_1+\ell_2$,
for each $j = 1,\ldots,n$,
for each vertex $v \in V_j$,
if $\deg_{E_i}(v) > \floor{D_{i,j}}$,
then we ``split'' $v$ into $v_0,v_1,\ldots,v_k$ vertices,
where 
\begin{itemize}\itemsep=0pt
\item
$k = \deg_{E_i}(v) - \floor{D_{i,j}}$,
\item
$\deg_{E_i}(v_0) = \floor{D_{i,j}}$,
and for each $i' \neq i$,
$\deg_{E_{i'}}(v_0) = \deg_{E_{i'}}(v)$,
\item
$\deg_{E_i}(v_1) = \deg_{E_i}(v_2) = \cdots = \deg_{E_i}(v_k) = 1$,
and for each $i' \neq i$,
$\deg_{E_{i'}}(v_1) = \deg_{E_{i'}}(v_2) = \cdots = \deg_{E_{i'}}(v_k) = 0$.
\end{itemize}

\item
Similarly, for each $i=\ell_1+\ell_2+1,\ldots,\ell_1+\ell_2+\ell_3$,
for each $j = 1,\ldots,m$,
for each vertex $u \in U_j$,
if $\deg_{E_i}(u) > \floor{C_{i,j}}$,
then we ``split'' $u$ into $u_0,u_1,\ldots,u_k$ vertices,
where 
\begin{itemize}\itemsep=0pt
\item
$k = \deg_{E_i}(u) - \floor{D_{i,j}}$,
\item
$\deg_{E_i}(u_0) = \floor{D_{i,j}}$,
and for each $i' \neq i$,
$\deg_{E_{i'}}(u_0) = \deg_{E_{i'}}(u)$,
\item
$\deg_{E_i}(u_1) = \deg_{E_i}(u_2) = \cdots = \deg_{E_i}(u_k) = 1$,
and for each $i' \neq i$,
$\deg_{E_{i'}}(u_1) = \deg_{E_{i'}}(u_2) = \cdots = \deg_{E_{i'}}(u_k) = 0$.
\end{itemize}
\end{itemize}
It should be obvious that the resulting graph is 
a $(C',D')$-biregular graph of size 
$((\vM,\vK),(\vN,\vL))$.

Now we prove the ``if'' direction.
Suppose $\bireg_{C',D'}((\vM,\vK),(\vN,\vL))$ holds,
where each $L_i = \vc_i\cdot \vM - \floor{\vd_i}\cdot \vN$
and $K_i = \vd_i\cdot \vN - \floor{\vc_i}\cdot \vM$.

By Theorem~\ref{t: l-type vd-biregular},
there exists a $(C',D')$-biregular graph $G$ of 
size $((\vM,\vK),(\vN,\vL))$.
Let $G = (U\cup A,V\cup B,E_1,\ldots,E_{\ell})$, where 
$U \cup A= U_1\cup\cdots\cup U_{m}\cup A_1\cup \cdots \cup A_{\ell_3}$ and
$V \cup B= V_1\cdots \cdots \cup V_n \cup B_1\cup\cdots \cup B_{\ell_2}$
are the witness of the $(C',D')$-biregularity.

To construct a $(C,D)$-biregular graph of size $(\vM,\vN)$,
we do the following.
For each vertex $u \in U$ adjacent by $E_i$-edges to, say, $s$ vertices in $B$,
we pick $s$ vertices $v_1,\ldots,v_s$ from the set 
$$
\bigcup_{j \ \mbox{\scriptsize such that}\ d_{i,j} \in \ugeqN} \ V_j 
$$
Such $s$ vertices exist since by Inequality~\ref{eq: big2},
$\sum_{j \ \mbox{\scriptsize such that}\ d_{i,j} \in \ugeqN} \ N_j$
is $\geq \max(\vc_i) \geq \deg(u)$.
We delete those $s$ vertices in $B$,
and connect $u$ to each of $v_1,\ldots,v_s$ by $E_i$-edges.
We do this until the set $B$ is empty.
Similarly, by Inequality~\ref{eq: big3},
we can perform similar operations until the set $A$ is empty.
The resulting graph is a $(C,D)$-biregular graph of size $(\vM,\vN)$.
This completes the proof of Lemma~\ref{l: ugeq biregular}.
\end{proof}

Now Lemma~\ref{l: ugeq biregular} tells us the Presburger formula $\bireg_{C,D}$
for a pair of matrices satisfying the assumption given in Lemma~\ref{l: ugeq biregular}.
More formally, let  
$C \in 
\left(
\begin{array}{c}
C^{(1)}
\\
C^{(2)}
\\
C^{(3)}
\end{array}
\right) \in \bbB^{\ell\times m}$ 
and 
$D \in 
\left(
\begin{array}{c}
D^{(1)}
\\
D^{(2)}
\\
D^{(3)}
\end{array}
\right) \in \bbB^{\ell\times n}$ 
where $\ell = \ell_1+\ell_2+\ell_3$ and
\begin{itemize}\itemsep=0pt
\item
$C^{(1)} \in \bbN^{\ell_1\times m}$ and $D^{(1)} \in \bbN^{\ell_1\times n}$;
\item
$C^{(2)} \in \bbN^{\ell_2\times m}$ and $D^{(2)} \in \bbB^{\ell_2\times n}$
and every row in $D^{(2)}$ contains an element of ${\ugeq\bbN}$;
\item
$C^{(3)} \in \bbB^{\ell_3\times m}$ and $D^{(3)} \in \bbN^{\ell_3\times n}$
and every row in $C^{(3)}$ contains an element of ${\ugeq\bbN}$.
\end{itemize}
That is, for such $C,D$,
we let $\bireg_{C,D}(\vX,\vY)$
as follows.
\begin{eqnarray}
\label{eq: formula}
\bireg_{C,D}(\vX,\vY) & := &
\exists Z_1 \cdots \exists Z_{\ell_3}
\
\exists Z_1' \cdots \exists Z_{\ell_2}'
\\
\nonumber
& & \qquad\quad
\bireg_{C',D'} (\vX,Z_1,\ldots,Z_{\ell_3}, \vY,Z_1',\ldots,Z_{\ell_2}')
\end{eqnarray}
where 
$C' = 
\left(\begin{array}{c|c} 
C^{(1)} & 0 
\\ \hline 
C^{(2)} & 0 
\\ \hline
\floor{C^{(3)}} & I_{\ell_3}
\end{array}\right) \in \bbN^{\ell\times (m+\ell_3)}$
and
$D' =
\left(\begin{array}{c|c} 
D^{(1)} & 0 
\\ \hline 
\floor{D^{(2)}} & I_{\ell_2}  
\\ \hline
D^{(3)} & 0
\end{array}\right) \in \bbN^{\ell\times (n+\ell_2)}$.
Since $C',D'$ consist of entirely $\bbN$ entries,
$\bireg_{C',D'}$ is defined as in Theorem~\ref{t: l-type vd-biregular}.
Intuitively, the variables $Z_1,\ldots,Z_{\ell_3}$
are to capture the values $K_1,\ldots,K_{\ell_3}$
and $Z_1',\ldots,Z_{\ell_2}$ the values $L_1,\ldots,L_{\ell_2}$,
as stated in Lemma~\ref{l: ugeq biregular}.

Using this, we can prove the following theorem.
\begin{theorem}
\label{t: ugeq biregular}
For every
$C \in \bbB^{\ell\times m}$ 
and 
$D \in \bbB^{\ell\times n}$,
there is a Presburger formula $\gbireg_{C,D}(\vX,\vY)$
such that for every $\vM,\vN$ big enough with respect to $C,D$,
the following holds.
There exists a $(C,D)$-biregular graph of size $(\vM,\vN)$
if and only if 
the statement 
$\gbireg_{C,D}(\vM,\vN)$ holds.
\end{theorem}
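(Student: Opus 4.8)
The plan is to reduce the general case to Lemma~\ref{l: ugeq biregular} by isolating the one kind of row that lemma leaves uncovered. Lemma~\ref{l: ugeq biregular} applies precisely when, for every colour $i$, at least one of the two rows $\vc_i,\vd_i$ lies entirely in $\bbN$; the remaining situation is a colour $i$ whose rows $\vc_i$ \emph{and} $\vd_i$ both contain an entry from $\ugeqN$. Accordingly, I would partition $\{1,\ldots,\ell\}$ into the set $\bar{S}$ of rows with $\vc_i\in\bbN^m$ or $\vd_i\in\bbN^n$, and the set $S$ of rows carrying $\ugeqN$-entries on both sides. Writing $C_{\bar{S}},D_{\bar{S}}$ for the submatrices cut out by the rows of $\bar{S}$, each such row falls into one of the three blocks $C^{(1)}/D^{(1)}$, $C^{(2)}/D^{(2)}$, $C^{(3)}/D^{(3)}$, so $C_{\bar{S}},D_{\bar{S}}$ satisfy the hypotheses of Lemma~\ref{l: ugeq biregular} and the formula $\bireg_{C_{\bar{S}},D_{\bar{S}}}$ of Equation~\ref{eq: formula} is available. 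I would simply set $\gbireg_{C,D}(\vX,\vY):=\bireg_{C_{\bar{S}},D_{\bar{S}}}(\vX,\vY)$. Since $C_{\bar{S}}\cdot\vone\leq C\cdot\vone$ and $D_{\bar{S}}\cdot\vone\leq D\cdot\vone$, being big enough with respect to $C,D$ forces being big enough with respect to $C_{\bar{S}},D_{\bar{S}}$, so Lemma~\ref{l: ugeq biregular} may be invoked. Correctness then rests on the following equivalence, valid whenever $\vM,\vN$ are big enough: there is a $(C,D)$-biregular graph of size $(\vM,\vN)$ if and only if there is a $(C_{\bar{S}},D_{\bar{S}})$-biregular graph of size $(\vM,\vN)$.

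The forward implication is immediate: given a $(C,D)$-biregular graph $G$ of size $(\vM,\vN)$, discarding the edge classes $E_i$ with $i\in S$ while keeping the same vertex partition leaves a $(C_{\bar{S}},D_{\bar{S}})$-biregular graph of the same size, whence $\bireg_{C_{\bar{S}},D_{\bar{S}}}(\vM,\vN)$ holds by Lemma~\ref{l: ugeq biregular}.

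The substantial direction is to put the missing colours back. Starting from a $(C_{\bar{S}},D_{\bar{S}})$-biregular graph $G_0=(U,V,\{E_i\}_{i\in\bar{S}})$ supplied by Lemma~\ref{l: ugeq biregular}, I would add the colours $i\in S$ one at a time, exploiting the fact that an $\ugeqN$-entry imposes only a lower bound and does \emph{not} force the vertices of a block to share a common degree. For a fixed $i\in S$, assume without loss of generality that $\floor{\vc_i}\cdot\vM\leq\floor{\vd_i}\cdot\vN$ (otherwise exchange the roles of $U$ and $V$). I would first build, on the vertex set $(U,V)$ alone, a simple bipartite edge set realising $\vc_i,\vd_i$: let every vertex of $V_{j}$ carry exactly its floor demand, so the $V$-side offers $\floor{\vd_i}\cdot\vN$ endpoints, and distribute these among the $U$-vertices so that each $U_j$-vertex receives at least $\floor{(c_i)_j}$, parking the surplus $\floor{\vd_i}\cdot\vN-\floor{\vc_i}\cdot\vM$ on the $\ugeqN$-blocks of $\vc_i$. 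This is exactly Proposition~\ref{p: d-biregular}(c) carried out blockwise, and Inequalities~\ref{eq: big2} and~\ref{eq: big3} supply enough vertices to keep the graph simple. This $E_i$ is built in isolation, so it may collide with edges already laid down in the other colours. To repair the collisions I would reuse the swap argument from Theorem~\ref{t: l-type vd-biregular}: while some $(u,v)\in E_i$ also belongs to another colour, pick $(u',v')\in E_i$ with $u'$ non-adjacent to $v$ and $v'$ non-adjacent to $u$ in every colour, and replace the pair $(u,v),(u',v')$ by $(u,v'),(u',v)$. Such a pair exists because at most $2(C\cdot\vone)(D\cdot\vone)+2$ vertices lie within distance two of $u$ and $v$, which is strictly less than the $2(C\cdot\vone)(D\cdot\vone)+3$ vertices guaranteed in the relevant blocks by Inequality~\ref{eq: big1}. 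The swap leaves every vertex's $E_i$-degree unchanged, so the lower bounds $\vc_i,\vd_i$ stay satisfied, touches no other colour, and strictly lowers the number of conflicting $E_i$-edges; iterating makes $E_i$ disjoint from all previous classes, and iterating over $i\in S$ yields the desired $(C,D)$-biregular graph.

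The step I expect to be the main obstacle is the blockwise realisation of a single colour with $\ugeqN$-entries on both sides together with its subsequent disentangling: one must verify that the surplus endpoints can be spread over the $\ugeqN$-blocks so as to produce a simple graph, that this consumes no more room than Inequalities~\ref{eq: big2}--\ref{eq: big3} provide, and that the swap repair preserves the non-uniform degrees it created while terminating. Once this bookkeeping is discharged, the rest is a transcription of the disjointness argument already established for Theorem~\ref{t: l-type vd-biregular}, combined with the observation that a colour in $S$ contributes no equational constraint and is therefore safely dropped from $\gbireg_{C,D}$.
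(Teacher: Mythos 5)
Your reduction is genuinely different from the paper's, and the difference is exactly where it breaks. The paper never discards a row of $(C,D)$: for each row $i$ it guesses the order relation between $\vX\cdot\floor{\vc_i}$ and $\vY\cdot\floor{\vd_i}$ and floors the row on the appropriate side, so that in the resulting pair of matrices every row is all-$\bbN$ on at least one side, Lemma~\ref{l: ugeq biregular} applies, and the edge count of row $i$ survives in the formula (through the slack values $K_i,L_i$) while the row keeps participating in the disjointness bookkeeping. You instead delete every row whose two sides both contain $\ugeqN$-entries, on the grounds that such a row ``contributes no equational constraint.'' That is false: such a row may still have exact entries in some columns, and these force $\floor{\vc_i}\cdot\vM$ (resp.\ $\floor{\vd_i}\cdot\vN$) edges of colour $i$ to be placed, consuming adjacency capacity that the remaining colours need. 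Concretely, take $\ell=m=n=2$,
\[
C \;=\; \left(\begin{array}{cc} 0 & 1 \\ {\ugeq 0} & 7\end{array}\right),
\qquad
D \;=\; \left(\begin{array}{cc} {\ugeq 0} & 0 \\ {\ugeq 0} & 5\end{array}\right),
\qquad
\vM=(5,M_2),\quad \vN=(7,0).
\]
Row $1$ lies in your $\bar S$ and row $2$ in your $S$. For every $M_2\geq 83$ the pair $(\vM,\vN)$ satisfies Inequalities~(\ref{eq: big1})--(\ref{eq: big3}), hence is big enough with respect to $C,D$; a $(C_{\bar S},D_{\bar S})$-biregular graph of this size exists (route the $M_2$ colour-$1$ edges from $U_2$ into the seven vertices of $V_1$), so your formula $\bireg_{C_{\bar S},D_{\bar S}}(\vM,\vN)$ holds; yet no $(C,D)$-biregular graph of this size exists, because each vertex of $U_2$ would need $1+7=8$ distinct neighbours in a set $V$ of only $7$ vertices.

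The same example pinpoints the step of your re-insertion argument that fails. After you realise a colour $i\in S$ in isolation, the vertices of the $\ugeqN$-blocks absorb the surplus between $\floor{\vc_i}\cdot\vM$ and $\floor{\vd_i}\cdot\vN$, so their actual $E_i$-degrees are not bounded by any function of $C$ and $D$: they can be of order $|U|$ or $|V|$, and in the example above every pair in $U_2\times V_1$ is forced to be an $E_2$-edge. Consequently the pigeonhole count ``at most $2(C\cdot\vone)(D\cdot\vone)+2$ vertices lie within distance two of $u$ and $v$,'' which is what makes the swap of Theorem~\ref{t: l-type vd-biregular} available in the all-$\bbN$ setting, does not hold here, and there need not exist any edge $(u',v')\in E_i$ with $u'$ non-adjacent to $v$ and $v'$ non-adjacent to $u$; in the example there is nothing to swap at all. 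So the colours of $S$ cannot in general be dropped from the formula and re-inserted afterwards; they must be kept in the system, which is precisely what the paper's disjunction over the partitions $I_0,I_1,I_2$ and the flooring operation $C(I),D(I)$ accomplish.
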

\begin{proof}
Let $C \in \bbB^{\ell\times m}$ 
and 
$D \in \bbB^{\ell\times n}$,
where $\vc_1,\ldots,\vc_{\ell}$ and $\vd_1,\ldots,\vd_{\ell}$
are the row vectors of $C$ and $D$, respectively.

We need an additional notation.
For a set $I \subseteq \{1,\ldots,\ell\}$,
we write $C(I)$ be the matrix $C'$,
in which each row vector $\vc_i'$ is defined
as $\vc_i' = \floor{\vc_i}$, if $i \in I$,
and $\vc_i' = \vc_i$, if $i \not\in I$.
We can define $D(I)$ similarly.

We define the formula $\bireg_{C,D}(\vX,\vY)$ as follows.
\begin{eqnarray*}
\gbireg_{C,D}(\vX,\vY) & := &
\bigvee_{I_0,I_1,I_2}
\left(
\begin{array}{l}
\bigwedge_{i \in I_0} \vX \cdot \floor{\vc_i} = \vY\cdot \floor{\vd_i} \quad \wedge
\\
\bigwedge_{i \in I_1} \vX \cdot \floor{\vc_i} > \vY\cdot \floor{\vd_i} \quad \wedge
\\
\bigwedge_{i \in I_2} \vX \cdot \floor{\vc_i} < \vY\cdot \floor{\vd_i} \quad \wedge
\\
\bireg_{C(I_0\cup I_2), D(I_0\cup I_1)}(\vX,\vY)
\end{array}
\right)
\end{eqnarray*}
where each $\bireg_{C(I_0\cup I_2), D(I_0\cup I_1)}(\vX,\vY)$
is as defined in Equation~\ref{eq: formula} and
$I_0,I_1,I_2$ range over the partition $I_0\cup I_1\cup I_2 = \{1,2,\ldots,\ell\}$.
Obviously, for every partition $I_0 \cup I_1 \cup I_2 = \{1,\ldots,\ell\}$,
on each row $i=1,\ldots,\ell$,
either row-$i$ from $C(I_0\cup I_2)$,
or row-$i$ from $D(I_0\cup I_1)$,
or row-$i$ from both
consists entirely of $\bbN$.
(Our intention is the application of Lemma~\ref{l: ugeq biregular} later on.)

We are going to prove that the formula $\gbireg_{C,D}(\vX,\vY)$
is the desired formula.
The ``if'' direction follows from Lemma~\ref{l: ugeq biregular}
and that every $C(I_0\cup I_2), D(I_0\cup I_1)$-biregular graph
is obviously also a $(C,D)$-biregular graph.

Now we prove the ``only if'' direction.
Suppose $\vM,\vN$ are big enough for $C,D$.
Let $G=(U,V,E_1,\ldots,E_{\ell})$ be a $(C,D)$-biregular graph of size $(\vM,\vN)$, 
where $U = U_1\cup\cdots\cup U_m$
and $V = V_1 \cup\cdots\cup V_n$
be the witness of the $C,D$-biregularity.

We pick the following paritition $I_0\cup I_1\cup I_2=\{1,\ldots,\ell\}$.
\begin{eqnarray*}
I_0 & = & \{ i \mid \vM \cdot \floor{\vc_i} = \vM\cdot \floor{\vd_i} \}
\\
I_1 & = & \{ i \mid \vM \cdot \floor{\vc_i} > \vM\cdot \floor{\vd_i} \}
\\
I_2 & = & \{ i \mid \vM \cdot \floor{\vc_i} < \vM\cdot \floor{\vd_i} \}
\end{eqnarray*}
We are going to convert the graph $G$ into $(C(I_0\cup I_1),D(I_0,I_2))$-biregular
graph in which $U = U_1\cup\cdots\cup U_m$
and $V = V_1 \cup\cdots\cup V_n$
are also the witness of the $(C(I_0\cup I_1),D(I_0,I_2))$-biregularity.
This, together with Lemma~\ref{l: ugeq biregular}, 
implies that $\bireg_{C(I_0\cup I_1),D(I_0,I_2)}(\vM,\vN)$ holds,
and hence, our theorem.

If $G$ is already a $(C(I_0\cup I_1),D(I_0,I_2))$-biregular graph,
then we are done.
Suppose that $G$ is not.
We do the following three stages.
\\
\underline{\em Stage 1.}
We assume that the following holds for the graph $G$.
For every edge $(u,v)\in E_i$ in $G$, either
\begin{equation}
\label{eq: no-redundant}
\deg_{E_i}(u)= \floor{C_{i,j}}
\quad \mbox{or} \quad
\deg_{E_i}(v)= \floor{D_{i,k}}.
\end{equation}
This can be achieved by doing the following.
Suppose there is an edge $(u,v)\in E_i$
such that $\deg_{E_i}(u) > \floor{C_{i,j}}$
and $\deg_{E_i}(v)> \floor{D_{i,k}}$.
Since $G$ is $(C,D)$-biregular,
this means that $C_{i,j}, D_{i,k} \in {\ugeq\bbN}$.

Deleting the edge $(u,v)$, 
we still have $\deg_{E_i}(u) \geq \floor{C_{i,j}}$
and $\deg_{E_i}(v) \geq \floor{D_{i,k}}$,
and hence $G$ is still $(C,D)$-biregular
with $U = U_1\cup\cdots\cup U_m$
and $V = V_1 \cup\cdots\cup V_n$
be the witness of the $(C,D)$-biregularity.
We repeatedly do this until 
the graph $G$ satisfies condition~(\ref{eq: no-redundant}).
\\
\underline{\em Stage 2.}
We construct a graph $G' = (U \cup S, V \cup T, E_1',\ldots,E_{\ell}')$,
where for every $i=1,\ldots,\ell$,
\begin{itemize}\itemsep=0pt
\item
for every $j=1,\ldots,m$,
for each $u\in U_j$,
$\deg_{E_i'}(u)=\floor{C_{i,j}}$,
\item
for every $s \in S$,
$\deg(s)=1$,\footnote{Recall that $\deg(s)=\deg_{E_1'}(s) + \cdots \deg_{E_{\ell}'}(s)$.
Hence, $\deg(s)=1$ means that there is only one edge adjacent to $s$.}
\item
for every $k=1,\ldots,n$,
for each $v\in V_j$,
$\deg_{E_i'}(v)=\floor{D_{i,j}}$,
\item
for every $t \in T$,
$\deg(t)=1$.
\end{itemize}
The graph $G'$ can be obtained by doing the same trick
as in the proof of Lemma~\ref{l: ugeq biregular}.
For every vertex $u \in U_j$,
if $\deg_{E_i}(u)-\floor{C_{i,j}} = z > 0$,
then we ``split'' $u$ into
$z+1$ vertices $u', s_1,\ldots,s_z$,
where 
\begin{itemize}\itemsep=0pt
\item
$\deg_{E_i}(u') = \floor{C_{i,j}}$,
for all other $h \neq i$,
$\deg_{E_h}(u')=\deg_{E_h}(u)$;
\item
$\deg_{E_i}(s_1)= \cdots = \deg_{E_i}(s_z) = 1$,
and for all other $h \neq i$,
$\deg_{E_h}(s_1)= \cdots = \deg_{E_h}(s_z) = 0$.
\end{itemize}
We can do similar operation to the vertices in $v \in V_k$.
Since $G$ satisfies condition~\ref{eq: no-redundant},
there is no edge between vertices in $S$ and $T$.
We also further partition $S = S_1\cup \cdots \cup S_{\ell}$
and $T = T_1 \cup \cdots \cup T_{\ell}$,
where each $S_i$ and $T_i$ contains the vertices
whose $\deg_{E_i} = 1$.
\\
\underline{\em Stage 3.}
Stage 3 is as follows.
For each $i=1,\ldots,\ell$,
if there are an edge $(s,v) \in E_i$ and an edge $(u,t) \in E_i$,
for some $s\in S_i$, $v\in V_k$, $u\in U_j$, $t \in T_i$,
we do the following.
\begin{itemize}\itemsep=0pt
\item
We delete the two edges $(s,v)$ and $(u,t)$ from $E_i$,
as well as the vertices $s$ and $t$.
\item
We add an edge $(u,v)$ into $E_i$.
\item
If there is already an existing edge $(u,v) \in E_1 \cup \cdots \cup E_{\ell}$,
adding another $(u,v)$ may result in ``parallel'' edges.
However, since $\vM,\vN$ is big enough with respect to $C,D$,
and in particular, Inequality~\ref{eq: big1} holds,
we can apply the same trick as in the proof of Theorem~\ref{t: l-type vd-biregular}
to get rid of the parallel edge,
while preserving the degree of the vertices.
\end{itemize}
We repeatedly do this until
for each $i=1,\ldots,\ell$
either $S_i = \emptyset$, or $T_i=\emptyset$.
In particular, the following holds.
\begin{itemize}\itemsep=0pt
\item
If $i \in I_0$,
then $S_i = T_i =0$.

Recall that $i \in I_0$ means that 
$\vM\cdot \floor{\vc_i} = \vN \cdot \floor{\vd_i}$,
which implies that the initial sets $S_i,T_i$ have the same cardinality.
Since we always delete a pair of vertices $s,t$ from $S_i,T_i$, respectively,
we have at the end $S_i=T_i =\emptyset$.

\item
Likewise, if $i \in I_1$,
then $S_i =0$.

This is because $\vM\cdot \floor{\vc_i} > \vN \cdot \floor{\vd_i}$,
implies that initially $|T_i| > |S_i|$,
which further implies that at the end $S_i=\emptyset$.

By symmetrical reasoning, if $i \in I_2$, then $T_i =0$.
\end{itemize}
From here, we will ``merge'' back
the vertices in $T$ with vertices in $V$.
This is done as follows.
For each vertex $u \in U$ adjacent by $E_i$-edges to, say, $z$ vertices in $T$,
we pick $z$ vertices $v_1,\ldots,v_z$ from the set 
$$
\bigcup_{j \ \mbox{\scriptsize such that}\ d_{i,j} \in \ugeqN} \ V_j 
$$
Such $z$ vertices exist by Inequality~\ref{eq: big2}
(because $\vM,\vN$ are big enough w.r.t. $C,D$).
We delete those $z$ vertices in $T$,
and connect $u$ to each of $v_1,\ldots,v_z$ by $E_i$-edges.
We do this until the set $T$ is empty.
In a similar manner, we can merge back the vertices in $S$ with vertices in $U$,
where the existence of the vertices $v_1,\ldots,v_z$
is guaranteed by Inequality~\ref{eq: big3}.

The resulting graph is $(C(I_0\cup I_1),D(I_0,I_2))$-biregular graph,
which by Lemma~\ref{l: ugeq biregular} the formula implies that
$\bireg_{C(I_0\cup I_2), D(I_0\cup I_1)}(\vM,\vN)$ holds.
This completes our proof of Theorem~\ref{t: ugeq biregular}. 
\end{proof}

\subsection{The notion of partial bipartite graphs}
\label{ss: l-type ugeq biregular}

In this subsection we are going to generalise 
Theorem~\ref{t: ugeq biregular} to the case
when it is possible that one of the inequalities~\ref{eq: big1} and~\ref{eq: big2} does not hold.
The idea is that those numbers (for which the inequalities do not hold)
are hard coded into the Presburger formula.
For this, we introduce the notion of {\em partial graph}.

An $\ell$-type {\em partial bipartite graph} is a tuple 
$\cP = (C,D,S,T,f,g)$, where
\begin{itemize}\itemsep=0pt
\item
$C \in \bbB^{\ell\times m}$ and
$D \in \bbB^{\ell\times n}$,
\item
$S$ is a finite set of vertices (possibly empty),
\item
$T$ is a finite set of vertices (possibly empty),
\item
$f: S \times \{E_1,\ldots,E_{\ell}\} \to \bbB$,
\item
$g: T \times \{E_1,\ldots,E_{\ell}\} \to \bbB$.
\end{itemize}
Obviously, if $S$ or $T$ is empty, 
then $f$ or $g$, respectively, is also an ``empty'' function.
In the following the term {\em partial graph}
always means partial bipartite graph. 

A completion of the partial graph 
$\cP = (C,D,S,T,f,g)$
is a bipartite graph $G=(U\cup S,V\cup T,E_1,\ldots,E_{\ell})$ such that
there is a partition $U_1\cup\cdots \cup U_m$ of $U$
and a partition $V_1\cup\cdots\cup V_n$ of $V$ such that
\begin{itemize}\itemsep=0pt
\item
for every $u\in U_j$, $\deg_{E_i}(u) = C_{i,j}$,
\item
for every $v\in V_j$, $\deg_{E_i}(v) = D_{i,j}$,
\item
for every $s \in S$, $\deg_{E_i}(s) = f(s,E_i)$,
\item
for every $t \in T$, $\deg_{E_i}(t) = g(t,E_i)$.
\end{itemize}
When it is clear from the context,
we also call $U = U_1\cup\cdots \cup U_m$
and $V = V_1\cup\cdots\cup V_n$
the witness of the $(C,D)$-biregularity.
Note that when both $S$ and $T$ are empty,
then the completions of the partial graph $\cP$
are simply $(C,D)$-biregular graphs.

We need a few additional notations.
\begin{eqnarray*}
({\ugeq c}) - d & = &
\left\{
\begin{array}{ll}
{\ugeq {(c-d)}} & \mbox{if} \ c \geq d
\\
{\ugeq 0} & \mbox{otherwise}
\end{array}
\right.
\end{eqnarray*}
Let $C \in \bbB^{\ell\times m}$.
We define a matrix $\xi(C)\in \bbB^{\ell \times (\ell+1)m}$
as follows.
\begin{eqnarray*}
\xi(C) & := &
\Big(
C \mid M_1 \mid \cdots \mid M_{m}
\Big)
\end{eqnarray*}
where each $M_i$ is the matrix obtained by repeating the $i$th column vector of $C$
for $\ell$ number of times, and substracting the identity matrix $\id_{\ell}$.
Formally,
\begin{eqnarray*}
M_i & := &
\left(
\begin{array}{cccc}
C_{1,i} & C_{1,i} & \cdots & C_{1,i}
\\
C_{2,i} & C_{2,i} & \cdots & C_{2,i}
\\
\vdots & \vdots & \ddots & \vdots
\\
C_{\ell,i} & C_{\ell,i} & \cdots & C_{\ell,i}
\end{array}
\right) - \id_{\ell}
\end{eqnarray*}

Lemma~\ref{l: partial-graph-step} below
essentially states that
every partial graph can be reduced into a ``smaller''
partial graph with the addition of some linear equalities.
\begin{lemma}
\label{l: partial-graph-step}
Let $\cP = (C,D,S,T,f,g)$ be a partial graph,
where $T \neq \emptyset$.
Let $t \in T$.
Then the following holds.
\begin{enumerate}[(1)]\itemsep=0pt
\item
For every completion graph $G = (U \cup S, V \cup T,E_1,\ldots,E_{\ell})$
of the partial graph $\cP$
with $U = U_1\cup\cdots \cup U_m$
and $V = V_1\cup\cdots\cup V_n$ being the witness of the $(C,D)$-biregularity,
there exists a completion graph 
$G' = (U \cup S, (V \cup T)\setminus \{t\},E_1',\ldots,E_{\ell}')$
of the partial graph $\cP' = (\xi(C),D,S,T\setminus \{t\},f,g')$,
with the witness of the $(\xi(C),D)$-biregularity being
\begin{eqnarray*}
U & = & U_1' \cup\cdots \cup U_m' 
\quad\cup
\\
& &
(U_{1,1}'\cup \cdots \cup U_{\ell,1}')
\;\cup\;
(U_{1,2}'\cup \cdots \cup U_{\ell,2}')
\;\cup\;
\cdots
\;\cup\;
(U_{1,m}'\cup \cdots \cup U_{\ell,m}')
\\
V & = & V_1 \cup\cdots \cup V_n
\end{eqnarray*}
and
\begin{eqnarray*}
|U_j| & =& |U_j'|+ |U_{1,j}'|+\cdots + |U_{\ell,j}'|
\qquad \mbox{for each} \ j =1,\ldots,m
\\
\sum_{1 \leq j \leq m} |U_{i,j}'| & = & g(t,E_i)
\qquad\qquad\qquad\qquad\quad\;\:
\mbox{for each} \ i =1 ,\ldots, \ell.
\end{eqnarray*}
\item
Visa versa,
for every completion graph 
$G' = (U \cup S, (V \cup T)\setminus \{t\},E_1',\ldots,E_{\ell}')$
of the partial graph $\cP' = (\xi(C),D,S,T\setminus \{t\},f,g')$,
with the witness of the $(\xi(C),D)$-biregularity being 
\begin{eqnarray*}
U & = & U_1' \cup\cdots \cup U_m' 
\quad\cup
\\
& &
(U_{1,1}'\cup \cdots \cup U_{\ell,1}')
\;\cup\;
(U_{1,2}'\cup \cdots \cup U_{\ell,2}')
\;\cup\;
\cdots
\;\cup\;
(U_{1,m}'\cup \cdots \cup U_{\ell,m}')
\\
V & = & V_1 \cup\cdots \cup V_n
\end{eqnarray*}
and for each $i =1 ,\ldots, \ell$,
$\sum_{1 \leq j \leq m} |U_{i,j}'| = g(t,E_i)$,
there exists a completion graph $G = (U \cup S, V \cup T,E_1,\ldots,E_{\ell})$
of the partial graph $\cP$
with $U = U_1\cup\cdots \cup U_m$
and $V = V_1\cup\cdots\cup V_n$ being the witness of the $(C,D)$-biregularity,
and
\begin{eqnarray*}
|U_j| & =& |U_j'|+ |U_{1,j}'|+\cdots + |U_{\ell,j}'|
\qquad \mbox{for each} \ j =1,\ldots,m.
\end{eqnarray*}
\end{enumerate}
\end{lemma}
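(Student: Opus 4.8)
The plan is to prove both directions of the correspondence by literally deleting, respectively re-inserting, the single vertex $t$, and to read off from the augmented matrix $\xi(C)$ the degree deficiency that this operation creates at the neighbours of $t$. The key is to interpret $\xi(C) = (C \mid M_1 \mid \cdots \mid M_m)$ correctly: by definition of $M_j$, its $i$th column is the $j$th column of $C$ with the $i$th coordinate lowered by one in $\bbB$-arithmetic. Hence a vertex assigned to one of the new partition classes $U_{i,j}'$ is required to look exactly like a vertex of the original class $U_j$ that is short precisely one $E_i$-edge. With this reading the two lists of partition classes in the statement line up, and the identities $|U_j| = |U_j'| + |U_{1,j}'| + \cdots + |U_{\ell,j}'|$ and $\sum_{1\le j\le m}|U_{i,j}'| = g(t,E_i)$ become the bookkeeping of how a class of $U$ splits when one edge is removed at $t$.

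For part (1), starting from a completion $G$ with witness $U = U_1 \cup \cdots \cup U_m$ and $V = V_1 \cup \cdots \cup V_n$, I would form $G'$ by deleting $t$ together with all edges incident to it. Each neighbour $u$ of $t$ lies in $U$ and is joined to $t$ by a \emph{unique} edge — since $E_1,\ldots,E_\ell$ are pairwise disjoint, the pair $(u,t)$ belongs to at most one $E_i$ — say of colour $E_i$; deleting it lowers $\deg_{E_i}(u)$ by exactly one and leaves every other degree of $u$ unchanged, so $u$ now realises the $i$th column of $M_j$ and is reassigned to $U_{i,j}'$, while every non-neighbour of $t$ inside $U_j$ keeps the profile given by column $j$ of $C$ and stays in $U_j'$. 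The vertices of $V$, of $S$ and of $T\setminus\{t\}$ are untouched, so $D$, $f$ and $g'$ are preserved, and the number of vertices moved into $\bigcup_j U_{i,j}'$ equals the number of $E_i$-edges deleted at $t$, namely $g(t,E_i)$. One must only check that the $\bbB$-arithmetic is consistent on the ${\ugeq\bbN}$-entries: if $C_{i,j} = {\ugeq c}$, then a class-$j$ vertex has at least $c$ edges of colour $E_i$, and after losing one it has at least $c-1$, which is exactly the entry ${\ugeq c}-1 = {\ugeq (c-1)}$ prescribed by $M_j$.

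For part (2) the construction runs in reverse and is cleaner: given a completion $G'$ of $\cP'$ whose witness satisfies $\sum_{1\le j\le m}|U_{i,j}'| = g(t,E_i)$ for each $i$, I would introduce a fresh vertex $t$ and, for every $i,j$ and every $u \in U_{i,j}'$, add the edge $(u,t)$ to $E_i$, then merge each $U_{i,j}'$ back into $U_j'$ to recover $U_j$. Each such $u$ regains its missing $E_i$-edge and so acquires exactly the profile of column $j$ of $C$, whence $U = U_1\cup\cdots\cup U_m$ and $V = V_1\cup\cdots\cup V_n$ witness $(C,D)$-biregularity; the new vertex $t$ receives $\sum_j |U_{i,j}'| = g(t,E_i)$ edges of colour $E_i$, matching $g$; and since $t$ is brand new, no parallel edges can arise, so none of the edge-swapping used in Theorem~\ref{t: l-type vd-biregular} is needed here. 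I expect the main point of care to be precisely the verification that $\xi(C)$ encodes ``one edge of a prescribed colour removed from a prescribed column'' consistently with the $\bbB$-arithmetic together with the split/merge identities, and, in part (1), confirming that the neighbours of $t$ all lie in $U$ rather than in $S$ (so that the passive data $D$, $S$, $f$ are genuinely unaffected), which is where the structural convention governing partial graphs must be invoked.
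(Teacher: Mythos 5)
Your proposal is correct and follows essentially the same route as the paper: part (1) deletes $t$ and refines each $U_j$ into the non-neighbours $U_j'$ and the colour-classified neighbours $U_{i,j}'$, and part (2) reinstates $t$ and joins it to every vertex of $U_{i,j}'$ by an $E_i$-edge, with the degree bookkeeping read off from $\xi(C)$ exactly as you describe. The two points of care you flag (the $\bbB$-arithmetic on the ${\ugeq\bbN}$-entries and the absence of edges between $S$ and $T$) are handled implicitly in the paper as well, so there is no substantive difference.
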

\begin{proof}
Let $\cP= (C,D,S,T,f,g)$ be a partial graph,
where $T \neq \emptyset$ and $t \in T$.
First, we prove part~(1).
Let $G = (U \cup S, V \cup T,E_1,\ldots,E_{\ell})$
be a completion graph of $\cP$
with $U = U_1\cup\cdots \cup U_m$
and $V = V_1\cup\cdots\cup V_n$ being the witness of the $(C,D)$-biregularity.

For each $j =1,\ldots,m$,
we partition $U_j$ into
\begin{eqnarray*}
U_j & = & 
U_j' \ \cup \ (U_{1,j}'\cup \cdots \cup U_{\ell,j}'),
\end{eqnarray*}
where 
\begin{itemize}\itemsep=0pt
\item
$U_j'$ be the set of vertices in $U_j$
that are not adjacent to the vertex $t$,
\item
for each $i =1,\ldots,\ell$,
$U_{i,j}'$ is the set of vertices in $U_j$
adjacent to $t$ via $E_i$-edges.
\end{itemize}
Now deleting the vertex $t$ and all its adjacent edges,
we obtain the desired completion graph 
$G' = (U \cup S, (V \cup T)\setminus \{t\},E_1',\ldots,E_{\ell}')$
of $\cP'=(\xi(C),D,S,T\setminus \{t\},f,g')$.

Now we prove part~(2).
Let
$G' = (U \cup S, (V \cup T)\setminus \{t\},E_1',\ldots,E_{\ell}')$
be a completion of the partial graph $\cP'=(\xi(C),D,S,T\setminus \{t\},f,g')$,
with the witness of the $(\xi(C),D)$-biregularity being
\begin{eqnarray*}
U & = & U_1' \cup\cdots \cup U_m' 
\quad\cup
\\
& &
(U_{1,1}'\cup \cdots \cup U_{\ell,1}')
\;\cup\;
(U_{1,2}'\cup \cdots \cup U_{\ell,2}')
\;\cup\;
\cdots
\;\cup\;
(U_{1,m}'\cup \cdots \cup U_{\ell,m}')
\\
V & = & V_1 \cup\cdots \cup V_n
\end{eqnarray*}
and for each $i =1 ,\ldots, \ell$,
$\sum_{1 \leq j \leq m} |U_{i,j}'| = g(t,E_i)$.

The desired completion graph
$G = (U \cup S, V \cup T,E_1,\ldots,E_{\ell})$
of the partial graph $\cP$
can be obtained as follows.
We put the vertex $t$ back inside $T$.
Then, for each $i =1,\ldots,\ell$ and for each $j=1,\ldots,m$,
we connect $t$ with every vertex $u \in U_{i,j}'$
with $E_i$-edge.
This way we obtain the completion graph $G$
with $U = U_1\cup\cdots \cup U_m$
and $V = V_1\cup\cdots\cup V_n$ being the witness of the $(C,D)$-biregularity,
and
\begin{eqnarray*}
|U_i| & =& |U_i'|+ |U_{1,i}'|+\cdots + |U_{\ell,i}'|
\qquad \mbox{for each} \ i =1,\ldots,m.
\end{eqnarray*}
This completes our proof of Lemma~\ref{l: partial-graph-step}.
\end{proof}

Following Lemma~\ref{l: partial-graph-step} above,
we show that every partial graph can be translated into
a Presburger formula that captures any of its completion, 
as stated in the following theorem.
\begin{theorem}
\label{t: presburger for partial-graph}
For every partial graph $\cP =(C,D,S,T,f,g)$,
we can construct a Presburger formula $\Psi_{\cP}(\vX,\vY)$
such that 
for every $\vM$ and $\vN$ big enough w.r.t. $C,D$,
the following holds.
There exists a completion graph $G=(U \cup S, V\cup T, E_1,\ldots,E_{\ell})$,
such that $U = U_1\cup \cdots \cup U_m$
and $V = V_1\cup \cdots \cup V_n$
and $\vM = (|U_1|,\ldots, |U_m|)$
and $\vN = (|V_1|,\ldots, |V_n|)$ if and only if
$\Psi_{\cP}(\vM,\vN)$ holds.
\end{theorem}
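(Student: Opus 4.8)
The plan is to induct on the total number $|S| + |T|$ of special vertices, peeling them off one at a time with Lemma~\ref{l: partial-graph-step} and landing on Theorem~\ref{t: ugeq biregular} in the base case. The key structural observation is that Lemma~\ref{l: partial-graph-step} is an \emph{exact}, size-preserving correspondence into which no largeness hypothesis enters, so the ``big enough'' restriction is used only in the base case and in propagating it through the recursion. For the base case $S = T = \emptyset$, a completion of $\cP$ is nothing but a $(C,D)$-biregular graph, so I would set $\Psi_{\cP}(\vX,\vY) := \gbireg_{C,D}(\vX,\vY)$ and invoke Theorem~\ref{t: ugeq biregular}.

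For the inductive step, suppose $T \neq \emptyset$ and fix $t \in T$; the case $T = \emptyset \neq S$ is symmetric, absorbing some $s \in S$ and refining the $V$-side through $\xi(D)$. Let $\cP' = (\xi(C), D, S, T \setminus \{t\}, f, g')$ be the smaller partial graph of Lemma~\ref{l: partial-graph-step}, and let $\Psi_{\cP'}$ be the formula from the induction hypothesis; its $U$-side free variables split into $m$ variables for the undisturbed blocks $U_j'$ and $\ell m$ variables $Z_{i,j}$ for the small blocks $U_{i,j}'$ of vertices of $U_j$ joined to $t$ by an $E_i$-edge, matching the $(\ell+1)m$ columns of $\xi(C)$. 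Since Lemma~\ref{l: partial-graph-step} equates a completion of $\cP$ of size $(\vM,\vN)$ with a completion of $\cP'$ whose block sizes satisfy $M_j = |U_j'| + \sum_i |U_{i,j}'|$ and $\sum_j |U_{i,j}'| = g(t,E_i)$, I would define
\begin{eqnarray*}
\Psi_{\cP}(\vX,\vY) & := & \exists Z_{1,1}\cdots\exists Z_{\ell,m}\ \Big( \Psi_{\cP'}\big(X_1 - {\textstyle\sum_i} Z_{i,1}, \ldots, X_m - {\textstyle\sum_i} Z_{i,m},\, Z_{1,1}, \ldots, Z_{\ell,m},\, \vY\big) \\
& & \qquad\wedge\ \bigwedge_{1 \leq i \leq \ell} \Theta_i \Big),
\end{eqnarray*}
where $\Theta_i$ is $\sum_{j} Z_{i,j} = \floor{g(t,E_i)}$ if $g(t,E_i)\in\bbN$ and $\sum_{j} Z_{i,j} \geq \floor{g(t,E_i)}$ if $g(t,E_i)\in{\ugeq\bbN}$. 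Composing the two directions of Lemma~\ref{l: partial-graph-step} with the inductive correctness of $\Psi_{\cP'}$ then yields the equivalence between the existence of a completion of $\cP$ of size $(\vM,\vN)$ and the truth of $\Psi_{\cP}(\vM,\vN)$.

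The step I expect to be the main obstacle is making the ``big enough'' hypothesis survive the recursion. The induction hypothesis only certifies $\Psi_{\cP'}$ at sizes big enough with respect to $\xi(C), D$, whereas the reduction forces the blocks $U_{i,j}'$ to be small — for each $i$ their total size is pinned to the constant $\floor{g(t,E_i)}$ — and it inflates the entry-sum, since $\xi(C)\cdot\vone$ is roughly $(\ell+1)(C\cdot\vone)$, which raises the threshold appearing in Inequality~\ref{eq: big1}. I would resolve this by verifying that the refined sizes still meet Inequalities~\ref{eq: big1}–\ref{eq: big3} for $\xi(C), D$: each of these bounds a sum taken over the support of a row, every such support contains an undisturbed block $U_j'$ whose size is within a fixed constant of $M_j$, and the small blocks contribute only non-negatively, so the inequalities persist once the ambient threshold is taken large enough. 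The essential point is that this enlargement compounds only $|S| + |T|$ times and involves only the fixed data of $\cP$, so ``big enough'' here should be read as exceeding a single constant $\kappa(\cP)$ depending only on $C, D, S, T, f, g$ — which is exactly what the downstream constructions in Subsections~\ref{ss: formula l-type biregular} and~\ref{ss: formula l-type complete biregular} require. Carrying out this largeness bookkeeping precisely is the technical crux; the propagation of the linear constraints $\Theta_i$ is then routine.
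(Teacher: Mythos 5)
Your proposal follows essentially the same route as the paper's own proof: induction on $|S|+|T|$, peeling off one special vertex at a time via Lemma~\ref{l: partial-graph-step} (with $\xi(C)$ absorbing the deleted vertex's neighbourhood into fresh blocks constrained to sum to $g(t,E_i)$), and bottoming out at $\gbireg_{C,D}$ from Theorem~\ref{t: ugeq biregular}. The only differences are cosmetic (writing $X_j - \sum_i Z_{i,j}$ instead of an extra existential $Z_j$ with $X_j = Z_j + \sum_i Z_{i,j}$, and your explicit case split in $\Theta_i$ on whether $g(t,E_i)\in{\ugeq\bbN}$), plus your added bookkeeping on how the ``big enough'' thresholds propagate through the recursion --- a legitimate point that the paper's proof leaves implicit.
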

\begin{proof}
Let $\cP =(C,D,S,T,f,g)$ be a partial graph.
If the matrix $C$ is empty,
there are only finitely many completion of $\cP$.
In this case $\Psi_{\cP}$ simply contains the enumeration the sizes of all possible
completions of $\cP$.
We can define $\Psi_{\cP}$ in a similar manner
when $D$ is empty.

Now suppose both the matrices $C$ and $D$ are not empty.
The construction of $\Psi_{\cP}$ is done inductively as follows.
The base case is $S \cup T = \emptyset$,
in which case $\Psi_{\cP}$ is defined as follows.
\begin{eqnarray*}
\Psi_{\cP}(\vX,\vY) & := &
\gbireg_{C,D}(\vX,\vY),
\end{eqnarray*}
where $\gbireg_{C,D}(\vX,\vY)$ is as defined in Theorem~\ref{t: ugeq biregular}.

Towards the induction step,
let $S \cup T \neq \emptyset$.
Suppose $T \neq \emptyset$ and $t \in T$.
(The case when $S \neq \emptyset$ can handled in a symmetrical manner.)

We define $\Psi_{\cP}(\vX,\vY)$ as follows.
\begin{eqnarray*}
\Psi_{\cP}(\vX,\vY) & := &
\exists Z_1 \cdots \exists Z_{m}
\
\exists Z_{1,1} \cdots \exists Z_{\ell,1}
\ 
\exists Z_{1,2} \cdots \exists Z_{\ell,2}
\ \cdots \
\exists Z_{1,m} \cdots \exists Z_{\ell,m}
\\
& &
\quad\wedge\quad
\bigwedge_{1 \leq i \leq m}
X_i = Z_i + Z_{1,i}+\cdots + Z_{\ell,i}
\\
& &
\quad\wedge\quad
\bigwedge_{1 \leq i \leq \ell}
\sum_{1 \leq j \leq m} Z_{i,j}  =  g(t,E_i)
\\
& &
\quad\wedge\quad
\Psi_{\cP'}((Z_1,\ldots,Z_m,Z_{1,1},\ldots,Z_{\ell,1},\ldots,Z_{1,m},\ldots,Z_{\ell,m}),\vY)
\end{eqnarray*}
where $\cP' = (\xi(C),D,S,T\setminus \{t\},f,g')$
and $g'$ is the function $g$ restricted to $T \setminus \{t\}$.

By Theorem~\ref{t: ugeq biregular} in the previous section,
the correctness of the base case is established.
The induction step follows from Lemma~\ref{l: partial-graph-step}, 
and hence, shows that the formula $\Psi_{\cP}$ is the desired formula.
This completes our proof of Theorem~\ref{t: presburger for partial-graph}.
\end{proof}

\subsection{Constructing the formula $\bireg_{C,D}(\vX,\vY)$ for Theorem~\ref{t: l-type biregular}}
\label{ss: formula l-type biregular}

We need the following notions.
Let $C \in \bbB^{\ell\times m}$ and $D\in \bbB^{\ell\times n}$.
We say that a partial graph $\cP = (C',D',S,T,f,g)$
is compatible with $(C,D)$ with respect to 
a subset $I \subseteq \{1,\ldots,m\}$ and a subset $J \subseteq \{1,\ldots,n\}$, and
the partitions
$S = S_1 \cup \cdots \cup S_{m''}$ and 
$T = T_1 \cup \cdots \cup T_{n''}$,
if the following four conditions hold.
\begin{itemize}\itemsep=0pt
\item
$C' \in \bbB^{\ell\times m'}$ is obtained by deleting the columns $I$ in $C$.
\item
$D' \in \bbB^{\ell\times n'}$ is obtained by deleting the columns $J$ in $D$.  
\item
Let $C'' \in \bbB^{\ell\times m''}$ be the matrix whose columns are the columns $I$ in $C$
where $m'' = m - m'$.
The matrix $C''$ is simply the matrix form of the function $f$.
That is, for each $k=1,\ldots,m''$, 
for every vertex $s\in S_k$,
for every $i=1,\ldots,\ell$,
$f(s,E_i)= C_{i,k}''$.
\item
Let $D'' \in \bbB^{\ell\times n''}$ be the matrix whose columns are the columns $J$ in $D$
where $n'' = n - n'$.
The matrix $D''$ is simply the matrix form of the function $g$.
That is, for each $k=1,\ldots,n''$, 
for every vertex $t\in T_k$,
for every $i=1,\ldots,\ell$,
$g(t,E_i)= D_{i,k}''$.
\end{itemize}

For a subset $I \subseteq \{1,\ldots,m\}$,
and a variable vector $\vX = (X_1,\ldots,X_{\ell})$,
we write $\vX_I$ to denote the variables obtained by deleting $X_i$ whenever $i \in I$.
We can define $\vY_{J}$ similarly when $J \subseteq\{1,\ldots,n\}$
and $\vY = (Y_1,\ldots,Y_n)$.

The formula $\bireg_{C,D}(\vX,\vY)$ as as required in Theorem~\ref{t: l-type biregular}
is as follows.
\begin{eqnarray*}
\bireg_{C,D}(\vX,\vY) & :=  &
\bigvee_{\cP} 
\left(
\begin{array}{l}
 \Psi_{\cP}(\vX_{I},\vY_J) \quad \wedge\quad \varphi
\\
\wedge \ X_{i_1} = |S_1| 
\ \wedge \ X_{i_2} = |S_2|
 \ \wedge \ \cdots \ \wedge \
\ \wedge \ X_{i_{m''}} = |S_{m''}| 
\\  
\wedge \ Y_{j_1} = |T_1| 
\ \wedge \ Y_{j_2} = |t_2|
 \ \wedge \ \cdots \ \wedge \
\ \wedge \ Y_{j_{n''}} = |T_{n''}| 
\end{array}
\right)
\end{eqnarray*}
where 
\begin{itemize}\itemsep=0pt
\item
the disjunction ranges over all partial graph $\cP=(C',D',S,T,f,g)$
compatible with $(C,D)$
w.r.t. $I = \{i_1,\ldots,i_{m''}\}$ and $J = \{j_1,\ldots,j_{n''}\}$,
as well as the partitions $S = S_1 \cup \cdots \cup S_{m''}$
and $T = T_1 \cup \cdots \cup T_{n''}$,
\item
the formula $\psi_{\cP}$ is as defined in Theorem~\ref{t: presburger for partial-graph},
\item
$\varphi$ states that $\vX_{I},\vY_J$ are big enough w.r.t. $(C',D')$,
as defined in the Inequalities~(\ref{eq: big1}),~(\ref{eq: big2}) and~(\ref{eq: big3}).
\end{itemize}
The correctness of the formula $\bireg_{C,D}$ follows
immediately from the correctness of the formula $\Psi_{\cP}$
in Theorem~\ref{t: presburger for partial-graph}.
This completes our proof of Theorem~\ref{t: l-type biregular}.

\subsection{Constructing the formula $\biregc_{C,D}(\vX,\vY)$ for Theorem~\ref{t: main biregular}}
\label{ss: formula l-type complete biregular}

We start with Lemma~\ref{l: lower bound for complete bipartite} which 
essentially states that
if there exists a $(C,D)$-biregular-complete graph of ``big enough'' size,
then for every column $i$ in $C$ and every column $j$ in $D$,
there is a row $l$ such that both $C_{l,i}, D_{l,j} \in {\ugeq\bbN}$.
This means that a $(C,D)$-biregular-complete graph $G=(U,V,E_1,\ldots,E_{\ell})$ 
of ``big enough'' size $(\vM,\vN)$,
then we can connect every pair of vertices $u \in U$ and $v \in V$
with one of the edges without violating the $(C,D)$-biregularity. 

\begin{lemma}
\label{l: lower bound for complete bipartite}
Let $G = (U,V,E_1,\ldots,E_{\ell})$ be an $\ell$-type $(C,D)$-biregular graph
and
$U = U_1\cup \cdots\cup U_m$ and
$V = V_1\cup \cdots\cup V_n$ be the witness of the $(C,D)$-biregularity.
Suppose that for each $i,j$, we have
\begin{eqnarray*}
|U_i|, \ \ |V_j|& \geq & \floor{C}\cdot\vone + \floor{D}\cdot\vone + 1.
\end{eqnarray*}
If $G$ is a complete bipartite graph, then for every $i \in \{1,\ldots,m\}$ and $j \in \{1,\ldots,n\}$,
there exists $l \in \{1,\ldots,\ell\}$ such that
both $C_{l,i}, D_{l,j} \in {\ugeq\bbN}$.
\end{lemma}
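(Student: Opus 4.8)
The plan is to argue by contradiction. Suppose the graph $G$ is complete and satisfies the stated size bound, yet there is a pair of indices $i \in \{1,\ldots,m\}$ and $j \in \{1,\ldots,n\}$ for which \emph{no} row $l$ has both $C_{l,i}$ and $D_{l,j}$ in ${\ugeq\bbN}$. I would count the edges of $G$ running between the blocks $U_i$ and $V_j$, bounding them from the appropriate side, and show that completeness then forces more such edges than the degree constraints can supply.

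First I would set up the counting. Let $A = \{ l \mid C_{l,i} \in \bbN \}$. By the contradiction hypothesis, every row $l \notin A$ (so $C_{l,i} \in {\ugeq\bbN}$) must satisfy $D_{l,j} \in \bbN$; hence the finite-degree constraints, on the $U_i$ side for $l \in A$ and on the $V_j$ side for $l \notin A$, together cover all $\ell$ rows. The number of $E_l$-edges between $U_i$ and $V_j$ is at most $|U_i| \cdot C_{l,i}$ when $l \in A$ (counting from the $U_i$ side, where each vertex has exactly $C_{l,i}$ such edges in all of $G$), and at most $|V_j| \cdot D_{l,j}$ when $l \notin A$ (counting from the $V_j$ side). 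Summing over all rows and writing $P = \floor{C}\cdot\vone$ and $Q = \floor{D}\cdot\vone$, the total number of edges between $U_i$ and $V_j$ is at most $|U_i| \sum_{l \in A} C_{l,i} + |V_j| \sum_{l \notin A} D_{l,j} \le |U_i| \, P + |V_j| \, Q$, since each partial column sum is bounded by the respective total sum.

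Next I would invoke completeness. Because $U \times V = E_1 \cup \cdots \cup E_{\ell}$, every one of the $|U_i| \cdot |V_j|$ pairs in $U_i \times V_j$ is an edge, so the count above is in fact exactly $|U_i| \cdot |V_j|$. This yields $|U_i| \cdot |V_j| \le |U_i| \, P + |V_j| \, Q$. Rearranging gives $|U_i|\,(|V_j| - P) \le |V_j| \, Q$; using the hypothesis $|V_j| \ge P + Q + 1$ we have $|V_j| - P \ge Q + 1 > 0$, and a short estimate shows $\frac{|V_j|\,Q}{|V_j| - P} = Q + \frac{P\,Q}{|V_j| - P} \le Q + \frac{P\,Q}{Q+1} < P + Q$. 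Hence $|U_i| < P + Q$, contradicting $|U_i| \ge P + Q + 1$, and the lemma follows.

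The only delicate point is the edge count: one must split the rows into those where the $U_i$-degree is finite and those where (as forced by the contradiction hypothesis) the $V_j$-degree is finite, and bound each group from the correct side. Once the inequality $|U_i| \cdot |V_j| \le |U_i| \, P + |V_j| \, Q$ is in hand, the remaining arithmetic is routine, and the generous size threshold $\floor{C}\cdot\vone + \floor{D}\cdot\vone + 1$ supplies exactly the slack needed to close the argument.
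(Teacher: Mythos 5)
Your proof is correct and follows essentially the same route as the paper's: assume a bad pair $(i,j)$, count the edges between $U_i$ and $V_j$ from whichever side has the finite degree constraint, and show that completeness forces $|U_i|\,|V_j|$ edges while the degree budget supplies at most $|U_i|\,P+|V_j|\,Q$, which the size hypothesis makes too small. You in fact spell out the ``straightforward calculation'' the paper leaves implicit (only the final strict inequality $Q+\tfrac{PQ}{Q+1}<P+Q$ degenerates to equality when $P=0$, but $|U_i|\le P+Q$ still contradicts $|U_i|\ge P+Q+1$).
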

\begin{proof}
Let $C\in \bbB^{\ell\times m}$ and $D\in \bbB^{\ell\times n}$,
and $G=(U,V,E_1,\ldots,E_{\ell})$ be a $(C,D)$-biregular-complete graph,
where $U=U_1\cup\cdots\cup U_n$ and 
$V=V_1\cup\cdots\cup V_n$ are the witness of the $(C,D)$-biregularity.
Suppose each $|U_i|$ and $|V_j|$ satisfy the inequality above.

For the sake of contradiction, 
we assume that that there exist $i,j \in \{1,\ldots,m\}$
such that for all $l \in \{1,\ldots,\ell\}$,
either $C_{l,i} \in \bbN$ or $D_{l,j} \in \bbN$.
This means that for each $l \in \{1,\ldots,\ell\}$,
the number of $E_l$-edges between $U_i$ and $V_j$
is  $|U_i|C_{l,i}$, if $C_{l,i} \in \bbN$,
or $|V_j|D_{l,j}$, if $D_{l,j} \in \bbN$.
For each $l=1,\ldots,\ell$, 
\begin{eqnarray*}
K_l & = &
\left\{
\begin{array}{ll}
|U_i|C_{l,i} & \mbox{if} \ C_{l,i} \in \bbN,
\\
|V_j|D_{l,j} & \mbox{if} \ D_{l,j} \in \bbN.
\end{array}
\right.
\end{eqnarray*}
Now the total number of edges between $U_i$ and $V_j$ 
must be $\sum_{1\leq l \leq \ell} K_l$,
which must be equal to $|U_i|\times |V_j|$
since $G$ is a complete bipartite graph.

However, from the inequality
\begin{eqnarray*}
|U_i|, \ \ |V_j|& \geq & \floor{C}\cdot\vone + \floor{D}\cdot\vone + 1,
\end{eqnarray*}
a straightforward calculation shows that 
$K$ is strictly less than $|U_i|\times|V_j|$, a contradiction.
Therefore, for every $i \in \{1,\ldots,m\}$ and $j \in \{1,\ldots,n\}$,
there exists $l \in \{1,\ldots,\ell\}$ such that
both $C_{l,i}, D_{l,j} \in {\ugeq\bbN}$.
This completes the proof of our lemma.
\end{proof}

We say that a pair of matrices $(C,D) \in \bbB^{\ell\times m}\times \bbB^{\ell\times n}$
is an {\em easy pair of matrices}, if
for every $i \in \{1,\ldots,m\}$ and $j \in \{1,\ldots,n\}$
there exists $l \in \{1,\ldots,\ell\}$ such that
both $C_{l,i},D_{l,j} \in {\ugeq\bbN}$.

Lemma~\ref{l: easy matrix vd-biregular} says that
if $(C,D)$ is an easy pair of matrices,
then the formula $\bireg_{C,D}(\vX,\vY)$ as defined in Subsection~\ref{ss: formula l-type biregular}
is sufficient as the required formula $\biregc_{C,D}(\vX,\vY)$ in Theorem~\ref{t: main biregular}.

\begin{lemma}
\label{l: easy matrix vd-biregular}
Let $(C,D)$ be an easy pair of matrices,
where $C \in \bbB^{\ell\times m}$ and $D \in \bbB^{\ell\times n}$.
Then the following holds.
There exists a $(C,D)$-biregular-complete graph of size $(\vM,\vN)$
if and only if
$\bireg_{C,D}(\vM,\vN)$ holds.
\end{lemma}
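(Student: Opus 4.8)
The plan is to prove the two directions separately, with essentially all the content sitting in the ``if'' direction and the rest being bookkeeping. For the ``only if'' direction I would simply observe that a $(C,D)$-biregular-complete graph is, by definition, a $(C,D)$-biregular graph; hence by the (already established) correctness of $\bireg_{C,D}$ in Theorem~\ref{t: l-type biregular}, the statement $\bireg_{C,D}(\vM,\vN)$ holds. This needs no further argument.

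For the ``if'' direction, suppose $\bireg_{C,D}(\vM,\vN)$ holds. By Theorem~\ref{t: l-type biregular} there is a $(C,D)$-biregular graph $G=(U,V,E_1,\ldots,E_\ell)$ of size $(\vM,\vN)$, with witnessing partitions $U=U_1\cup\cdots\cup U_m$ and $V=V_1\cup\cdots\cup V_n$. This $G$ need not be complete: some pairs $(u,v)\in U\times V$ may be joined by no edge. The idea is to fill in every such missing pair with a single edge whose colour is chosen so that the biregularity with the \emph{same} witnessing partitions is preserved, and it is precisely the easiness of $(C,D)$ that makes this possible.

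Concretely, for each missing pair $(u,v)$ with $u\in U_i$ and $v\in V_j$, the easiness assumption supplies a row index $l=l(i,j)$ with both $C_{l,i},D_{l,j}\in\ugeqN$; I would then add $(u,v)$ to $E_l$. Performing this for every missing pair yields a graph $G'$ in which every element of $U\times V$ carries exactly one edge, so $G'$ is complete and the classes $E_1,\ldots,E_\ell$ remain pairwise disjoint. It remains to check that $G'$ is still $(C,D)$-biregular with the partitions $U_1\cup\cdots\cup U_m$ and $V_1\cup\cdots\cup V_n$. The key point is that new edges are only ever placed in colour classes $E_l$ whose corresponding degree entries lie in $\ugeqN$: for a vertex $u\in U_i$ the constraint in colour $l$ is $\deg_{E_l}(u)=C_{l,i}$ with $C_{l,i}\in\ugeqN$, meaning $\deg_{E_l}(u)\geq\floor{C_{l,i}}$, and since adding edges only increases degrees this lower bound continues to hold; the same holds symmetrically for $v\in V_j$. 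Colour classes carrying an exact ($\bbN$-valued) degree requirement are never touched, so those degrees are left unchanged. Hence $G'$ is a $(C,D)$-biregular-complete graph of size $(\vM,\vN)$, as required.

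I do not expect a genuine obstacle here: the heavy lifting, namely the existence of the biregular graph and the construction of $\bireg_{C,D}$, has already been carried out in Theorem~\ref{t: l-type biregular}, and easiness is exactly the hypothesis that bridges the gap between biregular and complete-biregular. The only step requiring care is the verification in the last paragraph, where one must use that the $\ugeq$-semantics of biregularity (the convention that ``$\deg_{E_l}(u)=C_{l,i}$'' abbreviates $\deg_{E_l}(u)\geq\floor{C_{l,i}}$ when $C_{l,i}\in\ugeqN$) is precisely what renders edge-addition harmless, together with the fact that the exact-degree colours are genuinely never modified.
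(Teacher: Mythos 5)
Your proposal is correct and follows essentially the same route as the paper: both directions are handled identically, with the ``if'' direction obtaining a $(C,D)$-biregular graph from Theorem~\ref{t: l-type biregular} and completing it by inserting each missing edge into a colour class $E_l$ supplied by the easiness of $(C,D)$, which preserves biregularity because only $\ugeqN$-constrained degrees are increased. Your added verification that exact-degree colours are never touched is a slightly more explicit version of the paper's closing remark.
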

\begin{proof}
The ``only if'' direction follows directly from Theorem~\ref{t: l-type biregular}. 
Now we prove the ``if'' direction.
Suppose $\bireg_{C,D}(\vM,\vN)$ holds.
By Theorem~\ref{t: l-type biregular},
there exists a $(C,D)$-biregular graph $G=(U,V,E_1,\ldots,E_{\ell})$ 
of size $(\vM,\vN)$.
This graph $G$ is not necessarily complete.
So suppose $U = U_1 \cup \cdots \cup U_m$ and $V = V_1\cup \cdots \cup V_n$
is the witness of the $(C,D)$-biregularity.
If $G$ is not complete,
then we perform the following.
For every $u\in U$ and $v \in V$ such that $(u,v)\notin E_1\cup\cdots \cup E_{\ell}$,
we do the following.
\begin{itemize}\itemsep=0pt
\item
Let $u \in U_i$ and $v \in V_j$.
\item
Pick an index $l \in \{1,\ldots,k\}$
such that $C_{l,i},D_{l,j} \in {\ugeq\bbN}$.
\\
(Such an index $l$ exists since $(C,D)$ is an easy pair.)
\item
Connect $u$ and $v$ with an $E_l$-edge.
\end{itemize}
The resulting graph is now complete and still $(C,D)$-biregular.
This completes our proof of Lemma~\ref{l: easy matrix vd-biregular}.
\end{proof}

If $(C,D)$ is not an easy pair, then
by Lemma~\ref{l: lower bound for complete bipartite},
the values in the entries (in $\vX$ and $\vY$)
corresponding to the columns in $C$ and $D$ that make them not an easy pair
must be bounded.
These values can be encoded as partial graphs as described in the previous section.
This completes our proof of Theorem~\ref{t: main biregular}.


\section{Proof of Theorem~\ref{t: main regular-digraph}}
\label{s: proof2}

The proof is by observing that 
the existence of a $(C,D)$-directed-regular graph
of size $\vN$ 
is equivalent to 
the existence of a $(C,D)$-biregular graph of size $(\vN,\vN)$.
We explain it more precisely below.
\begin{itemize}\itemsep=0pt
\item

Suppose $G=(V,E_1,\ldots,E_{\ell})$ is a $(C,D)$-directed-regular
graph of size $\vN$.

Then, for every vertex $v \in V$, we ``split'' it into
two vertices $u$ and $w$ such that
$u$ is only adjacent to the incoming edges of $v$
and 
$w$ to the outgoing edges of $v$.
See the illustration below.
The left-hand side shows the vertex $v$ before the splitting,
and the right-hand side shows the vertices $u$ and $w$ after the splitting.

\noindent
\begin{picture}(200,30)(-110,-15)

\put(-90,10){\vector(3,-2){14}}
\put(-90,0){\vector(1,0){14}}
\put(-90,-10){\vector(3,2){14}}

\put(-75,2){$v$}
\put(-75,0){\circle*{1}}

\put(-74,0.67){\vector(3,2){14}}
\put(-74,0){\vector(1,0){14}}
\put(-74,-0.67){\vector(3,-2){14}}

\put(-47,0){$\Rightarrow$}

\put(-35,10){\vector(3,-2){14}}
\put(-35,0){\vector(1,0){14}}
\put(-35,-10){\vector(3,2){14}}

\put(-20,2){$u$}
\put(-20,0){\circle*{1}}

\put(-10,2){$w$}
\put(-10,0){\circle*{1}}

\put(-9,0.67){\vector(3,2){14}}
\put(-9,0){\vector(1,0){14}}
\put(-9,-0.67){\vector(3,-2){14}}

\end{picture}

Let $U$ be the set of vertices $u$'s
and $W$ the set of vertices $w$'s
after splitting all the vertices in $V$.
Ignoring the orientation of the edges,
the resulting graph is a bipartite graph 
with vertices $U \cup W$
and it is a $(C,D)$-biregular graph of size $(\vN,\vN)$.

\item
Suppose $G=(U,W,E_1,\ldots,E_{\ell})$ is a $(C,D)$-biregular
graph of size $(\vN,\vN)$.
Let $U = U_1\cup \cdots \cup U_m$
and $W = W_1 \cup\cdots \cup W_m$
be the witness of the $(C,D)$-biregularity.
We denote by $U_i = \{u_{i,1},\ldots,u_{i,K_i}\}$
and $W_i = \{w_{i,1},\ldots,w_{i,K_i}\}$
for each $i=1,\ldots,m$.

Now we put the orientation on all the edges from $U$ to $W$.
Then we merge every two vertices $u_{i,j}$ and $w_{i,j}$
into one vertex $v_{i,j}$.
This way, we obtain a $(C,D)$-directed-regular graph $G=(V,E_1,\ldots,E_{\ell})$
with $V = V_1 \cup \cdots \cup V_m$ be the witness of the $(C,D)$-regularity
where $V_{i}= \{v_{i,1},\ldots,v_{i,K_i}\}$ for each $i=1,\ldots,m$.

However, with such merging it is possible that
there is a self-loop $(v,v)$ in $G$
or a pair of edges $(v,v'),(v',v) \in E_1 \cup \cdots \cup E_{\ell}$.
We can get rid of the self-loop $(v,v)$ 
without violating the $(C,D)$-directed-regularity as follows.
The trick is similar to the one used before.
Assuming that the size of each $|V_1|,\ldots,|V_{m}|$ is big enough
and there are enough edges in each $E_1,\ldots,E_{\ell}$,
there is an edge $(v',v'')$ of the same type
such that both $v',v''$ are not adjacent to $v$.
Deleting the edge $(v,v)$ and $(v',v'')$,
and adding the edges $(v',v)$ and $(v,v'')$,
we obtain a $(C,D)$-directed-regular graph with one less self-loop.
We do this repeatedly until there is no more self-loop. 
See the illustration below.

\noindent
\begin{picture}(200,42)(-115,-27)

\put(-76,-3){$v$}
\put(-75,-5){\circle*{1}}
\put(-76,7){$\in E_i$}
\qbezier(-76,-4)(-85,5)(-75,5)
\qbezier(-74,-4)(-65,5)(-75,5)
\put(-74,-4){\vector(-1,-1){0}}

\put(-93,-18){$v'$}
\put(-90,-20){\circle*{1}}
\put(-76,-18){$\in E_i$}
\put(-89,-20){\vector(1,0){28}}
\put(-59,-18){$v''$}
\put(-60,-20){\circle*{1}}

\put(-47,-5){$\Rightarrow$}

\put(-16,-3){$v$}
\put(-15,-5){\circle*{1}}

\put(-33,-18){$v'$}
\put(-30,-20){\circle*{1}}
\put(-29.5,-19.5){\vector(1,1){14}}
\put(-21,-14){$\in E_i$}

\put(1,-18){$v''$}
\put(0,-20){\circle*{1}}
\put(-14.5,-5.5){\vector(1,-1){14}}
\put(-9,-11){$\in E_i$}

\end{picture}

Similarly, we can get rid of 
a pair of edges $(v,v'),(v',v) \in E_1 \cup \cdots \cup E_{\ell}$
without violating the $(C,D)$-directed-regularity
as follows.
Again, the trick is similar to the one used before.
Assuming that the size of each $|V_1|,\ldots,|V_{m}|$ is big enough
and there are enough edges in each $E_1,\ldots,E_{\ell}$,
there is an edge $(w,w')$ of the same type
such that both $w,w'$ are not adjacent to either $v$ or $v'$.
Deleting the edge $(v,v')$ and $(w,w')$,
and adding the edges $(v,w')$ and $(w,v')$,
we obtain a $(C,D)$-directed-regular graph with one less parallel edges.
We do this repeatedly until there are no more parallel edges. 
See the illustration below.

\noindent
\begin{picture}(200,50)(-115,-25)

\put(-93,12){$v$}
\put(-90,10){\circle*{1}}
\put(-80,18){$\in E_1\cup\cdots\cup E_{\ell}$}
\qbezier(-89,11)(-75,22)(-61,11)
\put(-89,11){\vector(-3,-2){0}}
\qbezier(-89,9.5)(-75,-1.5)(-61,9.5)
\put(-61,9.5){\vector(3,2){0}}
\put(-76,0){$\in E_i$}
\put(-59,12){$v'$}
\put(-60,10){\circle*{1}}

\put(-93,-18){$w$}
\put(-90,-20){\circle*{1}}
\put(-89,-20){\vector(1,0){28}}
\put(-76,-18){$\in E_i$}
\put(-59,-18){$w'$}
\put(-60,-20){\circle*{1}}

\put(-47,-5){$\Rightarrow$}

\put(-32,12){$v'$}
\put(-30,10){\circle*{1}}
\qbezier(-29,11)(-15,22)(-1,11)
\put(-29,11){\vector(-3,-2){0}}
\put(-29,9){\vector(1,-1){28}}
\put(-6,2){$\in E_i$}

\put(1,12){$v$}
\put(0,10){\circle*{1}}
\put(-20,18){$\in E_1\cup\cdots\cup E_{\ell}$}

\put(1,-18){$w$}
\put(0,-20){\circle*{1}}

\put(-10,-10){$\in E_i$}

\put(-33,-18){$w'$}
\put(-30,-20){\circle*{1}}
\put(-29,-19){\vector(1,1){28}}

\end{picture}

If some sets $V_i$ or $|E_i|$ are of a fixed size,
those can be encoded in a partial graph
in the same manner discussed in Subsection~\ref{ss: l-type ugeq biregular}.

\end{itemize}
We omit the technical details since we essentially 
run through the same arguments used in the previous section.

\section{Concluding remarks}
\label{s: conclusion}

In this paper we have shown that the spectra of $\Ctwo$ formulae
are semilinear sets.
The proof is by constructing the Presburger formulae
that express precisely the spectra.
As far as our knowledge is concerned,
the logic $\Ctwo$ is the first logic whose spectra is closed under complement
without any restriction on the vocabulary nor in the interpretation.

Furthermore, from our proof we can easily deduce a few easy corollaries.
The first is that the family of models of a $\QMLC$ formula
can be viewed as a collection of biregular graphs and regular digaphs
in the following sense.
Let $\phi$ be a $\QMLC$ formula 
and $\cR =\{R_1,\ldots,R_{\ell},\rev{R}_1,\ldots,\rev{R}_{\ell}\}$ 
be the set of binary relations used in $\phi$ and
that $\rev{R}_i$ is the inversed relation of $R_i$.
Let $K$ be the integer such that for all 
subformulae $\Diamond_R^l \psi$ in $\phi$,
we have $l \leq K$.

Let $\cT_{\phi}$ is the set of all types in $\phi$.
For a model $\stra \models \phi$,
we partition $A = \bigcup_{T \in \cT_{\phi}} A_T$,
where $A_T = \{a \in A \mid a \ \mbox{is of type} \ T\}$.
That the model $\stra$ is a collection of regular graphs
is in the following sense.
Recall the matrices $D_T$, $\rev{D}_T$,
$D_{S\to T}$ and $D_{T \to S}$
as defined in Section~\ref{s: proof}.

For a type $T$,
by restricting the relations $R_1,\ldots,R_{\ell}$
on the elements in $A_T$,
we obtain a $(D_T,\rev{D}_T)$-regular digraph 
$G_T = (A_T, E_1,\ldots,E_{\ell})$, where
for each $E_i$,
\begin{eqnarray*}
E_i & = & R_{i}\cap (A_T\times A_T)
\end{eqnarray*}
For two different types $S,T$,
by restricting the relations 
on $A_S \times A_T$,
we obtain a $(D_{S\to T},\rev{D}_{S\to T})$-biregular graph 
$G_{S,T} = (A_S,A_T, E_1,\ldots,E_{\ell},E_1',\ldots,E_{\ell}')$, where
each $E_i,E_i'$ are
\begin{eqnarray*}
E_i  =  R_{i}\cap (A_S\times A_T)
& \qquad\mbox{and}\qquad &
E_i'  =  \rev{R}_{i}\cap (A_S\times A_T)
\end{eqnarray*}

Theorem~\ref{t: presburger for c2} can be further generalised as follows.
Let $\cP = (P_1,\ldots,P_l)$, where $P_1,\ P_2,\ \ldots,\ P_l$ are unary
predicates.
Define the {\em image} of a structure $\stra$
as $\parikh_{\cP}(\stra)=(|P_1^{\stra}|,\ldots,|P_l^{\stra}|)$.
We also define the image of a formula $\varphi$ with predicates from $\cP$
as $\parikh_{\cP}(\varphi) = \{\parikh_{\cP}(\stra) | \stra \models \varphi\}$. 
It must be noted here that $P_1^{\stra},\ldots,P_{l}^{\stra}$
are not necessarily disjoint, and that 
they may not cover the whole domain $A$.
For this reason, 
the notion of image is more general than the notion of many-sorted spectrum
which requires the unary predicates to partition the whole domain.
With a slight adjustment in our proof in Section~\ref{s: proof},
we can obtain the following two corollaries.

\begin{corollary}
\label{c: parikh semilinear}
Let $\phi \in \Ctwo$ and
$\cP = (P_1,\ldots,P_l)$, where $P_1,\ldots,P_{l}$ 
be a set of unary predicates in $\phi$.
The set $\{\parikh_{\cP}(\stra) \mid \stra \models \phi\}$ is semilinear.
\end{corollary}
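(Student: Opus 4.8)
The plan is to recycle, essentially verbatim, the Presburger construction of Section~\ref{s: proof}, altering only how the free ``output'' variables aggregate the counting variables $X_{(T,f)}$. First I would apply Theorem~\ref{t: c2 and qmlc} to replace the given sentence $\phi \in \Ctwo$ by an equivalent $\QMLC$ sentence over complete structures. The point to check here is that the reduction of Theorem~\ref{t: c2 and qmlc} only manipulates the binary relations---introducing fresh symbols for Boolean combinations and for inverses in order to enforce (N1)--(N4)---while leaving every unary predicate $P_1,\ldots,P_l$ and its interpretation intact. Consequently the model correspondence preserves each cardinality $|P_i^{\stra}|$, and it suffices to establish semilinearity of the image for $\QMLC$ sentences.

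Next, recall that in Section~\ref{s: proof} the variable $X_{(T,f)}$ counts exactly the elements whose type is $T$ and whose outgoing-edge profile is given by $f$. Since each $P_i$ is an atomic $\MLC$ formula occurring in $\phi$, it lies in $\cM_{\phi}$, so for every type $T \in \cT_{\phi}$ precisely one of $P_i \in T$ or $\neg P_i \in T$ holds; moreover an element $a$ satisfies $P_i$ iff $P_i \in \tp_{\stra}(a)$. Hence in any model the cardinality of $P_i$ decomposes as $|P_i^{\stra}| = \sum_{(T,f)\ \mbox{\scriptsize s.t.}\ P_i \in T} X_{(T,f)}$. The one change I would make to $\preb_{\phi}$ is therefore to discard the single size equation $x = \sum_{(T,f)} X_{(T,f)}$ in favour of $l$ equations, one per predicate, recording these partial sums in fresh free variables $Y_1,\ldots,Y_l$, while keeping $\prebatom_{\phi}(\vX)$ and $\const(\vX)$ unchanged.

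Concretely, I would take the Presburger formula
\begin{eqnarray*}
\Theta_{\phi}(Y_1,\ldots,Y_l) & := &
\exists X_{(T_1,f_1)} \cdots \exists X_{(T_n,f_m)}
\Bigg( \bigwedge_{1 \leq i \leq l} \ Y_i = \sum_{(T,f)\ \mbox{\scriptsize s.t.}\ P_i \in T} X_{(T,f)} \Bigg)
\\
& & \qquad \wedge \ \prebatom_{\phi}(\vX) \ \wedge \ \const(\vX),
\end{eqnarray*}
and claim that $\{(Y_1,\ldots,Y_l) \mid \Theta_{\phi}(Y_1,\ldots,Y_l) \ \mbox{holds}\}$ is exactly $\{\parikh_{\cP}(\stra) \mid \stra \models \phi\}$. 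The verification is inherited, almost word for word, from the Claim proved in Section~\ref{s: proof}. In one direction a model $\stra$ supplies witnesses $M_{(T,f)}$ for the $X_{(T,f)}$ exactly as there, and $|P_i^{\stra}| = \sum_{(T,f)\ \mbox{\scriptsize s.t.}\ P_i \in T} M_{(T,f)}$ produces the required value of $Y_i$. In the other direction any assignment satisfying $\prebatom_{\phi}(\vX) \wedge \const(\vX)$ is assembled, via Theorems~\ref{t: main biregular} and~\ref{t: main regular-digraph}, into a model $G \models \phi$ in which $|P_i^{G}|$ equals the prescribed sum, hence $Y_i$. Note that the $P_i$ need not be disjoint nor exhaustive: overlaps are handled correctly since an element of a type $T$ containing several $P_i$ simultaneously contributes to each corresponding $Y_i$.

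Since $\Theta_{\phi}$ is a genuine Presburger formula, the Ginsburg--Spanier theorem immediately yields semilinearity of its solution set, proving the corollary. I expect the sole non-mechanical step to be the bookkeeping of the first paragraph, namely confirming that the $\Ctwo \to \QMLC$ reduction can be arranged to carry the designated unary predicates along unchanged; everything downstream is the routine replacement of the single size variable $x$ by the vector $(Y_1,\ldots,Y_l)$ of predicate counts.
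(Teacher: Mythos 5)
Your proposal is correct and takes essentially the same approach as the paper: the paper offers no detailed proof of this corollary, saying only that it follows by a ``slight adjustment'' of the construction in Section~\ref{s: proof}, and the adjustment you spell out---replacing the single size equation by the $l$ partial-sum equations $Y_i=\sum_{(T,f)\ \mathrm{s.t.}\ P_i\in T}X_{(T,f)}$, with everything else unchanged---is exactly the intended one. Your observation that the $\Ctwo\to\QMLC$ reduction of Theorem~\ref{t: c2 and qmlc} leaves the unary predicates' interpretations (and hence their cardinalities) intact is the right point to verify.
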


\begin{corollary}
\label{c: parikh decidable}
Let $\cP = (P_1,\ldots,P_l)$.
The following problem is decidable.
Given a $\Ctwo$ formula $\phi$ and a Presburger formula $\Psi(x_1,\ldots,x_l)$,
determine whether there exists a structure $\stra\models \phi$
such that $\Psi(\parikh_{\cP}(\stra))$ holds.
\end{corollary}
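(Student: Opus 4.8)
The plan is to reduce the problem to deciding the truth of a single sentence of Presburger arithmetic. The two ingredients are, first, an \emph{effective} Presburger description of the image $\parikh_{\cP}(\phi)$, and second, the classical decidability of Presburger arithmetic.

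First I would observe that the construction in Section~\ref{s: proof}, together with the adaptation underlying Corollary~\ref{c: parikh semilinear}, is entirely effective: from $\phi$ one computes the finite set of types $\cT_{\phi}$, the finite set of consistent functions $\cF$, and from these the matrices $D_T,\rev{D}_T,D_{S\to T},\rev{D}_{S\to T}$, while the formulas $\diregc$ and $\biregc$ are produced effectively by Theorems~\ref{t: main biregular} and~\ref{t: main regular-digraph}. Hence one obtains effectively a Presburger formula
\[
\Theta_{\phi}(x_1,\ldots,x_l)\ :=\ \exists\vX\ \Bigg(\ \bigwedge_{1\leq i\leq l} x_i=\sum_{(T,f)\,:\,P_i\in T} X_{(T,f)}\ \wedge\ \prebatom_{\phi}(\vX)\ \wedge\ \const(\vX)\Bigg),
\]
whose solution set is exactly $\parikh_{\cP}(\phi)=\{\parikh_{\cP}(\stra)\mid \stra\models\phi\}$. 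Here the $i$th coordinate counts the elements whose type contains $P_i$; since $P_i\in T$ if and only if $P_i(a)$ holds for elements $a$ of type $T$, this number is precisely $|P_i^{\stra}|$, and the correctness of the equivalence is exactly the content of Corollary~\ref{c: parikh semilinear}.

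Next I would note that there exists $\stra\models\phi$ with $\Psi(\parikh_{\cP}(\stra))$ if and only if the image set of $\phi$ meets the solution set of $\Psi$, i.e.\ if and only if the Presburger sentence
\[
\exists x_1\cdots\exists x_l\ \big(\Theta_{\phi}(x_1,\ldots,x_l)\ \wedge\ \Psi(x_1,\ldots,x_l)\big)
\]
is true. Since both $\Theta_{\phi}$ and $\Psi$ are Presburger formulas, their conjunction preceded by existential quantification is again a Presburger sentence, constructed effectively from the two given inputs.

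Finally, by the classical decidability of the first-order theory of $(\bbN,+,\leq,0,1)$, the truth of this sentence is decidable, which yields the decision procedure. The only point requiring genuine care is the first step: one must check that every object built in Section~\ref{s: proof} — the type set, the consistent functions, and the auxiliary biregular and regular-digraph formulas — is produced by an \emph{algorithm} and not merely shown to exist. Once this effectiveness is established, the reduction to Presburger satisfiability and the appeal to its decidability are routine.
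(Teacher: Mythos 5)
Your proposal is correct and follows exactly the route the paper intends: the paper derives this corollary from the effectiveness of the Presburger construction in Section~\ref{s: proof} (suitably adjusted to track the cardinalities $|P_i^{\stra}|$ as in Corollary~\ref{c: parikh semilinear}) together with the decidability of Presburger arithmetic. Your added emphasis that every object in the construction (types, consistent functions, the $\diregc$ and $\biregc$ formulas) is computable is precisely the point the paper leaves implicit.
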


There are still a few more questions that we would like to investigate for future work.
The first natural question is: how can be $\Ctwo$ extended while keeping decidability?
Using three variables ($\FOthree$) one can easily encode a grid; therefore,
the satisfiability problem is no longer decidable (and thus the image membership problem). 
However, we could extend $\Ctwo$ by giving access to a relation 
having a property which is undefinable in $\Ctwo$, such as transitivity. 
In particular, $\CtwoT$, that is, the logic $\Ctwo$ with access
to a total order on the universe, seems powerful: Petri net
reachability~\cite{Mayr81, Kosaraju82, Leroux09}
reduces to image membership for $\CtwoT$ formulae. 
We do not know whether a reduction exists in the other direction. 
Another possible extension is to add an equivalence relation to $\Ctwo$.

\section*{Acknowledgment}
We thank the anonymous referees for their comments and suggestions
which greatly improve our paper.
The second author also acknowledges the generous financial support of FWO,
under the scheme FWO Pegasus Marie Curie fellowship.

\bibliographystyle{plain}

\begin{thebibliography}{1}
\def\tit{\em}
\def\jou{\em}

\bibitem{Asser55}
G.~Asser.
\newblock Das Repr\"asentenproblem in Pr\"adikatenkalk\"ul der ersten Stufe mit Identit\"at.
\newblock {\jou Zeitschrift f\"ur mathematische Logik und Grundlagen der Mathematik},
1: 252--263 (1955).

\bibitem{DJMM12}
A.~Durand, N.~D.~Jones, J.~A.~Makowsky, M.~More.
\newblock Fifty years of the spectrum problem: survey and new results.
\newblock {\jou Bulletin of Symbolic Logic} 18(4): 505--553 (2012).

\bibitem{EF95}
H.-D.~Ebbinghaus, J.~Flum.
\newblock {\tit Finite model theory}.
\newblock Perspectives in Mathematical Logic, Springer 1995.


\bibitem{Fagin73}
R.~Fagin.
\newblock {\tit Contributions to the model theory of finite structures}.
\newblock Ph.D. thesis, University of California, Berkeley, 1973.

\bibitem{Fagin93}
R.~Fagin.
\newblock Finite model theory--a personal perspective.
\newblock {\jou Theoretical Computer Science} 116: 3--31 (1993).



\bibitem{FM04}
E.~Fischer, J.~A.~Makowsky.
\newblock On spectra of sentences of monadic second order logic with counting.
\newblock {\jou Journal of Symbolic Logic} 69(3):617--640 (2004). 


\bibitem{GS66}
S.~Ginsburg, E.~H.~Spanier. 
\newblock Semigroups, Presburger Formulas, and Languages.
\newblock {\jou Pacific Journal of Mathematics} 16:285--296 (1966).


\bibitem{Gr84}
\'E.~Grandjean.
\newblock The spectra of first-order sentences and computational complexity.
\newblock {\jou SIAM Journal on Computing} 13(2): 356--373 (1984).

\bibitem{Gr85}
\'E.~Grandjean.
\newblock Universal quantifiers and time complexity of random access machines.
\newblock {\jou Mathematical Systems Theory} 18(2): 171--187 (1985).

\bibitem{Gr90}
\'E.~Grandjean.
\newblock First-order spectra with one variable.
\newblock {\jou Journal of Computer and System Sciences} 40(2): 136--153 (1990).




\bibitem{Gradel01}
E.~Gr\"adel.
\newblock Why are Modal Logics so Robustly Decidable?
\newblock {\jou Current Trends in Theoretical Computer Science} 2001.

\bibitem{GKV97}
E.~Gr\"adel, P.~G.~Kolaitis, M.~Y.~Vardi.
\newblock On the Decision Problem for Two-Variable First-Order Logic.
\newblock {\jou The Bulletin of Symbolic Logic}, 3(1): 53--69 (1997).

\bibitem{GO99} 
E.~Gr\"adel, M.~Otto.
\newblock On Logics with Two Variables.
\newblock {\jou Theoretical Computer Science} 224(1-2): 73--113 (1999).

\bibitem{GOR97}
E.~Gr\"adel, M.~Otto, E.~Rosen.
\newblock Two-Variable Logic with Counting is Decidable.
\newblock In LICS 1997.



\bibitem{GM11}
M.~Grohe, J.~A.~Makowsky, Editors.
\newblock {\tit Model Theoretic Methods in Finite Combinatorics}.
\newblock Contemporary Mathematics 558, American Mathematical Society, 2011.

\bibitem{GS03}
Y.~Gurevich and S.~Shelah.
\newblock Spectra of monadic second-order formulas with one unary function.
\newblock In LICS 2003.


\bibitem{Hakimi62}
S.~Hakimi.
\newblock On realizability of a set of integers as degrees of the vertices of a linear graph. I.
\newblock {\jou Journal of the Soc. Indust. Appl. Math.} 10(3): 496--506 (1962).

\bibitem{Pratt-Hartmann05}
I.~Pratt-Hartmann.
\newblock Complexity of the Two-Variable Fragment with Counting Quantifiers.
\newblock {\jou Journal of Logic, Language and Information} 14(3): 369--395 (2005).

\bibitem{Hunter03}
A.~Hunter.
\newblock Spectrum hierarchies and subdiagonal functions.
\newblock In LICS 2003.



\bibitem{Immerman99}
N.~Immerman.
\newblock {\tit Descriptive complexity}. 
\newblock Graduate texts in computer science, Springer 1999.
 

\bibitem{JS74}
N.~Jones, A.~Selman.
\newblock Turing machines and the spectra of first-order formulas.
\newblock {\jou The Journal of Symbolic Logic}, 39: 139--150 (1974).
\newblock (Also in STOC 1972.)

\bibitem{Landau53}
H.~Landau.
\newblock On dominance relations and the structure of animal societies. III. 
The condition for a score structure.
\newblock {\jou Bulletin of Mathematical Biophysics} 15(2): 143--148 (1953).

\bibitem{Kosaraju82}
S.~Kosaraju.
\newblock Decability of reachability in vector addition systems.
\newblock In STOC 1982.


\bibitem{Leroux09}
J.~Leroux.
\newblock The General Vector Addition System Reachability Problem by Presburger Inductive Invariants.
\newblock In LICS 2009.



\bibitem{KMPT12}
E.~Kieronski, J.~Michaliszyn, I. Pratt-Hartmann, L. Tendera.
\newblock Two-Variable First-Order Logic with Equivalence Closure.
\newblock In LICS 2012

\bibitem{Libkin04}
L.~Libkin.
\newblock {\tit Elements of Finite Model Theory.}
\newblock Springer 2004.

\bibitem{Lynch82}
J.~Lynch. 
\newblock Complexity classes and theories of finite models.
\newblock {\em Mathematical Systems Theory} 15(2): 127--144 (1982).

\bibitem{LSW01}
C.~Lutz, U.~Sattler, F.~Wolter. 
\newblock Modal logic and the two-variable fragment.
\newblock In CSL 2001.

\bibitem{Mayr81}
E.~Mayr.
\newblock An algorithm for the general petri net reachability problem.
\newblock In STOC 1981.

\bibitem{Mortimer75}
M.~Mortimer.
\newblock On languages with two variables.
\newblock {\jou Mathematical Logic Quarterly}, 21(1): 135--140 (1975).

\bibitem{Otto97}
M. Otto.
\newblock {\tit Bounded Variable Logics and Counting -- A Study in Finite Models}
\newblock Lecture Notes in Logic volume 9, Springer 1997.

\bibitem{PST00}
L.~Pacholski, W.~Szwast, L.~Tendera.
\newblock Complexity Results for First-Order Two-Variable Logic with Counting.
\newblock {\jou SIAM Journal on Computing} 29(4): 1083--1117 (2000).


\bibitem{RS55}
D.~R\"odding, H.~Schwichtenberg.
\newblock Bemerkungen zum Spektralproblem.
\newblock {\jou Zeitschrift f\"ur mathematische Logik und Grundlagen der Mathematik},
18: 1--12 (1972).

\bibitem{Scholz52}
H.~Scholz.
\newblock Ein ungel\"ostes Problem in der symbolischen Logic.
\newblock {\jou The journal of Symbolic Logic}, 17: 160 (1952).

\bibitem{ST13}
W.~Szwast, L. Tendera.
\newblock $\FOtwo$ with one transitive relation is decidable.
\newblock In STACS 2013.

\bibitem{VSS05}
K.~N.~Verma, H.~Seidl, T.~Schwentick.
\newblock On the Complexity of Equational Horn Clauses.
\newblock In CADE 2005.

\bibitem{Vardi96}
M.~Vardi.
\newblock Why is Modal Logic So Robustly Decidable?
\newblock In {\jou Descriptive Complexity and Finite Models} 1996.

\bibitem{Woods81}
A.~Woods.
\newblock {\tit Some Problems in Logic and Number Theory and their Connections.}
\newblock Ph.D. thesis, University of Manchester, 1981.


\bibitem{Wrathall78}
C.~Wrathall.
\newblock Rudimentary Predicates and Relative Computation.
\newblock In {\jou SIAM Journal on Computing} 7(2): 194--209 (1978).

\end{thebibliography}

\end{document}